\documentclass[12pt,reqno]{amsart}
\setcounter{tocdepth}{2}
\usepackage{mathrsfs}
\usepackage{amssymb,amsthm,amsmath}
\usepackage[numbers,sort&compress]{natbib}
\usepackage{amssymb,amsmath}
\usepackage{amsfonts}
\usepackage{mathrsfs}
\usepackage{latexsym}
\usepackage{amssymb}
\usepackage{amsthm}
\usepackage{color}
\usepackage{pdfsync}
\usepackage{indentfirst}
\usepackage{graphics}
\usepackage{subfigure}
\usepackage{epsfig}
\usepackage{float}
\date{today}

\setlength{\topmargin}{0.24in}
\setlength{\textheight}{8.1in}
\setlength{\textwidth}{6.8in}
\setlength{\oddsidemargin}{-0.20in}
\setlength{\evensidemargin}{-0.20in}
\setlength{\unitlength}{0.8cm}
\parskip5pt
\parindent=1.5em

\usepackage{hyperref}
\hypersetup{hypertex=true,
	colorlinks=true,
	linkcolor=blue,
	anchorcolor=g,
	citecolor=red}

\usepackage{amsmath}
\usepackage{amsthm}
   
   \newtheorem{remark}{Remark}[section]

      		\newtheorem{proposition}{Proposition}[section]
   		    
            \newtheorem{corollary}{Corollary}[section]

\newcommand{\R}{\mathbb R}

\makeatletter

\newcommand{\Rmnum}[1]{\expandafter\@slowromancap\romannumeral #1@}
\makeatother

\newcommand{\beq}{\begin{equation}}
\newcommand{\eeq}{\end{equation}}
\newcommand{\ben}{\begin{eqnarray}}
\newcommand{\een}{\end{eqnarray}}
\newcommand{\beno}{\begin{eqnarray*}}
\newcommand{\eeno}{\end{eqnarray*}}

\numberwithin{equation}{section}

\begin{document}
	
\title[Stability of standing periodic waves in MTM]{Stability of standing periodic waves \\ in the massive Thirring model}

	\author{Shikun~Cui}
\address[Shikun~Cui]{School of Mathematical Sciences, Dalian University of Technology, Dalian, 116024,  China; Department of Mathematics and Statistics, McMaster University, Hamilton, Ontario L8S 4K1, Canada}
\email{cskmath@163.com}
\author{Dmitry~E.~Pelinovsky}
\address[Dmitry~E.~Pelinovsky]{Department of Mathematics and Statistics, McMaster University, Hamilton, Ontario L8S 4K1, Canada }
\email{dmpeli@math.mcmaster.ca}

\date{\today}
\maketitle

\begin{abstract}
We analyze the spectral stability of the standing periodic waves in the massive Thirring model in laboratory coordinates. Since solutions of the linearized MTM equation are related to the squared eigenfunctions of the linear Lax system, the spectral stability of the standing periodic waves can be studied by using their Lax spectrum. Standing periodic waves are classified based on eight eigenvalues which coincide with the endpoints of the spectral bands of the Lax spectrum. Combining analytical and numerical methods, we show that the standing periodic waves are spectrally stable if and only if the eight eigenvalues are located either on the imaginary axis or along the diagonals of the complex plane.
\end{abstract}

\vspace{0.25cm}

{\bf Keywords:} massive Thirring model, standing periodic waves, spectral stability, squared eigenfunctions, Lax spectrum.

\vspace{0.25cm}

{\it This article is written in memory of D. Kaup for his many contributions to studies of stability of nonlinear waves by using the squared eigenfunctions, including the pioneering work on the MTM system in \cite{KL_1}.}

\section{Introduction}

The Massive Thirring model (MTM) is a mathematical model in quantum field theory. This model was first proposed by Thirring in \cite{T1} and has been used as an integrable example of the nonlinear Dirac equation in the space of one dimension \cite{BC}. Integrability of the MTM was first established in \cite{Michailov} and then explored in \cite{KN,KM,O}. 

We use the following normalized form of the MTM in the laboratory coordinates:
\begin{equation}\label{ini_1}
	\left\{
	\begin{array}{lr}
		i(u_t+u_x)+v+|v|^2u=0, \\
		i(v_t-v_x)+u+|u|^2v=0,
	\end{array}
	\right.
\end{equation}
along with the initial condition $(u,v)|_{t=0}=(u_0, v_0)$. The complete integrability of the MTM is due to the existence of the following Lax pair of linear equations \cite{Michailov}:
\begin{equation}\label{lax_1}
	\left\{
	\begin{array}{lr}
		\psi_{x}=L(u,v,\lambda)\psi, \\
		\psi_{t}=M(u,v,\lambda)\psi,
	\end{array}
	\right.
\end{equation}
where
$$L=\frac{i}{4}(\lambda^2-\frac{1}{\lambda^2})\sigma_3
-\frac{i\lambda}{2}
\left(\begin{array}{cc}
	0       & \bar{v}  \\
	v & 0 \\
\end{array}\right)
+\frac{i}{2\lambda}\left(
\begin{array}{cc}
	0       &\bar{u}  \\
	u & 0 \\
\end{array}
\right)
+\frac{i}{4}(|u|^2-|v|^2)\sigma_3
$$
and
$$M=\frac{i}{4}(\lambda^2+\frac{1}{\lambda^2})\sigma_3
-\frac{i\lambda}{2}\left(
\begin{array}{cc}
	0       &\bar{v}  \\
	v & 0 \\
\end{array}
\right)-\frac{i}{2\lambda}
\left(\begin{array}{cc}
	0       & \bar{u}  \\
	u & 0 \\
\end{array}\right)
-\frac{i}{4}(|u|^2+|v|^2)\sigma_3.
$$
Here, $i$ is the imaginary unit with $i^2=-1$, the bar represents the complex conjugate, and $\sigma_3$ is the third Pauli matrix, $\sigma_3 = {\rm diag}(1,-1)$. 

The inverse scattering transform (IST) method for the linear system 
(\ref{lax_1}) was developed formally in \cite{V} and rigorously 
in \cite{Lee,PS_1,HLC1}. The IST method was used to solve the initial-value problem on the infinite line and to establish the long-time scattering properties near the soliton solutions. Solutions in the quarter-plane were also obtained with the unified transform method in \cite{X1}. The initial-value problem for the MTM system (\ref{ini_1}) has been alternatively studied by using the contraction mapping and energy methods in Sobolev space $H^s(\R)$ with $s > \frac{1}{2}$ (see review in \cite{P_1}). Solutions of the MTM system in the space $L^2(\R)$ of low regularity  were studied in \cite{C1,Huh,ST_1,Zhang}.

For physical applications in the quantum field theory and quantum optics, 
it is important to study stability of the standing and traveling waves in the time evolution of the MTM system (\ref{ini_1}). Every standing wave solution can be extended as 
the traveling wave solution due to the Lorentz symmetry
\begin{equation}
\label{MTM-Lorentz}
\left[ \begin{matrix}
u(x,t) \\
v(x,t)  
\end{matrix} \right]  \;\; \mapsto \;\; 
\left[ \begin{matrix}
\left( \frac{1 - c}{1 + c} \right)^{1/4} u\left( \frac{x + ct}{\sqrt{1-c^2}}, \frac{t + cx}{\sqrt{1-c^2}} \right) \\
\left( \frac{1 + c}{1 - c} \right)^{1/4} v\left( \frac{x + ct}{\sqrt{1-c^2}}, \frac{t + cx}{\sqrt{1-c^2}} \right)
\end{matrix}\right], 
\qquad c \in (-1,1),
\end{equation}
which exists because the MTM system (\ref{ini_1}) is relativistically invariant. In addition, every standing wave solution can be translated 
in space, time, and complex phase due to the translational and rotational symmetries 
\begin{equation}
\label{MTM-symm}
\left[ \begin{matrix}
u(x,t) \\
v(x,t)  
\end{matrix} \right]  \quad \mapsto \quad 
\left[ \begin{matrix}
u(x+x_0,t+t_0) e^{\mathrm{i} \theta_0} \\
v(x+x_0,t+t_0) e^{\mathrm{i} \theta_0}
\end{matrix}\right],
\qquad x_0, t_0, \theta_0 \in \mathbb{R}.
\end{equation}

The simplest standing wave solution of the MTM system is the Dirac soliton (also known as the gap soliton):
\begin{equation}\label{one_soliton_1}
\left\{
\begin{array}{lr}
u(x,t) = i \alpha~ {\rm sech}\left( \alpha x - i\frac{\gamma}{2}\right) e^{-i \beta t},\\
v(x,t)=-i \alpha~ {\rm sech}\left(\alpha x +i\frac{\gamma}{2}\right) e^{-i \beta t},
\end{array}
\right.
\end{equation}
where $\alpha := \sin \gamma$, $\beta := \cos \gamma$, and 
$\gamma\in(0,\pi)$ is a free parameter. Spectral stability 
of Dirac solitons was established with the completeness 
of squared eigenfunctions \cite{KL_1} and has been used 
to study spectral stability of solitary waves in non-integrable 
Dirac equations \cite{BPZ,BC1,KS02}. Orbital stability of Dirac solitons 
in Sobolev space $H^1(\R)$ was obtained in \cite{PS_1_1} by 
using the higher-order energy of the MTM system (\ref{ini_1}). Orbital stability of Dirac solitons in a weighted 
$L^2(\R)$ space was obtained in \cite{CPS_1} with the B\"{a}cklund--Darboux transformation. Transverse instability of Dirac solitons 
in the two-dimensional generalizations of the MTM system (\ref{ini_1}) 
was shown in \cite{PS_2}.

Multi-soliton solutions to the MTM system have also been constructed with different algebraic methods in \cite{BG,BG1,BG2,David,Feng1} both on the zero and constant nonzero backgrounds. More recently, the MTM system was studied for the existence of rogue waves given by the rational solutions which arise 
on the constant nonzero background due to its modulational instability 
\cite{Feng2,Dega,He,Ye}. It is important to combine the study of multi-soliton 
and multi-rogue-wave solutions on the nonzero background with the proper 
stability analysis of the nonzero background. This step was missing in most 
of the previous publications, where algebraic methods have been used. Based on several model examples involving the constant nonzero background \cite{BCD,BCG,DLS} and the standing periodic waves \cite{CP_PRE,CPW_1,PW_1} (see also reviews in \cite{DLS-book,SPP}), we know that the space--time localization of the rogue waves is related to the instability growth rate of the background. If the background is modulationally stable, numerical simulations do not show the occurrence of large-amplitude rogue waves \cite{MBH}. 

\vspace{0.25cm}

{\em The main motivation for our work is to give a complete spectral stability 
	analysis of the standing wave solutions to the MTM system (\ref{ini_1}) which include the constant nonzero solutions.}

\vspace{0.25cm}

The standing wave solutions of the MTM system (\ref{ini_1}) are written 
in the form:
\begin{equation}
\label{standing-wave}
u(x,t)=U(x)e^{-i\omega t}, \qquad v(x,t)=V(x)e^{-i\omega t},
\end{equation} 
where $\omega \in \mathbb{R}$ is the frequency parameter 
and $(U,V) \in \mathbb{C}^2$ is the wave profile. To include the class of 
Dirac solitons (\ref{one_soliton_1}), we will consider here 
the standing waves satisfying the reduction $V = \bar{U}$. 

Figure \ref{fig00} presents the existence diagram of the standing waves on the parameter plane $(b,\omega)$, where 
\begin{equation}
\label{Ham-spatial}
b := -\omega(|U|^2 + |V|^2) -|U|^2 |V|^2 - (\bar{U}V + \bar{V}U)
\end{equation} 
is the $x$-independent parameter that corresponds to the Hamiltonian of the spatial dynamical system for $(U,V)$. The existence results for the standing waves are summarized as follows:
\begin{itemize}
	\item Region $\rm \Rmnum{1}$ for $b \in (-\infty,0)$ contains exactly one family of standing waves with the mapping $x \mapsto \arg(U) = -\arg(V)$ being monotonically increasing. 
	
	\vspace{0.2cm}
	
	\item Region $\rm \Rmnum{2}$ bounded by $b = 0$, $\omega \in [-1,1]$ (black line), $b = (1-\omega)^2$, $\omega \in (-\infty,1]$ (red line), and $b = (1+\omega)^2$, $\omega \in (-\infty,-1]$ (blue line) contains exactly one family of standing waves with the mapping $x \mapsto \arg(U) = -\arg(V)$  being bounded and periodic.
	
		\vspace{0.2cm}
		 
	\item Region $\rm \Rmnum{3}$ for $b \in (0,(1+\omega)^2)$, $\omega \in (-\infty,-1)$ contains exactly two families of standing waves, both have the mapping $x \mapsto \arg(U) = -\arg(V)$ monotonically increasing. 
	
		\vspace{0.2cm}
		
	\item Region $\rm \Rmnum{4}$ contains no families of standing waves. 
\end{itemize}

For each family of the standing waves with the mapping $x \mapsto \arg(U) = -\arg(V)$ being monotonically increasing, there is a symmetrically reflected family with the mapping $x \mapsto \arg(U) = -\arg(V)$ being monotonically decreasing. Strictly speaking, such standing waves of the form (\ref{standing-wave}) are not periodic in $x$ even though the mapping $x \mapsto |U| = |V|$ is bounded and periodic. For notational convenience, we still refer to these solutions loosely as {\em the standing periodic waves}.

The constant solutions of the MTM system (\ref{ini_1}) occur on the boundary of the region $\rm \Rmnum{2}$, that is, for $b = 0$, $\omega \in [-1,1]$ (black line), $b = (1-\omega)^2$, $\omega \in (-\infty,1]$ (red line), and $b = (1+\omega)^2$, $\omega \in (-\infty,-1]$ (blue line). The constant solution is zero in the first case and nonzero in the other two cases.

\begin{figure}[h]
	\centering
	\includegraphics[width=3.5in,height=2.45in]{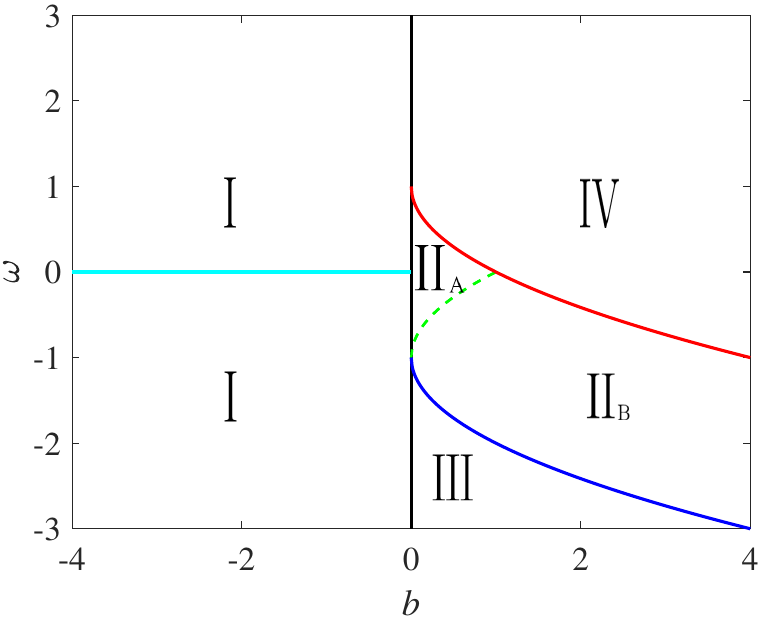}
	\caption{The existence diagram on the $(b,\omega)$ parameter plane.}
	\label{fig00}
\end{figure}

We define {\em the Lax spectrum} of the standing periodic waves as the set of admissible values of $\lambda$ in the linear system (\ref{lax_1}) for which $\psi(\cdot,t) \in L^{\infty}(\mathbb{R},\mathbb{C}^2)$ for every $t \in \R$. Accordingly, {\em the stability spectrum} is defined as the set of admissible values of $\Lambda$ in the linearized MTM system, see (\ref{line_11}) below, for which the eigenfunction is bounded on $\mathbb{R}$. By using the squared eigenfunction relation between solutions of the linear system (\ref{lax_1}) and solutions of the linearized MTM system (\ref{line_11}) found in \cite{KL_1}, we study the spectral stability of the standing periodic waves from their Lax spectrum. The spectral bands of the Lax spectrum which determine the spectral stability versus the spectral instability of the standing periodic waves are located between eight roots of the function
\begin{equation}
\label{polynomial-P}
P(\lambda) := \frac{1}{4} \left( \lambda^2 + \frac{1}{\lambda^2} - 2 \omega \right)^2 - b.
\end{equation}
Coefficients of $P(\lambda)$ are computed from parameters $(b,\omega)$ in  (\ref{standing-wave}) and (\ref{Ham-spatial}). We show that if $U = \bar{V}$, then the roots of $P(\lambda)$ satisfy the triple symmetry of reflections in the complex plane:
\begin{itemize}
	\item about the real axis $\R$, 
	\item about the imaginary axis $i \R$,
	\item about the unit circle $\mathbb{S}^1$.
\end{itemize}
By converting the linear system (\ref{lax_1}) to the matrix eigenvalue problem, see Appendix \ref{app_1}, we compute the Lax spectrum numerically by using the Fourier collocation method from \cite[Section 2.4]{Yjk}. 
The stability spectrum is obtained from the relation $\Lambda = \pm i \sqrt{P(\lambda)}$ due to the squared eigenfunction relation. 

Figure \ref{fig_total} displays the Lax spectrum (top panels) and the stability spectrum (bottom panels) for different families of the standing periodic waves. The location of the eight roots of $P(\lambda)$ is shown by red crosses. The dotted green line shows the unit circle $\mathbb{S}^1$. The location of roots of $P(\lambda)$, the Lax spectrum, and the stability spectrum are summarized as follows:

\begin{itemize}
	\item In region $\rm \Rmnum{1}$, the roots of $P(\lambda)$ form two quadruplets of complex eigenvalues which are symmetric about $\mathbb{S}^1$, see (a). The stability spectrum contains the unstable figure-eight band if $\omega \neq 0$, see (f). For $\omega = 0$, the bands connecting roots of $P(\lambda)$ are located along the main diagonals, see (b), and the stability spectrum is on $i \mathbb{R}$, see (g). 
	
			\vspace{0.2cm}
		
	\item In region $\rm \Rmnum{2}_A$, the roots of $P(\lambda)$ form two quadruplets of complex eigenvalues located on $\mathbb{S}^1$, see (c). 
	The stability spectrum contains the unstable segment on $\R$, see (h). 
	
			\vspace{0.2cm}
	
	\item In region $\rm \Rmnum{2}_B$, the roots of $P(\lambda)$ form a quadruplet of complex eigenvalues on $\mathbb{S}^1$ and two pairs of purely imaginary eigenvalues which are symmetric about $\mathbb{S}^1$, see (d). The stability spectrum contains the unstable segement on $\R$, see (i). 
	
				\vspace{0.2cm}
				
	\item In region $\rm \Rmnum{3}$, the roots of $P(\lambda)$ form four pairs of purely imaginary eigenvalues, which are symmetric about $\mathbb{S}^1$, see (e). The stability spectrum is on $i \R$, see (j).
\end{itemize}

\begin{figure}[h]
	\centering
	\subfigure[$\rm \Rmnum{1}$ with $\omega \neq 0$.
	]{\includegraphics[width=1.15in,height=1.15in]{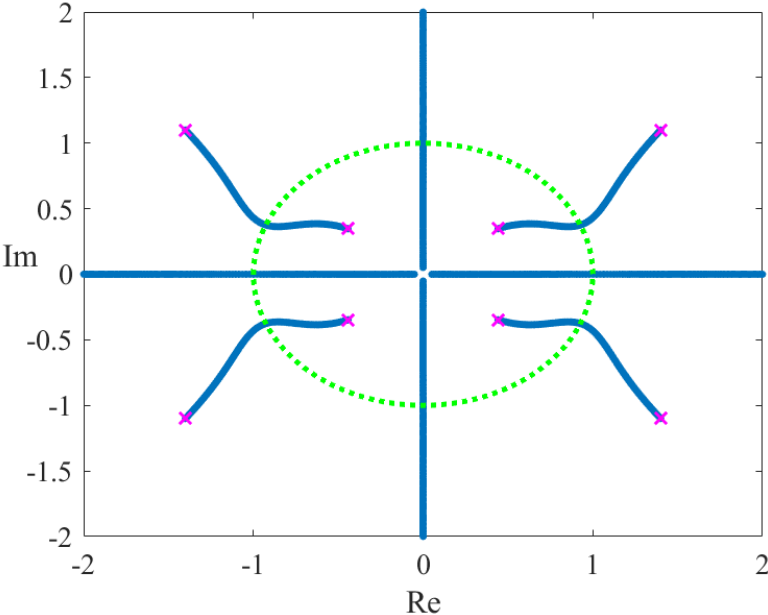}}
	\subfigure[$\rm \Rmnum{1}$ with $\omega = 0$. ]{\includegraphics[width=1.15in,height=1.15in]{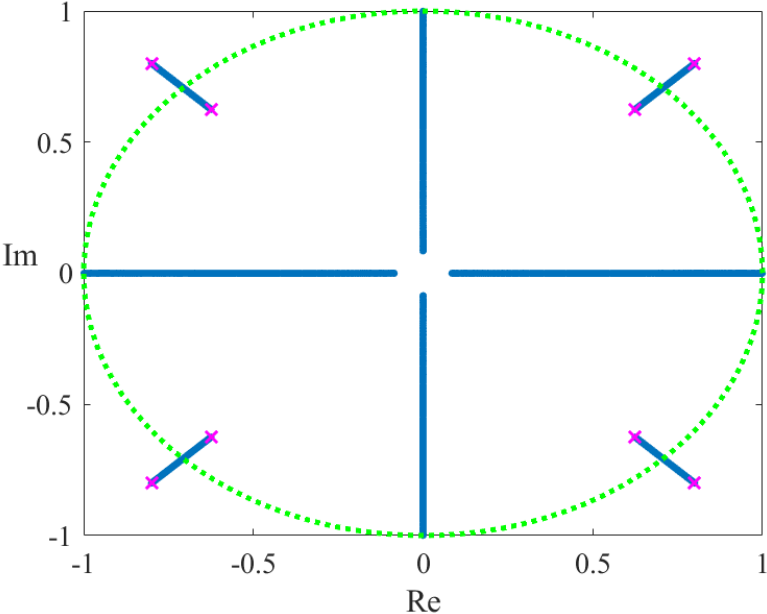}}
	\subfigure[$\rm \Rmnum{2}_A$. ]{\includegraphics[width=1.15in,height=1.15in]{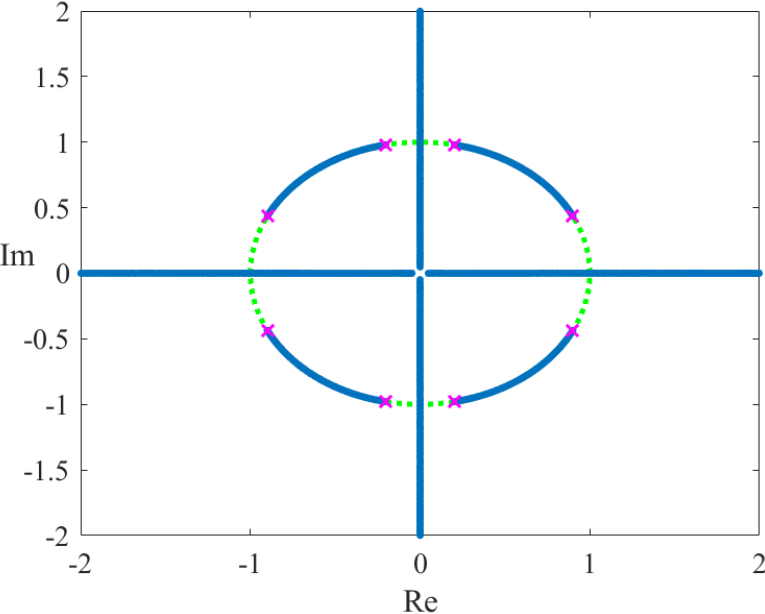}}
	\subfigure[$\rm \Rmnum{2}_B$. ]{\includegraphics[width=1.15in,height=1.15in]{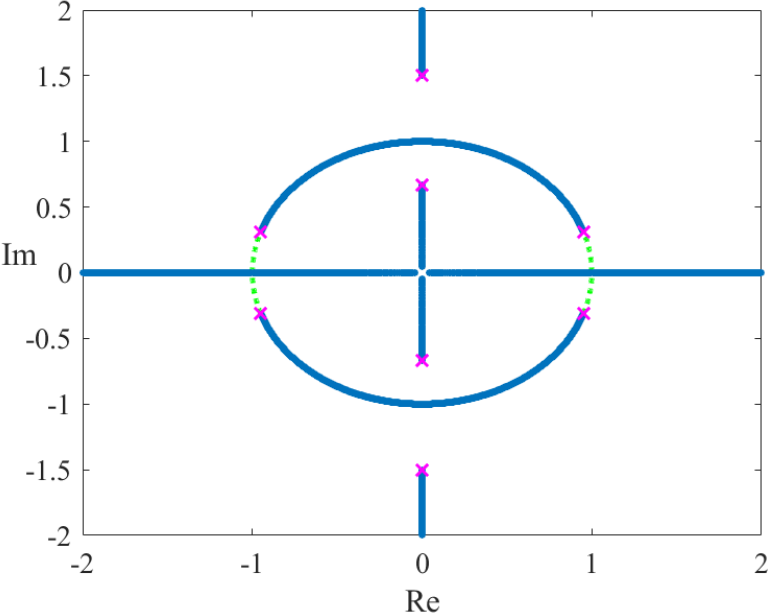}}
	\subfigure[$\rm \Rmnum{3}$. ]{\includegraphics[width=1.15in,height=1.15in]{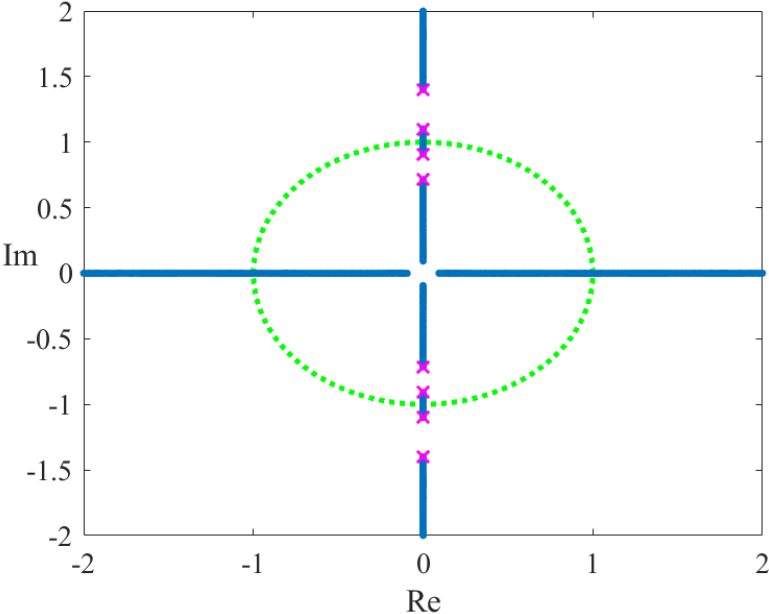}}
	\\
	\subfigure[$\rm \Rmnum{1}$ with $\omega \neq 0$.
	]{\includegraphics[width=1.15in,height=1.15in]{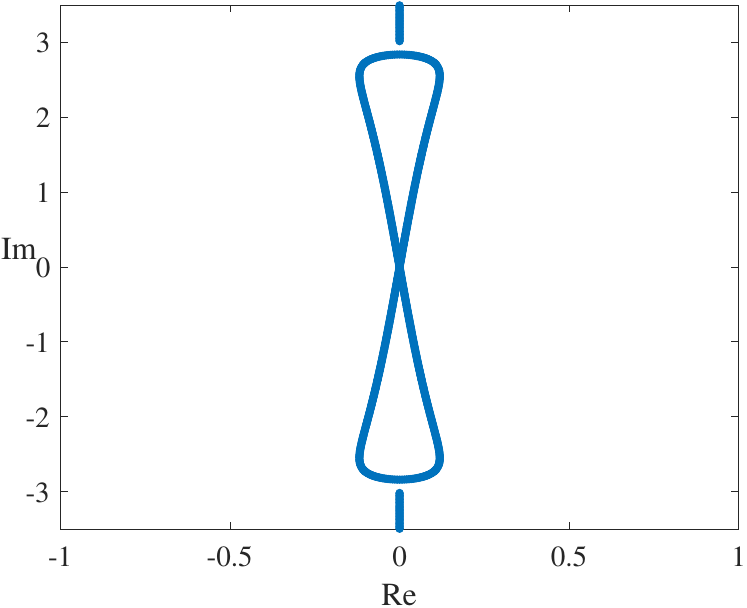}}
	\subfigure[$\rm \Rmnum{1}$ with $\omega = 0$.
	]{\includegraphics[width=1.15in,height=1.15in]{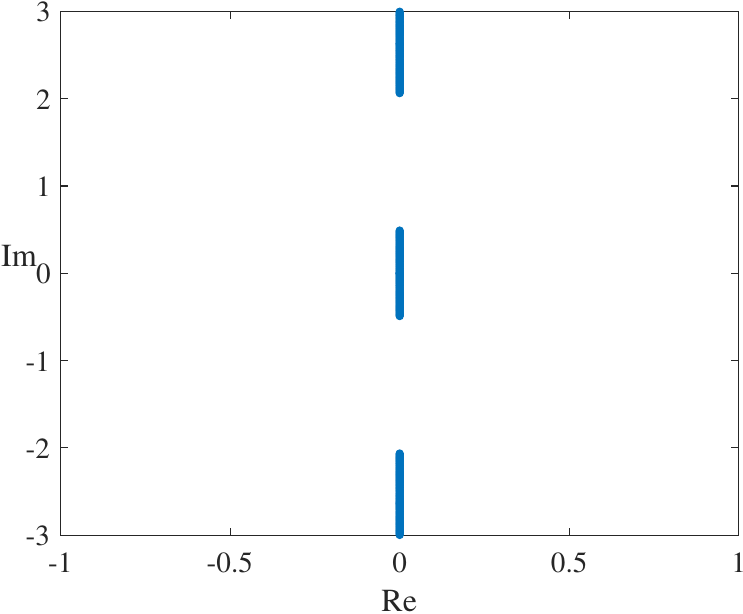}}
	\subfigure[$\rm \Rmnum{2}_A$.
	]{\includegraphics[width=1.15in,height=1.15in]{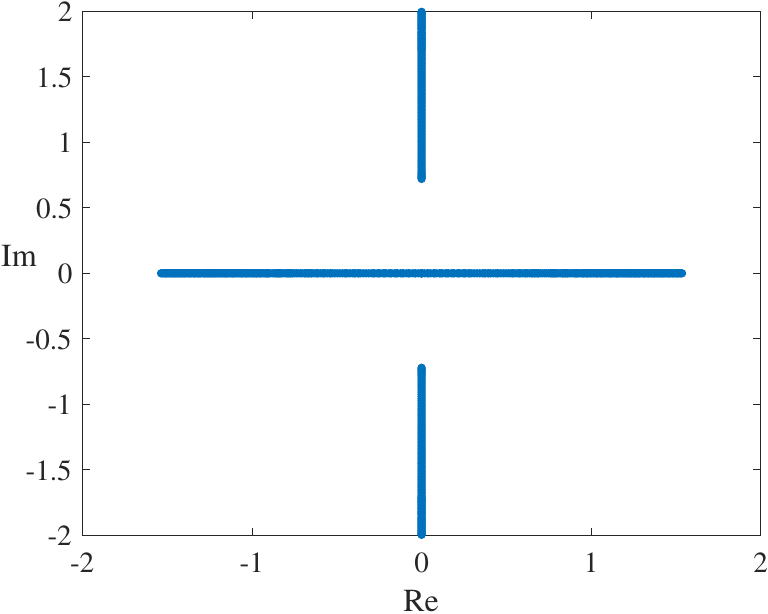}}
	\subfigure[$\rm \Rmnum{2}_B$.
	]{\includegraphics[width=1.15in,height=1.15in]{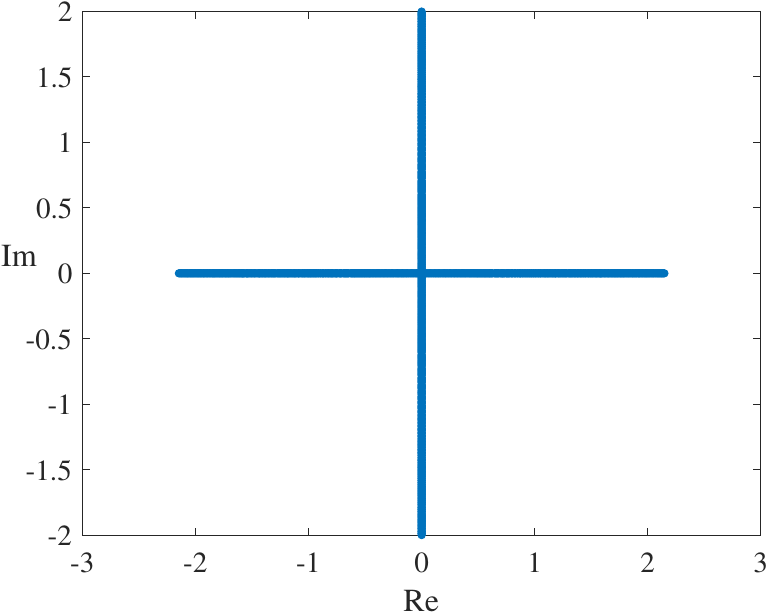}}
	\subfigure[$\rm \Rmnum{3}$.
	]{\includegraphics[width=1.15in,height=1.15in]{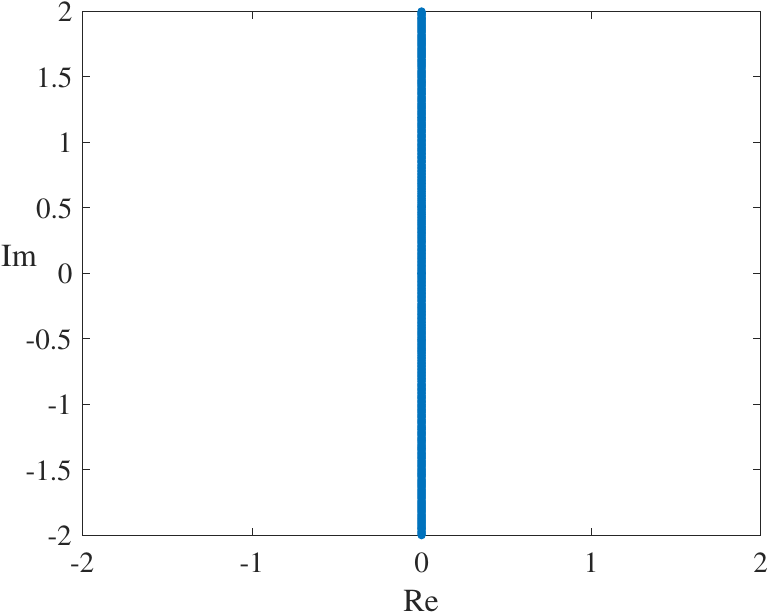}}
	\caption{Lax spectrum (top panels) and the stability spectrum (lower panels) for the standing periodic waves in regions $\rm \Rmnum{1}$, $\rm \Rmnum{2}_A$, $\rm \Rmnum{2}_B$, $\rm \Rmnum{3}$ of Figure \ref{fig00}.}
	\label{fig_total}
\end{figure}

We note that the Lax and stability spectra in region $\rm \Rmnum{3}$ are identical for both families of the standing periodic waves which coexist in region $\rm \Rmnum{3}$. Although Figure \ref{fig_total} only shows some numerical approximations for particular points in the $(b,\omega)$ plane, we have checked that the same results are true for every point in the corresponding regions of the parameter plane $(b,\omega)$. Based on these numerical approximations, we obtain the following stability criterion for the standing periodic waves of the MTM system (\ref{ini_1}), which is the main result of this work.

\vspace{0.25cm}

\centerline{\fbox{\parbox[cs]{\textwidth}{
The standing periodic waves in the form (\ref{standing-wave}) with $V = \bar{U}$ are spectrally stable in the MTM system (\ref{ini_1}) 
if and only if all roots of $P(\lambda)$ in (\ref{polynomial-P}) are located either 
on the imaginary axis $i \R$ or along the diagonals of the complex plane.
}}}

\vspace{0.25cm}

We also show the spectral stability of the constant nonzero background for $b = (1+\omega)^2$, $\omega \in (-\infty,-1]$ (blue line on Fig. \ref{fig00}) 
and the constant zero background for $b = 0$, $\omega \in [-1,1]$ (black line on Fig. \ref{fig00}). Moreover, the family of solitary waves 
on the constant nonzero background and the family of Dirac solitons (\ref{one_soliton_1}) on the constant zero background are also 
spectrally stable.  On the other hand, we show that the constant nonzero background 
for $b = (1-\omega)^2$, $\omega \in (-\infty,1]$ (red line on Fig. \ref{fig00}) is spectrally unstable. 

The study of spectral stability of standing and traveling wave solutions of integrable equations by using the squared eigenfunction method has started with the works of Deconinck and his coathors \cite{DMS,DS1,DU_1,SD,UD_1}. With the algebraic nonlinearization method of Cao and Geng \cite{cao1} which connects standing and traveling waves with the integrable finite-dimensional Hamiltonian systems, Chen and Pelinovsky found rogue wave solutions for many integrable equations 
\cite{CP-kdv,CP-nls,CP_1,CP_PRE} (see also \cite{CPW_2,PW_1}) in the cases when the wave background is modulationally unstable. The stability problem for the standing periodic waves can be solved from the Lax spectrum due to separation of variables and this has been explored 
for numerical study of the stability spectrum in many integrable equations \cite{CPU_1,CPW_1,CP_3}. However, the variables do not separate 
for the double-periodic solutions \cite{P_2} and for traveling periodic waves in lattice equations \cite{CP_2}. Further study of the spectral and orbital stability of the traveling wave solutions can be found in \cite{LS_1,LS_2}. Our work expands the study of spectral stability to the case of the standing periodic waves in the MTM system (\ref{ini_1}).

\vspace{0.25cm}

{\bf Organization of the paper.}
The standing waves of the form (\ref{standing-wave}) with $V = \bar{U}$ are classified in Section \ref{sec_class} by using the phase portraits for a planar Hamiltonian system. Section \ref{sec_modu} reports the squared eigenfunction relation between solutions of the linear system 
(\ref{lax_1}) and solutions of the linearized MTM system at the standing waves. 
Properties of eigenvalues of the Lax and stability spectra are described in Section \ref{sec_stability}.
The Lax and stability spectra for the constant nonzero solutions are 
computed explicitly in Section \ref{sec_laxpspec}. With these exact solutions, 
we have also tested the numerical method to recover the same 
spectra numerically. In Section \ref{sec_periodic_1}, we obtain 
exact solutions for the standing periodic waves in relation to roots of $P(\lambda)$ and compute numerical approximations of their Lax and stability spectra. The paper is concluded with a summary and further discussions in Section \ref{conclusion}. The numerical method is described in Appendix \ref{app_1}.

\vspace{0.25cm}

{\bf Acknowledgements.} The work of the first author was conducted during  PhD studies while visiting McMaster University. The first author thanks Professor Wendong Wang for the encouragement, the China Scholarship Council for financial support, and McMaster University for hospitality. The work of the second author was supported in part by the National Natural Science Foundation of China (No. 12371248).

\section{ Classification of standing waves}
\label{sec_class}

Profiles $(U,V)$ of the standing wave solutions of the form (\ref{standing-wave}) are found from the system of first-order differential equations 
\begin{equation}\label{ini_3}
\left\{
\begin{array}{lr}
iU' + \omega U + V +|V|^2U=0, \\
-iV' + \omega V + U + |U|^2V=0,
\end{array}
\right.
\end{equation}
which is obtained by substituting (\ref{standing-wave}) into (\ref{ini_1}).
System (\ref{ini_3}) can be written as the complex Hamiltonian system
\begin{equation}
\label{Hami_11}
i \frac{dU}{dx} = \frac{\partial H}{\partial \bar{U}}, \quad 
-i \frac{dV}{dx} = \frac{\partial H}{\partial \bar{V}},
\end{equation}
generated by the real-valued Hamiltonian
\begin{equation}
\label{H}
H(U,V)=-\omega(|U|^2+|V|^2)-|U|^2|V|^2-(\bar{U}V+\bar{V}U),
\end{equation}
which coincides with (\ref{Ham-spatial}). Since $H$ is independent of $x$, the Hamiltonian is conserved for every solution of system (\ref{ini_3}). Another real-valued conserved quantity for system (\ref{ini_3}) is 
\begin{equation}
\label{F}
F(U,V)=|U|^2-|V|^2,
\end{equation}
conservation of which follows by adding the following two equations
\begin{align*}
i(\bar{U} U' - \bar{U}' U) + \bar{U} V - U \bar{V} =& 0, \\
-i(\bar{V} V' - \bar{V}' V) + \bar{V} U - V \bar{U} =& 0.
\end{align*}
With two conserved quantities (\ref{H}) and (\ref{F}), system (\ref{ini_3}) is completely integrable. In what follows, we will only consider the standing waves under the reduction $V = \bar{U}$, which corresponds to $F(U,V) \equiv 0$. This particular case includes the Dirac solitons (\ref{one_soliton_1}) at the constant zero background. We use the polar form 
\begin{equation}
\label{periodic-waves}
U(x) = \zeta(x) e^{\frac{i}{2} \theta(x)}, \quad 
V(x) = \zeta(x) e^{-\frac{i}{2} \theta(x)}
\end{equation}
with real-valued $\zeta$ and $\theta$ and obtain the system of first-order differential equations
\begin{equation}
\label{hamiton_2}
\begin{cases}
\zeta' = \zeta \sin \theta,\\
\theta'= 2 \cos \theta + 2 \zeta^2 + 2\omega,
\end{cases}
\end{equation}
for which $b := H(U,V) = -2 \omega \zeta^2 - \zeta^4  - 2 \zeta^2 \cos \theta$ is a constant. With further transformation $\zeta = \sqrt{\xi}$, system (\ref{hamiton_2}) is rewritten in the form 
\begin{equation}\label{re_eq_2}
\begin{cases}
\xi' = 2 \xi \sin \theta,\\
\theta' = 2\omega+2\xi+2\cos\theta,
\end{cases}
\end{equation}
for which $b = -2\omega\xi-\xi^2-2\xi\cos\theta$. 

The relevant periodic solutions of system (\ref{re_eq_2}) correspond to the domain 
$$
\Gamma_+ := \left\{ (\theta,\xi) : \quad \xi \geq 0 \right\}
$$
and the line $\Gamma_0 := \{ (\theta,\xi) : \;\; \xi = 0 \}$ is invariant with respect to evolution of the spatial dynamical system (\ref{re_eq_2}). In addition, system (\ref{re_eq_2}) is $2\pi$-periodic with respect to $\theta$, which allows us to close the system on the cylinder $\mathbb{T}\times \mathbb{R}$, where $\mathbb{T}:= [0,2\pi)$ subject to the $2\pi$-periodicity condition. 

The following two propositions determine the equilibrium points of the planar system (\ref{re_eq_2}) in $\mathbb{T}\times \mathbb{R}$.

\begin{proposition}
	\label{prop-critical-points}
	System (\ref{re_eq_2}) admits the following equilibrium points in $\mathbb{T}\times \mathbb{R}$.
	
	\noindent$\bullet$ Two equilibrium points $\{\mathrm{P}_{+}, \mathrm{P}_{-}\}$ exist for every $\omega \in \R$, where 
	$$
	\mathrm{P}_{+} := \{ (\theta,\xi) = (0,-(1+\omega)) \} \quad \mbox{\rm and} 
	\quad \mathrm{P}_{-} := \{ (\theta,\xi) = (\pi,1-\omega) \}.
	$$
	
	\noindent$\bullet$ Two equilibrium points $\{ \mathrm{Q}_{+}, \mathrm{Q}_{-}\}$ exist for $\omega\in(-1,1)$, where 
	$$
	\mathrm{Q}_+ := \{ (\theta,\xi) = (\arccos(-\omega),0) \} \quad \mbox{\rm and} \quad \mathrm{Q}_- := \{ (\theta,\xi) = (2\pi-\arccos(-\omega),0) \}.
	$$	
\end{proposition}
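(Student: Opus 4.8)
The plan is to locate the equilibria directly, by setting the two right-hand sides of system (\ref{re_eq_2}) to zero and solving the resulting algebraic system on $\mathbb{T} \times \R$. Dividing out the harmless factor of $2$, the equilibrium conditions read
\begin{equation*}
\xi \sin\theta = 0, \qquad \omega + \xi + \cos\theta = 0.
\end{equation*}
The entire argument is then a finite case analysis driven by the factored form of the first equation, which forces the dichotomy $\sin\theta = 0$ or $\xi = 0$.

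First I would treat the branch $\sin\theta = 0$. On $\mathbb{T} = [0,2\pi)$ this admits only $\theta = 0$ and $\theta = \pi$, with $\cos\theta = 1$ and $\cos\theta = -1$ respectively. Solving the second equation $\xi = -\omega - \cos\theta$ then gives $\xi = -(1+\omega)$ at $\theta = 0$ and $\xi = 1 - \omega$ at $\theta = \pi$, which are exactly $\mathrm{P}_+$ and $\mathrm{P}_-$. Since this branch imposes no restriction on $\omega$, both points exist for every $\omega \in \R$, as claimed.

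Next I would treat the branch $\xi = 0$, in which the second equation collapses to $\cos\theta = -\omega$. This is solvable in $\mathbb{T}$ precisely when $-\omega \in [-1,1]$, and it produces two \emph{distinct} roots $\theta = \arccos(-\omega)$ and $\theta = 2\pi - \arccos(-\omega)$ exactly when $-\omega \in (-1,1)$, that is $\omega \in (-1,1)$; these are $\mathrm{Q}_+$ and $\mathrm{Q}_-$. The only point requiring care is the behaviour at the endpoints $\omega = \pm 1$: there $\cos\theta = -\omega = \mp 1$ has the single root $\theta = \pi$ (for $\omega = 1$) or $\theta = 0$ (for $\omega = -1$), which coincides with $\mathrm{P}_-$ or $\mathrm{P}_+$, since those points also satisfy $\xi = 0$ at $\omega = \pm 1$. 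This is precisely why the $\mathrm{Q}_\pm$ family is stated on the open interval $(-1,1)$, where $\arccos(-\omega) \in (0,\pi)$ keeps the two $\mathrm{Q}$ points distinct from each other and from $\mathrm{P}_\pm$.

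I do not anticipate a genuine obstacle: the proof reduces to solving a product of two elementary trigonometric and linear equations. The one subtle piece is bookkeeping at the parameter endpoints $\omega = \pm 1$, where the $\mathrm{Q}$-branch degenerates onto the $\mathrm{P}$-branch; stating the equilibria with the indicated $\omega$-ranges is exactly what avoids double counting and yields a complete, nonredundant list.
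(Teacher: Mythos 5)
Your proposal is correct and follows essentially the same route as the paper: set the right-hand sides of (\ref{re_eq_2}) to zero, split on the factored condition $\xi\sin\theta=0$, and solve the linear equation $\omega+\xi+\cos\theta=0$ on each branch. Your extra remark about the degeneration of the $\mathrm{Q}_\pm$ branch onto $\mathrm{P}_\pm$ at $\omega=\pm 1$ matches the paper's closing observation that the two sets coincide there.
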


\begin{proof}
	Assume that $(\theta_0, \xi_0) \in \mathbb{T}\times \mathbb{R}$ is the equilibrium point of system (\ref{re_eq_2}). Then 
\begin{equation*}
\begin{cases}
\xi_0 \sin \theta_0 = 0,\\
\omega_0 + \xi_0 + \cos\theta_0 = 0.
\end{cases}
\end{equation*}	
If $\xi_0=0$, then either $\theta_0={\rm arccos}(-\omega)$ or 
$\theta_0=2\pi-{\rm arccos}(-\omega)$. This yields $\{ \mathrm{Q}_{+}, \mathrm{Q}_{-}\}$ for every $\omega\in [-1,1]$.
If $\sin\theta_0=0$, then either $\theta_0 = 0$ and $\xi_0=-(1+\omega)$ 
or $\theta_0=\pi$ and $\xi_0=1-\omega$. This yields $\{ \mathrm{P}_{+}, \mathrm{P}_{-}\}$ for every $\omega \in \R$. For $\omega = \pm 1$, 
the sets $\{ \mathrm{Q}_{+}, \mathrm{Q}_{-}\}$ and $\{ \mathrm{P}_{+}, \mathrm{P}_{-}\}$ coincide.
\end{proof}

\begin{remark}
	The equilibrium points $\{ \mathrm{Q}_{+}, \mathrm{Q}_{-}\}$ belongs 
	to the invariant line $\Gamma_0$ for $\omega \in (-1,1)$. For applications of (\ref{re_eq_2}) in $\mathbb{T}\times \mathbb{R}_+$, the equilibrium point $\mathrm{P}_{+}$ is relevant for $\omega \in (-\infty,-1]$ and the equilibrium point $\mathrm{P}_-$ is relevant for $\omega \in (-\infty,1]$. No equilibrium points belong to $\mathbb{T}\times \mathbb{R}_+$ for $\omega \in (1,\infty)$.
\end{remark}

\begin{proposition}
	\label{prop-stability-points}
Classification of equilibrium points is given in the following table:
	\begin{table}[H]
		\centering
	\label{tab_0}
		\begin{tabular}{|c|c|c|c|}
			\hline
			Point   &$\omega \in (1,\infty)$ &$\omega\in(-1,1)$  &$\omega \in (-\infty,-1)$ \\ \hline
			$\mathrm{P}_{+}$ & center & center & saddle \\
			$\mathrm{P}_{-}$ & saddle & center & center    \\
			$\mathrm{Q}_{+}$ & - & saddle & - \\
			$\mathrm{Q}_{-}$ & - & saddle & - \\
			\hline
		\end{tabular}
		\caption{The type of equilibrium points of Proposition \ref{prop-critical-points}.}
	\end{table}
\end{proposition}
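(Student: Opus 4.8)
The plan is to classify each equilibrium by linearizing the planar system (\ref{re_eq_2}) and reading the type off the eigenvalues of the Jacobian. Writing the vector field as $(\theta',\xi') = (2\omega+2\xi+2\cos\theta,\ 2\xi\sin\theta)$, I would first record the Jacobian matrix
\begin{equation*}
J(\theta,\xi) = \begin{pmatrix} -2\sin\theta & 2 \\ 2\xi\cos\theta & 2\sin\theta \end{pmatrix},
\end{equation*}
whose trace vanishes identically. This is not accidental: the system (\ref{re_eq_2}) is Hamiltonian, with the conserved quantity $b = -2\omega\xi - \xi^2 - 2\xi\cos\theta$ serving as the Hamiltonian, since $\xi' = \partial_\theta b$ and $\theta' = -\partial_\xi b$. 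Consequently the characteristic polynomial reduces to $\mu^2 + \det J = 0$, so the eigenvalues are $\mu = \pm\sqrt{-\det J}$, and the type of each equilibrium is fixed entirely by the sign of $\det J = -4\sin^2\theta - 4\xi\cos\theta$: a (linear) center when $\det J > 0$ and a saddle when $\det J < 0$.

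The second step is to evaluate $\det J$ at the four equilibria from Proposition \ref{prop-critical-points}. At $\mathrm{P}_+ = (0, -(1+\omega))$ one finds $\det J = 4(1+\omega)$, positive for $\omega > -1$ and negative for $\omega < -1$, which matches the transition from center to saddle at $\omega = -1$. At $\mathrm{P}_- = (\pi, 1-\omega)$ one finds $\det J = 4(1-\omega)$, positive for $\omega < 1$ and negative for $\omega > 1$, giving the saddle for $\omega > 1$ and the center otherwise. At $\mathrm{Q}_\pm$ we have $\xi = 0$ and $\cos\theta = -\omega \in (-1,1)$, so the second term drops and $\det J = -4\sin^2\theta < 0$ because $\sin\theta \neq 0$ there; hence both $\mathrm{Q}_+$ and $\mathrm{Q}_-$ are saddles throughout $\omega \in (-1,1)$. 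Assembling these signs reproduces every entry of the table.

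The one point that needs more than a routine eigenvalue count is promoting the linear centers to genuine nonlinear centers, since for a general planar system a linear center need not persist. Here the Hamiltonian structure settles this: a direct computation gives $\det(\mathrm{Hess}\,b) = -4\xi\cos\theta - 4\sin^2\theta = \det J$, so at each point with $\det J > 0$ the conserved function $b$ has a nondegenerate critical point with definite Hessian, hence a strict local extremum, and the nearby level curves of $b$ are closed orbits encircling the equilibrium. This confirms the center classification rigorously. The only genuinely excluded values are the degenerate cases $\omega = \pm 1$, where $\det J = 0$ at one of $\mathrm{P}_\pm$ and the points $\mathrm{P}_\pm$ merge with $\mathrm{Q}_\pm$; these lie on the boundaries of the open $\omega$-intervals in the table and so require no classification there.
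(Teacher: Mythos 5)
Your proof is correct and follows essentially the same route as the paper: linearize (\ref{re_eq_2}) at each equilibrium, note the trace of the Jacobian vanishes, and read the type off the sign of the determinant, obtaining the same values $4(1+\omega)$, $4(1-\omega)$, and $-4\sin^2\theta_0$ at $\mathrm{P}_+$, $\mathrm{P}_-$, and $\mathrm{Q}_\pm$. Your final paragraph, using the conserved quantity $b$ and the identity $\det(\mathrm{Hess}\,b)=\det J$ to promote the linear centers to genuine nonlinear centers, is a worthwhile addition in rigor that the paper's proof leaves implicit.
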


\begin{proof}
Let $(\theta_0,\xi_0) \in \mathbb{T} \times \mathbb{R}$ be an equilibrium point of system (\ref{re_eq_2}) and $(\theta_1,\xi_1) \in \mathbb{R} \times \mathbb{R}$ be a small perturbation. Linearized equations of system  (\ref{re_eq_2}) at $(\theta_0,\xi_0)$ are given by 
	\begin{equation}\label{re_eq_3}
	\begin{cases}
	\xi_1' = 2\sin\theta_0 \xi_1 + 2 \xi_0 \cos\theta_0 \theta_1,\\
	\theta' = 2 \xi_1 - 2 \sin\theta_0 \theta_1.
	\end{cases}
	\end{equation} 
The linearized system (\ref{re_eq_3}) is defined by the coefficient matrix
	\begin{equation*}
	A=2\left(\begin{array}{cc}
	\sin\theta_0       &\xi_0\cos\theta_0  \\
	1 & -\sin\theta_0\\
	\end{array}\right),
	\end{equation*}
for which ${\rm tr}(A) = 0$ and $\det(A) = -4 \left[ \sin^2 \theta_0 + \xi_0 \cos \theta_0 \right]$. The sign of  $\det(A)$ determines the type of the equilibrium point. It is a center if $\det(A) > 0$ and a saddle if $\det(A) < 0$. 

For $\mathrm{P}_{+}$, we have $\det(A) = 4(1+\omega)$ so that 
it is a center for $\omega > -1$ and a saddle for $\omega < -1$. 
For $\mathrm{P}_{-}$, we have $\det(A) = 4(1-\omega)$ so that 
it is a center for $\omega < 1$ and a saddle for $\omega > 1$. 
For $\mathrm{Q}_{\pm}$, we have $\det(A) = -4\sin^2\theta_0$ so that they are saddles for every $\omega \in (-1,1)$. 
\end{proof}

\begin{remark}
For $\omega = -1$, the equilibrium points $\mathrm{Q}_{\pm}$ coallesce with $\mathrm{P}_+$ and induce the change of the type of $\mathrm{P}_+$ 
from a center for $\omega > -1$ to a saddle for $\omega < -1$. For $\omega = 1$, the equilibrium points $\mathrm{Q}_{\pm}$ coallesce with $\mathrm{P}_-$ and induce the change of the type of $\mathrm{P}_-$ 
from a center for $\omega < 1$ to a saddle for $\omega > 1$. 
\end{remark}

Next we classify all admissible solutions of system (\ref{re_eq_2}) by constructing phase portraits of the dynamical system on the phase plane $(\theta,\xi)$ in $\mathbb{T} \times \mathbb{R}$. The classification of admissible solutions is summarized in the following proposition.

\begin{proposition}
	\label{pro_40}
The system (\ref{re_eq_2}) admits the following bounded solutions in $\mathbb{T} \times \mathbb{R}_+$:
	
$\bullet$ For $\omega \in (-\infty,-1)$, three families exist for
	$b\in (-\infty, (1+\omega)^2)$, $b\in( 0,(1+\omega)^2)$, and 
	$b\in ((1+\omega)^2, (1-\omega)^2)$, separated by two heteroclinic orbits 
	for $b = (1+\omega)^2$ from $\mathrm{P}_+$ and its $2\pi$-periodic continuation. The second family disappears at $\omega = -1$.
		
	$\bullet$ For $\omega\in(-1,1)$, two families exist for 
	$b\in (-\infty, 0)$ and $b\in (0, (1-\omega)^2)$ separated by two heteroclinic orbits for $b = 0$. The second family disappears at $\omega = 1$.

	$\bullet$ For $\omega \in (1,\infty)$, only one family exists for 
	$b\in (-\infty, 0).$
\end{proposition}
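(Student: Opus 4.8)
The plan is to read off all bounded orbits from the level sets of the conserved quantity, exploiting that (\ref{re_eq_2}) is a Hamiltonian system with $b = -2\omega\xi - \xi^2 - 2\xi\cos\theta$ in the role of the Hamiltonian (indeed $\xi' = \partial_\theta b$ and $\theta' = -\partial_\xi b$, so $b$ is conserved). Every bounded solution in $\mathbb{T}\times\mathbb{R}_+$ is therefore a compact connected component of a level set $\{ b = c \}\cap\{ \xi > 0\}$: the regular components are the periodic orbits that make up the ``families'', while the components passing through a saddle are the separatrices bounding them. The first step is to evaluate $b$ at the equilibria of Proposition \ref{prop-critical-points}, which gives the critical levels $b(\mathrm{P}_+) = (1+\omega)^2$, $b(\mathrm{P}_-) = (1-\omega)^2$, and $b(\mathrm{Q}_\pm) = 0$, and to record from Proposition \ref{prop-stability-points} and its Remark which of these equilibria actually lie in $\mathbb{T}\times\mathbb{R}_+$ for a given $\omega$. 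These three values, together with the invariant line $\Gamma_0 = \{ \xi = 0\}$ (which sits at $b = 0$), are the only candidates for changes in the family count.

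The second step reduces the two-dimensional level-curve problem to a one-dimensional one. Solving for the angle gives $\cos\theta = g(\xi) := -\tfrac{b}{2\xi} - \omega - \tfrac{\xi}{2}$ on $\xi > 0$, so a level set is admissible precisely where $g(\xi) \in [-1,1]$. A short computation shows that $g$ is strictly decreasing on $(0,\infty)$ when $b \leq 0$, whereas for $b > 0$ it increases to a single interior maximum $g(\sqrt{b}) = -\sqrt{b} - \omega$ at $\xi = \sqrt{b}$ and then decreases. The endpoints of the admissible $\xi$-intervals are the crossings $g = 1$ and $g = -1$, i.e. the positive roots of the quadratics $\xi^2 + 2(1+\omega)\xi + b = 0$ (at $\theta = 0$) and $\xi^2 - 2(1-\omega)\xi + b = 0$ (at $\theta = \pi$), whose discriminants vanish exactly at $b = (1+\omega)^2$ and $b = (1-\omega)^2$ — the saddle and center levels from the first step. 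On an interval where $g$ is monotone the admissible component is a graph $\xi = \xi(\theta)$ over all of $\mathbb{T}$, hence a rotational (non-contractible) orbit; on an interval straddling the interior maximum of $g$ the component is a closed loop encircling $\mathrm{P}_-$, hence a librational orbit. Counting the admissible intervals with positive endpoints, and whether $g$ is monotone on each, as $b$ varies therefore counts the families directly.

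The third step carries out this count separately for $\omega \in (1,\infty)$, $\omega \in (-1,1)$, and $\omega \in (-\infty,-1)$, checking the sign and coincidence of the quadratic roots in each regime and matching the outcome to the phase portraits. I expect the genuinely delicate part to be the regime $\omega < -1$: there the saddle $\mathrm{P}_+$ lies in $\mathbb{R}_+$, and for $b \in (0,(1+\omega)^2)$ the maximum of $g$ exceeds the level $1$, so the admissible set splits into two monotone intervals and yields two coexisting rotational orbits — the source of the two distinct families in that range. One must show that as $b \downarrow 0$ the lower orbit degenerates onto $\Gamma_0$ (explaining why that family is confined to $b > 0$) while the upper orbit continues smoothly across $b = 0$ to join the single $b < 0$ orbit, and that as $b \uparrow (1+\omega)^2$ the two rotational orbits merge into the pair of separatrices issuing from $\mathrm{P}_+$ and its $2\pi$-translate (heteroclinic on the cover, homoclinic on the cylinder), which then bound the librational family for $b \in ((1+\omega)^2,(1-\omega)^2)$. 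The remaining bookkeeping — the $\mathrm{Q}_\pm$ separatrices at $b = 0$ for $\omega \in (-1,1)$, and the disappearance of the family on $b\in(0,(1-\omega)^2)$ as $\omega \to 1^-$ and of the family on $b\in(0,(1+\omega)^2)$ as $\omega \to (-1)^-$, where the respective intervals collapse — then follows from continuity of the roots in $(b,\omega)$ together with the equilibrium coalescences noted in the Remark.
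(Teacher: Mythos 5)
Your proposal is correct, and the computations it calls for all check out: $b$ is indeed conserved ($\xi'=\partial_\theta b$, $\theta'=-\partial_\xi b$), the critical levels are $b(\mathrm{P}_+)=(1+\omega)^2$, $b(\mathrm{P}_-)=(1-\omega)^2$, $b(\mathrm{Q}_\pm)=0$, and the function $g(\xi)=-\tfrac{b}{2\xi}-\omega-\tfrac{\xi}{2}$ behaves exactly as you describe (strictly decreasing from $+\infty$ to $-\infty$ for $b<0$; unimodal with maximum $-\sqrt{b}-\omega$ at $\xi=\sqrt{b}$ for $b>0$). In particular, for $\omega<-1$ the condition $-\sqrt{b}-\omega>1$ is equivalent to $b<(1+\omega)^2$, which is precisely where the admissible set splits into two monotone intervals and produces the two coexisting rotational families; and the positivity of the roots of $\xi^2+2(1+\omega)\xi+b$ and $\xi^2-2(1-\omega)\xi+b$ sorts out the three $\omega$-regimes correctly. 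Where you differ from the paper is in execution rather than in the underlying idea: the paper's proof also studies the level sets of $B(\theta,\xi)=b$, but it establishes the family count by inspecting numerically plotted phase portraits (Figures \ref{fig_w_15}--\ref{fig_w_16}) and describing what they show, whereas you replace the pictures by the scalar reduction $\cos\theta=g(\xi)$ and a root count for two explicit quadratics. Your version is therefore self-contained and rigorous where the paper's is essentially descriptive, and it buys a clean criterion (monotone admissible interval $\Rightarrow$ rotational orbit; interval straddling the maximum of $g$ $\Rightarrow$ librational orbit around $\mathrm{P}_-$) that the figures only illustrate. It is worth noting that your two quadratics are exactly the factors of the quartic $R(\xi)=(\xi^2+2\omega\xi+b)^2-4\xi^2$ that the paper introduces only later, in the first-order invariant (\ref{exact_1}) of Section \ref{sec_periodic_1}, so your argument also makes the link between the existence classification and the roots $u_1,\dots,u_4$ explicit at the outset. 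The only thing left to do is to actually carry out the case-by-case count announced in your third step, but every delicate point (the two-interval splitting for $\omega<-1$, the collapse of the lower family onto $\Gamma_0$ as $b\downarrow 0$, the merger into the $\mathrm{P}_+$ separatrices as $b\uparrow(1+\omega)^2$, and the degenerations at $\omega=\pm1$) is correctly identified and resolves as you predict.
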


\begin{proof}
The assertion follows from the study of orbits of the planar Hamiltonian system on the phase plane $(\theta,\xi)$ in $\mathbb{T}\times \mathbb{R}$ shown on Figures \ref{fig_w_15}, \ref{fig_w_1}, and \ref{fig_w_16}. The phase portraits are obtained by plotting the level curves of the function $B(\theta,\xi) = b$, where 
$$
B(\theta,\xi) := -2 \omega \xi - \xi^2 - 2 \xi \cos \theta.
$$ 
Saddle points are shown by crosses and the center points are shown by stars.

For $\omega \in (-\infty,-1)$, see Figure \ref{fig_w_15}(a), 
we have $B(0,-(1+\omega)) = (1+\omega)^2$ for the saddle point $\mathrm{P}_+$ and $B(\pi,1-\omega)=(1-\omega)^2$ for the center point $\mathrm{P}_{-}$. $\mathrm{P}_-$ is the maximum of $B$ and $\mathrm{P}_+$ is the saddle point of $B$. One family of periodic orbits for $b\in ((1+\omega)^2, (1-\omega)^2)$ exists inside the punctured neightborhood of the center point $\mathrm{P}_-$ bounded by the two heteroclinic orbits from the saddle point $\mathrm{P}_+$ and its $2\pi$-periodic continuation. The second family of periodic orbits for $b \in (-\infty,(1+\omega)^2)$ exists above the upper heteroclinic orbit. The third family of periodic orbits for $b \in (0,(1+\omega)^2)$ exists between the lower heteroclinic orbit and the invariant line $\Gamma_0$. 

\begin{figure}[htb!]
	\centering
	\subfigure[$\omega<-1.$]{\includegraphics[width=2.9in,height=2.4in]{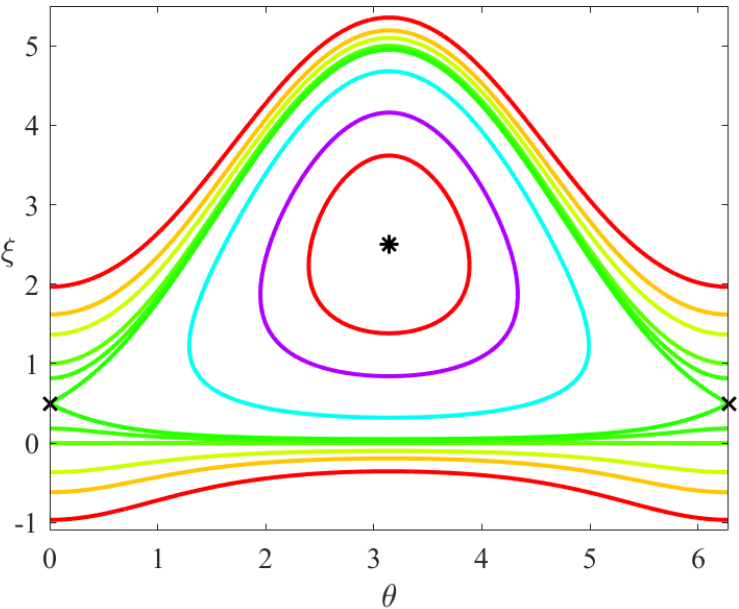}}
	\subfigure[$\omega>1.$]{\includegraphics[width=2.9in,height=2.4in]{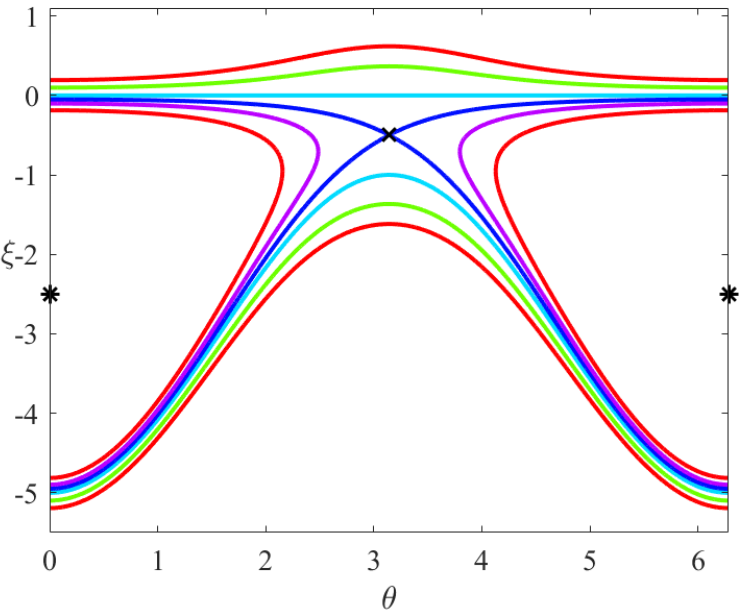}}
	\caption{Phase portraits in the phase plane $(\theta,\xi)$ for 
		(a) $\omega = -1.5$ and (b) $\omega = 1.5$. }\label{fig_w_15}
\end{figure}

When $\omega = -1$, see Figure \ref{fig_w_1}(a), the saddle point $\mathrm{P}_+$ and its $2\pi$-periodic continuation belongs to the invariant line $\Gamma_0$. The third family of periodic orbits between the lower heteroclinic orbit and the invariant line disappears whereas the other two families of periodic orbits remain.

\begin{figure}[htb!]
	\centering
	\subfigure[$\omega=-1.$]{\includegraphics[width=2.9in,height=2.4in]{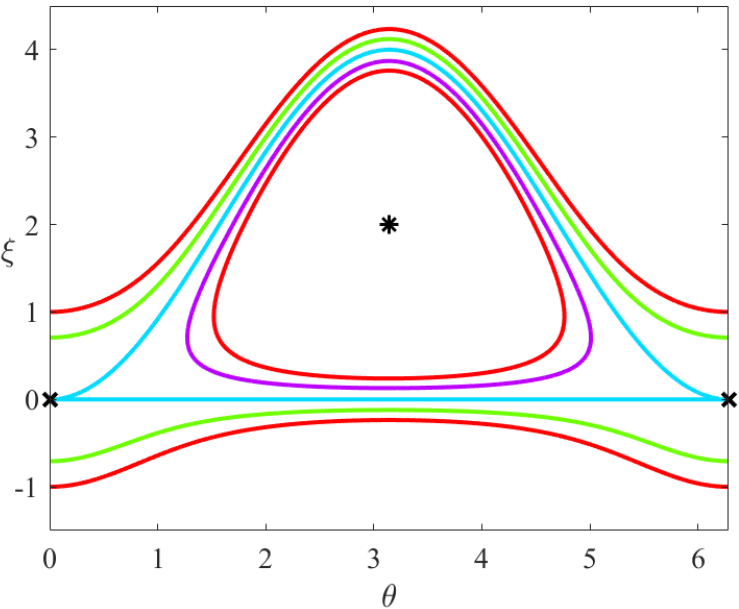}}
	\subfigure[$\omega=1.$]{\includegraphics[width=2.9in,height=2.4in]{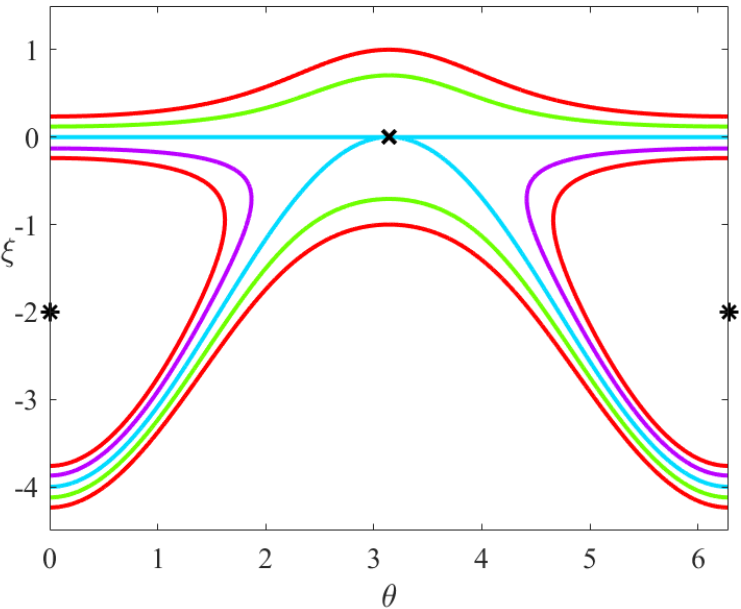}}
	\caption{Phase portraits in the phase plane $(\theta,\xi)$ for (a) $\omega=-1$ and (b) $\omega=1$. }\label{fig_w_1}
\end{figure}

For $\omega \in (1,\infty)$, see Figure \ref{fig_w_15}(b),  both $\mathrm{P}_{-}$ and  $\mathrm{P}_{+}$ are located in $\mathbb{T}\times \mathbb{R}_-$. One family of periodic orbits for $b\in (-\infty, 0)$ exists above the invariant line $\Gamma_0$. When $\omega = 1$, see Figure \ref{fig_w_1}(b), the saddle point $\mathrm{P}_-$ belongs to the invariant line but does not affect the existence of the family of periodic orbits for $b \in (-\infty,0)$. 

Finally, for $\omega \in (-1,1)$, see Figure \ref{fig_w_16}, both $\mathrm{P}_+$ and $\mathrm{P}_-$ are center points, but $\mathrm{P}_+ \in \mathbb{T} \times \mathbb{R}_-$ and $\mathrm{P}_- \in \mathbb{T} \times \mathbb{R}_+$. The other two equilibrium points $\mathrm{Q}_+$ and $\mathrm{Q}_-$ are saddle points located on the invariant line $\Gamma_0$. One family of periodic orbits for $b\in (0, (1-\omega)^2)$ exists inside the punctured neightborhood of the center point $\mathrm{P}_-$ bounded by the two heteroclinic orbits from the two saddle points $\mathrm{Q}_+$ and $\mathrm{Q}_-$. The second family of periodic orbits for $b \in (-\infty,0)$ exists above the upper heteroclinic orbit. 
\end{proof}

\begin{figure}[htb!]
	\centering
	\subfigure[$\omega\in(-1,0)$.]{\includegraphics[width=2.9in,height=2.4in]{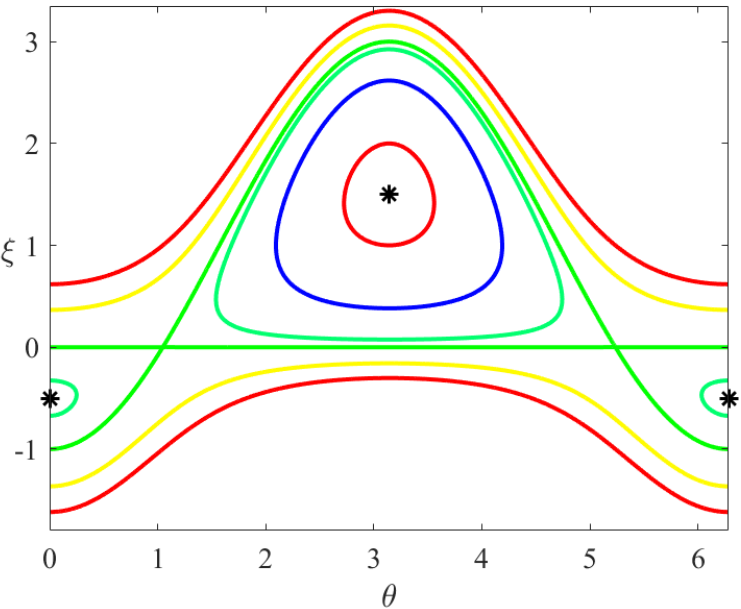}}
	\subfigure[$\omega\in(0,1)$.]{\includegraphics[width=2.9in,height=2.4in]{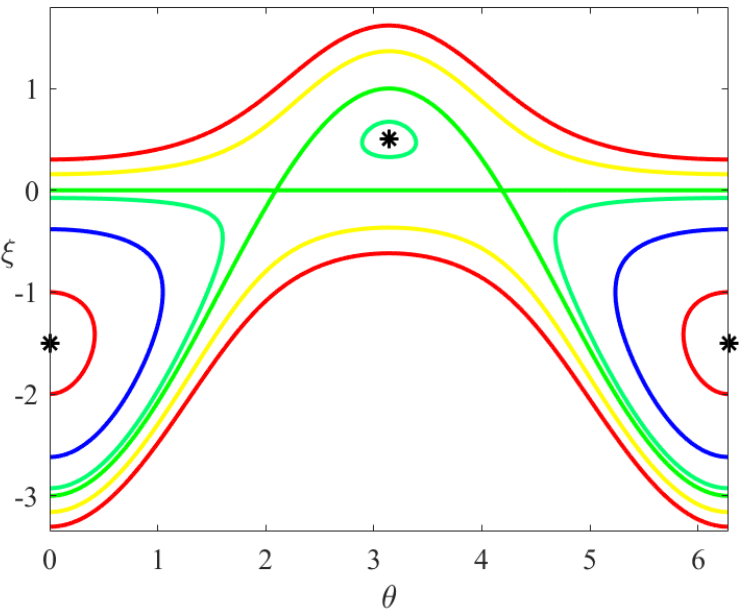}}
	\caption{Phase portraits in the phase plane $(\theta,\xi)$ for (a) $\omega=-0.5$ and (b) $\omega=0.5$. }\label{fig_w_16}
\end{figure}

\begin{remark}
	Existence of bounded solutions in Proposition \ref{pro_40} is represented in Figure \ref{fig00}, where the parameter plane $(b,\omega)$ is divided into several regions.  Region ${\rm \Rmnum{1}}$ for $b \in (-\infty,0)$ and $\omega \in \mathbb{R}$ contains one family of periodic orbits in $\mathbb{T} \times \mathbb{R}_+$ above the upper heteroclinic orbit. Region  ${\rm \Rmnum{2}}$ for either $b \in (0,(1-\omega)^2)$ and $\omega \in (-1,1)$ or $b \in ((1+\omega)^2,(1-\omega)^2)$ and $\omega \in (-\infty,-1]$ contains one family of periodic orbits in $\mathbb{T} \times \mathbb{R}_+$ inside the two heteroclinic orbits. Region  ${\rm \Rmnum{3}}$ for $b \in (0,(1+\omega)^2)$ and $\omega \in (-\infty,-1)$ contains two families of periodic orbits in $\mathbb{T} \times \mathbb{R}_+$ above the upper heteroclinic orbit and between the lower heteroclinic orbit and the invariant line $\Gamma_0$. Region  ${\rm \Rmnum{4}}$ contains no periodic orbits. 
\end{remark}

\begin{remark}
Boundaries between regions  ${\rm \Rmnum{1}}$,  ${\rm \Rmnum{2}}$,  ${\rm \Rmnum{3}}$, and  ${\rm \Rmnum{4}}$ in Figure \ref{fig00} correspond to some particular non-periodic solutions of system (\ref{re_eq_2}). The black line at $b = 0$ and $\omega \in [-1,1]$ gives the constant zero solution for the saddle points $\mathrm{Q}_+$ and $\mathrm{Q}_-$ and the solitary wave solutions on the zero background for the upper heteroclinic orbit (if $\omega \neq 1$). The blue line at $b = (1+\omega)^2$ and $\omega \in (-\infty,-1)$ gives the constant nonzero solution for the saddle point $\mathrm{P}_+$ and two solitary wave solutions on the nonzero background for the upper and lower heteroclinic orbits. The red line for $b = (1-\omega)^2$ and $\omega \in (-\infty,1)$ gives the constant nonzero solution for the center point $\mathrm{P}_-$.
\end{remark}

\section{Squared eigenfunction relation for the standing waves}
\label{sec_modu}

By substituting the standing waves of the form (\ref{standing-wave}) into the Lax system (\ref{lax_1}) and separating the variables with
\begin{equation}
\label{trans_1}
\psi(x,t) = e^{\frac{i}{2} \omega t \sigma_3 + \Omega t} \Psi(x),	
\end{equation}	
we obtain the following system of linear equations for $\Psi \in \mathbb{C}^2$ and $\Omega \in \mathbb{C}$,
\begin{equation}
\label{eig_1}
\left\{ \begin{array}{l} \Psi'(x) = L(U,V,\lambda) \Psi(x),\\
\Omega\Psi(x) = M(U,V,\lambda) \Psi(x) - 	\frac{i}{2} \omega \sigma_3\Psi(x).
\end{array} \right.
\end{equation}
The following proposition establishes the admissible values of $\Omega$ obtained from the characteristic function $P(\lambda)$ given by (\ref{polynomial-P}).

\begin{proposition}\label{pro_omega_1}
Let $(U,V)$ be solutions of (\ref{ini_3}) with $H(U,V) = b$ in (\ref{H}) and $F(U,V) = 0$ in (\ref{F}). Then, $\Omega$ is found from the 
characteristic equation 
\begin{equation}
4 \Omega^2 + P(\lambda) = 0,
\label{omega_1}
\end{equation}
where $	P(\lambda) $ is given by (\ref{polynomial-P}).
\end{proposition}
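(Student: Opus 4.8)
The plan is to recognize $\Omega$ as an eigenvalue of the $2\times 2$ matrix appearing on the right-hand side of the second equation in (\ref{eig_1}), and then to use the two conserved quantities to show that its characteristic equation is $x$-independent and reduces to (\ref{omega_1}). Write $\mathcal{M}(x,\lambda) := M(U,V,\lambda) - \frac{i}{2}\omega\sigma_3$, so that the second equation in (\ref{eig_1}) reads $\mathcal{M}\Psi = \Omega\Psi$. For a nontrivial $\Psi$, the scalar $\Omega$ must therefore be an eigenvalue of $\mathcal{M}$. Since $M$ is a linear combination of $\sigma_3$ and off-diagonal matrices, $\operatorname{tr} M = 0$, and subtracting the trace-free $\frac{i}{2}\omega\sigma_3$ gives $\operatorname{tr}\mathcal{M} = 0$, so the characteristic equation collapses to
\[
\Omega^2 + \det\mathcal{M} = 0.
\]

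First I would compute $\det\mathcal{M}$ from its entries. Writing $\mathcal{M} = \begin{pmatrix} a & c \\ d & -a \end{pmatrix}$, the diagonal entry is $a = \frac{i}{4}\bigl( \lambda^2 + \frac{1}{\lambda^2} - 2\omega - |U|^2 - |V|^2 \bigr)$, and the off-diagonal entries are
\[
c = -\frac{i}{2}\Bigl( \lambda\bar{V} + \frac{1}{\lambda}\bar{U}\Bigr), \qquad d = -\frac{i}{2}\Bigl( \lambda V + \frac{1}{\lambda}U\Bigr),
\]
so that $\det\mathcal{M} = -a^2 - cd$. Expanding the product gives $cd = -\frac{1}{4}\bigl( \lambda^2 |V|^2 + \lambda^{-2}|U|^2 + \bar{U}V + \bar{V}U\bigr)$, which is manifestly $x$-dependent through $|U|^2$, $|V|^2$, and $\bar{U}V + \bar{V}U$.

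The key step is the cancellation of this $x$-dependence using the conservation laws. From $F(U,V) = |U|^2 - |V|^2 = 0$ I set $|U|^2 = |V|^2 =: \rho$, which symmetrizes the $\lambda^{\pm 2}$ terms into $\rho\bigl(\lambda^2 + \lambda^{-2}\bigr)$; from $H(U,V) = -2\omega\rho - \rho^2 - (\bar{U}V + \bar{V}U) = b$ I substitute $\bar{U}V + \bar{V}U = -2\omega\rho - \rho^2 - b$. Abbreviating $s := \lambda^2 + \lambda^{-2} - 2\omega$, the two contributions become $a^2 = -\frac{1}{16}(s-2\rho)^2$ and $cd = -\frac{1}{4}(\rho s - \rho^2 - b)$. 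Adding them, the terms linear and quadratic in $\rho$ cancel identically, leaving the $x$-independent expression
\[
a^2 + cd = -\frac{1}{16}s^2 + \frac{b}{4}.
\]
Hence $\det\mathcal{M} = -(a^2 + cd) = \frac{1}{16}s^2 - \frac{b}{4} = \frac{1}{4}P(\lambda)$, and substitution into $\Omega^2 + \det\mathcal{M} = 0$ gives $4\Omega^2 + P(\lambda) = 0$, as claimed.

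The only real obstacle is the bookkeeping in the determinant expansion and invoking $H = b$ at exactly the right place: it is the single substitution $\bar{U}V + \bar{V}U = -2\omega\rho - \rho^2 - b$ that turns the $x$-dependent $\det\mathcal{M}$ into the $x$-independent polynomial $\frac{1}{4}P(\lambda)$, while the reduction $F = 0$ is precisely what makes the off-diagonal product symmetric in $\lambda \leftrightarrow \lambda^{-1}$ and produces the clean combination $s = \lambda^2 + \lambda^{-2} - 2\omega$. This confirms that the admissible values of $\Omega$ are exactly those given by the characteristic equation (\ref{omega_1}).
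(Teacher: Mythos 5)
Your proof is correct and follows essentially the same route as the paper: both treat the second equation of (\ref{eig_1}) as the eigenvalue condition for a trace-free $2\times 2$ matrix, expand the characteristic determinant, and use $F=0$ together with $H=b$ to cancel the $x$-dependence and arrive at $4\Omega^2+P(\lambda)=0$. The only cosmetic difference is that the paper rescales the matrix by $-2$ before taking the determinant (hence the $2\Omega$ entries), whereas you work with $\Omega^2+\det\mathcal{M}=0$ directly; your bookkeeping via $\rho$ and $s$ checks out.
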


\begin{proof}
Since the second equation of system (\ref{eig_1}) is a linear homogeneous equation, there is a nonzero solution for $\Psi(x) \in \mathbb{C}^2$ if and only if 
$$
\left| \begin{array}{cc}
	-\frac{i}{2}(\lambda^2+\frac{1}{\lambda^2})+\frac{i}{2}(|U|^2+|V|^2)
	+i\omega  + 2 \Omega & \frac{i}{\lambda}\bar{U}+i\lambda\bar{V}  \\
	\frac{i}{\lambda}U+i\lambda V & \frac{i}{2}(\lambda^2+\frac{1}{\lambda^2})-\frac{i}{2}(|U|^2+|V|^2)-i\omega + 2 \Omega 
\end{array}
\right| = 0.
$$
The characteristic equation is written in the form (\ref{omega_1}) with 
\begin{align*}
P(\lambda) 
	&= \frac{1}{4}  \left( \lambda^2+\frac{1}{\lambda^2} - |U|^2 - |V|^2 - 2 \omega \right)^2
	+\left(\frac{1}{\lambda}\bar{U}+\lambda\bar{V} \right)\left(\frac{1}{\lambda}U+\lambda V\right) \\
	&= \frac{1}{4}  \left( \lambda^2+\frac{1}{\lambda^2} - 2 \omega \right)^2
	+ \frac{1}{2} (\lambda^2 - \frac{1}{\lambda^2}) (|V|^2 - |U|^2) \\
	& \quad 
	+ \frac{1}{4} (|U|^2 + |V|^2)^2 + \omega (|U|^2 + |V|^2) + \bar{V} U + V \bar{U}.
\end{align*}
If $H(U,V) = b$ and $F(U,V) = 0$, then $P(\lambda)$ is given by (\ref{polynomial-P}).
\end{proof}

To derive the spectral stability problem for the standing waves 
of the form (\ref{standing-wave}), we consider the time-dependent MTM system 
(\ref{ini_1}) and substitute the perturbation in the form
\begin{equation*}
		u(x,t) = e^{-i\omega t} [U(x) + \mathfrak{u}(x,t)], \quad 
		v(x,t) = e^{-i\omega t} [V(x) + \mathfrak{v}(x,t)].
\end{equation*}
Perturbation terms satisfy the linearized equations of motion,
\begin{equation}\label{line_10}
\left\{
\begin{array}{lr}
(i\partial_t + i \partial_x + \omega+|V|^2) \mathfrak{u} +(1+U\bar{V}) \mathfrak{v}  + UV \overline{\mathfrak{v}} = 0, \\
(-i\partial_t - i \partial_x + \omega+|V|^2) \overline{\mathfrak{u}} +(1+\bar{U} V ) \overline{\mathfrak{v}}  + \bar{U} \bar{V} \mathfrak{v}  = 0, \\
(i \partial_t -i\partial_x + \omega+|U|^2) \mathfrak{v} + (1+\bar{U}V) \mathfrak{u}  +UV \overline{\mathfrak{u}} = 0, \\
(-i \partial_t + i\partial_x + \omega+|U|^2)  \overline{\mathfrak{v}} + (1+U \overline{V})  \overline{\mathfrak{u}}  + \bar{U} \bar{V} \mathfrak{u} = 0.
\end{array}
\right.
\end{equation}
Normal modes are obtained after the separation of variables with 
\begin{align*}
\mathfrak{u}(x,t) = e^{\Lambda t} u_1(x), \quad 
\overline{\mathfrak{u}}(x,t) = e^{\Lambda t} u_2(x), \quad 
	\mathfrak{v}(x,t) = e^{\Lambda t} v_1(x),  \quad 
	\overline{\mathfrak{v}}(x,t) = e^{\Lambda t} v_2(x),
\end{align*}
where $u_2(x) \neq \bar{u}_1(x)$ and $v_2(x) \neq \bar{v}_1(x)$ if $\Lambda \notin \mathbb{R}$. The normal modes are found from the spectral stability problem, which follows from the linearized equations (\ref{line_10}),
\begin{equation}\label{line_11}
\left\{
\begin{array}{lr}
(~\ i\partial_x+\omega+|V|^2)u_{1}+(1+U\bar{V})v_1+UVv_2=-i\Lambda u_1, \\
(-i\partial_x+\omega+|V|^2)u_{2}+(1+\bar{U}V)v_2+\bar{U}\bar{V}v_1=i\Lambda u_2, \\
(-i\partial_x+\omega+|U|^2)v_{1}+(1+\bar{U}V)u_1+UVu_2=-i\Lambda v_1, \\
(~\ i\partial_x+\omega+|U|^2)v_{2}+(1+U\bar{V})u_2+\bar{U}\bar{V}u_1=i\Lambda v_2.
\end{array}
\right.
\end{equation} 
We define the admissible values of $ \Lambda $ from bounded solutions 
$(u_1, u_2, v_1, v_2) \in L^{\infty}(\mathbb{R},\mathbb{C}^4)$. If the admissible values of $\Lambda$ include points with ${\rm Re}(\Lambda)>0$, then the standing wave of the form (\ref{standing-wave}) is called {\em spectrally unstable}. Otherwise, it is called {\em spectrally stable}.

The central part of the spectral stability theory in the integrable systems 
is the relation between solutions of the linearized system (\ref{line_11}) 
and the squared eigenfunctions satisfying the linear system (\ref{eig_1}). 
This relation for the solitary wave solutions was found by D. Kaup and T. Lakoba in \cite[equation (18)]{KL_1}. The following proposition reproduces this result 
with a straightforward verification included for the sake of completeness.

\begin{proposition}
	\label{prop-squared}
Let $\Psi=(p,q)^{\mathrm{T}}$ be the eigenvector of the linear system
	(\ref{eig_1}) with some $\Omega \in \mathbb{C}$. Then 
	\begin{equation}
\left\{ \begin{array}{l} 
		u_1=\frac{1}{\lambda}q^2+Upq, \\
		u_2=\frac{1}{\lambda}p^2-\bar{U}pq,\\
		v_1=-\lambda q^2-Vpq,\\
		v_2=-\lambda p^2+\bar{V}pq
\end{array} \right.
	\end{equation}
is the solution of the linearized MTM system (\ref{line_11}) with 
\begin{equation}
\label{Lambda-Omega}
\Lambda=2\Omega=\pm {i}\sqrt{P(\lambda)}.
\end{equation}
\end{proposition}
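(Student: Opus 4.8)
The plan is to verify the four linearized equations (\ref{line_11}) by direct substitution of the proposed ansatz, treating the quadratic monomials $p^2$, $pq$, and $q^2$ as the basic objects. First I would write the spatial Lax equation $\Psi' = L(U,V,\lambda)\Psi$ in components as $p' = a p + b q$ and $q' = c p - a q$, where $a,b,c$ are the entries of $L$ read off from its definition (with $a = \tfrac{i}{4}(\lambda^2 - \lambda^{-2})$ once $F(U,V)=0$ forces $|U|^2 = |V|^2$). Differentiating the monomials then yields the closed system
\begin{align*}
(p^2)' &= 2 a\, p^2 + 2 b\, pq, \\
(pq)' &= c\, p^2 + b\, q^2, \\
(q^2)' &= 2 c\, pq - 2 a\, q^2,
\end{align*}
so that every $x$-derivative on the left-hand side of (\ref{line_11}) is again a combination of $p^2$, $pq$, $q^2$. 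Alongside this, I would record the algebraic content of the second equation of (\ref{eig_1}), namely $\Omega p = m p + n q$ and $\Omega q = r p - m q$, where $m,n,r$ are the entries of $M - \tfrac{i}{2}\omega\sigma_3$; equivalently, the pointwise relations $(\Omega - m)p = n q$ and $(\Omega + m)q = r p$, which hold for all $x$.

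Second, I would substitute the ansatz into each of the four equations of (\ref{line_11}), replace the derivatives using the closed system above, and eliminate $U'$ and $V'$ by means of the profile equations (\ref{ini_3}). After this step each equation becomes a purely algebraic identity of the form $A p^2 + B pq + C q^2 = \mp i\Lambda u_i$, with coefficients $A,B,C$ that are explicit rational functions of $\lambda$ built from $U$, $V$, $\omega$. The four equations fall into two analogous pairs (the $u_2,v_2$ equations mirror the $u_1,v_1$ equations with the opposite sign of $\Lambda$), so it is enough to present the computation for $u_1$ and $v_1$ in detail and indicate that the remaining two are verified in the same manner.

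The crux — and the step I expect to be the main obstacle — is that the substitution using only the spatial equation does not close: a residual $p^2$ term survives in the $u_1$ equation (of the form $-\tfrac{\lambda}{2}UV - \tfrac{1}{2\lambda}U^2$ times $p^2$) that has no counterpart in $-i\Lambda u_1$. This is exactly where the temporal Lax equation must be used: the leftover monomials are removed with the relations $(\Omega-m)p^2 = n\,pq$ and $(\Omega+m)q^2 = r\,pq$, obtained by multiplying the two algebraic relations above by $p$ and $q$. Converting the leftover quadratics in this way introduces a single factor of $\Omega$, and matching the outcome against $u_1 = \tfrac{1}{\lambda}q^2 + U pq$ forces the coefficient to be precisely $-2i\Omega$; this is the origin of $\Lambda = 2\Omega$. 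Concretely, one checks that $A p^2 + B pq + C q^2 + 2i\Omega\, u_1$ equals a combination $(\alpha p + \beta q)\big[(\Omega-m)p - n q\big] + (\gamma p + \delta q)\big[(\Omega+m)q - r p\big]$ of the two vanishing residuals, hence is identically zero on solutions of (\ref{eig_1}).

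Finally, once $\Lambda = 2\Omega$ has been established for all four equations, I would invoke Proposition \ref{pro_omega_1}: the characteristic equation $4\Omega^2 + P(\lambda) = 0$ gives $\Omega = \pm\tfrac{i}{2}\sqrt{P(\lambda)}$, whence $\Lambda = 2\Omega = \pm i\sqrt{P(\lambda)}$, which is exactly the relation (\ref{Lambda-Omega}). The only genuine difficulty is the bookkeeping in the third step, where both Lax equations and the profile equations (\ref{ini_3}) must be used simultaneously and the cancellations among the various powers of $\lambda$ tracked carefully; the reduction $V=\bar U$ streamlines these cancellations, since it collapses $|U|^2$, $|V|^2$, and the cross terms $U\bar V$, $UV$ to expressions in $U$ alone.
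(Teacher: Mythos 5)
Your proposal is correct and follows essentially the same route as the paper: a direct verification of (\ref{line_11}) by substituting the quadratic ansatz, closing the derivatives of $p^2$, $pq$, $q^2$ with the spatial Lax equation, eliminating $U'$ and $V'$ via (\ref{ini_3}), using the algebraic (temporal) equation of (\ref{eig_1}) to absorb the residual monomials into $-2i\Omega$ times the ansatz, and finally invoking Proposition \ref{pro_omega_1} for $\Lambda = 2\Omega = \pm i\sqrt{P(\lambda)}$. The paper merely organizes the same algebra differently, first adding and subtracting the two equations of (\ref{eig_1}) so that each of the four linearized equations involves only one of $U$ or $V$ with the matching power of $\lambda$; your identification of the residual $\bigl(-\tfrac{\lambda}{2}UV - \tfrac{1}{2\lambda}U^2\bigr)p^2$ term and its removal via the temporal relation is the same cancellation the paper records in its systems (\ref{337})--(\ref{33ZZ}).
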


\begin{proof}
Let $\Psi=(p,q)^{\mathrm{T}}$ be the eigenvector of the linear system
	(\ref{eig_1}) for some $\Omega \in \mathbb{C}$. By adding and subtracting the two equations, we obtain two linear systems 
	\begin{equation}\label{pq_1}
		\begin{aligned}
			i\left(\begin{array}{c}
				p'
				\\ q'
			\end{array}\right)
			-\left(\begin{array}{cc}
				\frac{\omega}{2}&0 \\
				0&-\frac{\omega}{2}
			\end{array}\right)	\left(\begin{array}{c}
				p
				\\ q
			\end{array}\right)+i\Omega\left(\begin{array}{c}
			p
			\\ q
		\end{array}\right)
			=  \left(\begin{array}{cc}
				\frac{|V|^2}{2}-\frac{\lambda^2}{2}&\lambda \bar{V} \\
				\lambda V &\frac{\lambda^2}{2}-\frac{|V|^2}{2}
			\end{array}\right)	\left(\begin{array}{c}
				p
				\\ q
			\end{array}\right),
		\end{aligned}
	\end{equation}
and 
\begin{equation}\label{pq_2}
	\begin{aligned}
		i\left(\begin{array}{c}
			p'
			\\ q'
		\end{array}\right)
		+\left(\begin{array}{cc}
			\frac{\omega}{2}&0 \\
			0&-\frac{\omega}{2}
		\end{array}\right)	\left(\begin{array}{c}
			p
			\\ q
		\end{array}\right)-i\Omega\left(\begin{array}{c}
		p
		\\ q
	\end{array}\right)
		=  \left(\begin{array}{cc}
			-\frac{|U|^2}{2}+\frac{1}{2\lambda^2}&-\frac{1}{\lambda} \bar{U} \\
			-\frac{1}{\lambda}{U} &\frac{|U|^2}{2}-\frac{1}{2\lambda^2}
		\end{array}\right)	\left(\begin{array}{c}
			p
			\\ q
		\end{array}\right).
	\end{aligned}
\end{equation}
Multiplying the first equations in systems (\ref{pq_1}) and (\ref{pq_2}) by $2p$ and the second equations in systems (\ref{pq_1}) and (\ref{pq_2}) by $2q$ yields
\begin{equation}\label{337}
	\left\{
	\begin{array}{l}
i(p^2)'-\omega p^2-|V|^2p^2+2i\Omega p^2=-\lambda^2p^2+2\lambda\bar{V}pq,\\
i(q^2)'+\omega q^2+|V|^2q^2+2i\Omega q^2=\lambda^2q^2+2\lambda Vpq, \\
i(p^2)'+\omega p^2+|U|^2p^2-2i\Omega p^2=\frac{1}{\lambda^2}p^2-\frac{2}{\lambda}\bar{U}pq,\\
i(q^2)'-\omega q^2-|U|^2q^2-2i\Omega q^2=-\frac{1}{\lambda^2}q^2-\frac{2}{\lambda}Upq.
	\end{array}
	\right.
\end{equation} 
Multiplying the first equations in systems (\ref{pq_1}) and (\ref{pq_2}) by $q$
and the second equations in systems (\ref{pq_1}) and (\ref{pq_2}) by $p$, and adding them together, yields
\begin{equation}\label{pqq_1}
	\left\{
\begin{array}{l}
i(pq)'=-2i\Omega pq+\lambda(\bar{V}q^2+ Vp^2), \\
i(pq)'=2i\Omega pq-\frac{1}{\lambda}({\bar{U}}q^2.
+{U}p^2).
	\end{array}
\right.
\end{equation}
By using (\ref{ini_3}) and (\ref{pqq_1}), we obtain 
\begin{equation}\label{33ZZ}
	\left\{
	\begin{array}{lr}
	i(Upq)'+\omega(Upq)+|V|^2(Upq)+2i\Omega Upq 
	=\lambda U\bar{V}q^2+\lambda UVp^2-Vpq,\\
	\vspace{0.2cm}
	i(\bar{U}pq)'-\omega(\bar{U}pq)-|V|^2(\bar{U}pq) +2i\Omega \bar{U}pq 
	=\lambda \bar{U}\bar{V}q^2+\lambda \bar{U}Vp^2+\bar{V}pq,\\
	\vspace{0.2cm}
	i(Vpq)'-\omega(Vpq)-|U|^2(Vpq)-2i\Omega Vpq 
	=-(\frac{V\bar{U}}{\lambda}q^2
	+\frac{VU}{\lambda}p^2)+Upq,\\
	\vspace{0.2cm}
	i(\bar{V}pq)'+\omega(\bar{V}pq)+|U|^2(\bar{V}pq)-2i\Omega \bar{V}pq
	=-(\frac{\bar{V}\bar{U}}{\lambda}q^2
	+\frac{\bar{V}U}{\lambda}p^2)-\bar{U}pq.
	\end{array}
	\right.
\end{equation} 
We are now ready to confirm all four equations of the linearized MTM system (\ref{line_11}) from four equations of (\ref{337}) and (\ref{33ZZ}):
\begin{equation*}
	\begin{aligned}
(i\partial_x+\omega+|V|^2) (\frac{1}{\lambda}q^2+Upq)&=-\frac{2i\Omega q^2}{\lambda}-2i\Omega Upq+\lambda q^2+ Vpq+\lambda U\bar{V}q^2+\lambda UVp^2\\
		&=-2i\Omega u_1-v_1-U\bar{V}v_1-UVv_2,
	\end{aligned}
\end{equation*}
\begin{equation*}
	\begin{aligned}
(i\partial_x-\omega-|V|^2) (\frac{1}{\lambda}p^2-\bar{U}pq)&=-\frac{2i\Omega p^2}{\lambda}+2i\Omega \bar{U}pq
		-\lambda p^2+ \bar{V}pq-\lambda \bar{U}\bar{V}q^2-\lambda \bar{U}Vp^2\\
		&=-2i\Omega u_2+v_2+\bar{U}\bar{V}v_1+\bar{U}Vv_2,
	\end{aligned}
\end{equation*}
\begin{equation*}
	\begin{aligned}
(i\partial_x-\omega-|U|^2) (\lambda q^2+Vpq)&=2i\lambda\Omega q^2+2i\Omega Vpq
		-\frac{1}{\lambda}q^2-Upq-(\frac{V\bar{U}}{\lambda}q^2
		+\frac{VU}{\lambda}p^2)\\
		&=-2i\Omega v_1-u_1-V\bar{U}u_1-VU u_2,		
	\end{aligned}
\end{equation*}
\begin{equation*}
	\begin{aligned}
(i\partial_x+\omega+|U|^2) (-\lambda p^2+\bar{V}pq)&=-2i\lambda\Omega p^2+2i\Omega \bar{V}pq
		-\frac{1}{\lambda}p^2+\bar{U}pq-(\frac{\bar{V}\bar{U}}{\lambda}q^2
		+\frac{\bar{V}U}{\lambda}p^2)\\
		&=2i\Omega v_2-u_2-\bar{V}\bar{U}u_1-\bar{V}Uu_2.		
	\end{aligned}
\end{equation*}
This yields (\ref{line_11}) with $\Lambda=2\Omega$. Then, $2\Omega = \pm i \sqrt{P(\lambda)}$ follows from (\ref{omega_1}).
\end{proof}

\section{Properties of eigenvalues of Lax and stability spectra}
\label{sec_stability}

Here we consider solutions of the spectral problem given by the first equation of system (\ref{eig_1}). If $\Psi = (p,q)^T$, then the spectral problem 
can be written explicitly as 
\begin{equation}
\label{KN}
\left\{ \begin{array}{l}
p'(x) = \frac{i}{4}(\lambda^2 - \frac{1}{\lambda^2}) p
-\frac{i\lambda}{2}\bar{V}(x) q
+\frac{i}{2\lambda}\bar{U}(x) q
+\frac{i}{4}(|U(x)|^2-|V(x)|^2)p, \\
q'(x) = -\frac{i}{4}(\lambda^2-\frac{1}{\lambda^2})q
-\frac{i\lambda}{2}V(x) p
+\frac{i}{2\lambda}U(x) p
-\frac{i}{4}(|U(x)|^2-|V(x)|^2)q. \end{array} \right.
\end{equation}
Recall that $(U,V)$ are bounded and that $V = \bar{U}$. The Lax spectrum of the spectral 
problem (\ref{KN}) is defined as the set of admissible values of $\lambda$ 
for which $\Psi = (p,q)^T$ is bounded. By Proposition \ref{prop-squared}, 
this solution of the spectral problem (\ref{KN}) defines a bounded solution $(u_1,u_2,v_1,v_2)$ of the spectral 
stability problem (\ref{line_11}) and hence the stability spectrum of the standing periodic waves. 

If $\zeta(x) = \zeta(x+L)$ is the $L$-periodic solution of system (\ref{hamiton_2}), then $\theta(x)$ is either $L$-periodic or monotonically increasing, see the phase portraits in the proof of Proposition \ref{pro_40}. 
The profiles $(U,V)$ in (\ref{periodic-waves}) are $L$-periodic in the former case and $L$-antiperiodic in the latter case. In both cases, 
the Floquet theorem can be used either with the period $L$ or with the period $2L$ to get all bounded solutions $(p,q)$ of the spectral problem (\ref{KN}) with $\lambda$ in the Lax spectrum. Each admissible value of $\lambda$ in the Lax spectrum will be referred to as {\em an eigenvalue} for simplicity. 

The following proposition specifies symmetries of eigenvalues in the Lax spectrum. 

\begin{proposition}
	\label{proposition_11}
Let $(U,V)$ be a solution of (\ref{ini_3}) and assume that $\lambda\in \mathbb{C} \backslash (\mathbb{R} \cup i \mathbb{R})$ is an eigenvalue of the spectral problem (\ref{KN}) with the eigenvector $\Psi=(p,q)^T$. Then
\begin{itemize}
	\item $-\lambda$ is also an eigenvalue with the eigenvector $\Psi=(p,-q)^T$. 
	
\item $\bar{\lambda}$ is also an eigenvalue with the eigenvector $\Psi = (\bar{q},-\bar{p})^T$. 
	
\item  $-\bar{\lambda}$ is also an eigenvalue with the eigenvector $\Psi = (\bar{q},\bar{p})^T$. 
\end{itemize}
\noindent If $V = \bar{U}$, then $\frac{1}{\lambda}$ is also an eigenvalue with the eigenvector $\Psi = (q,-p)^T$. 
\end{proposition}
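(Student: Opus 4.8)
The plan is to recast the spectral problem (\ref{KN}) in the compact matrix form $\Psi' = L(U,V,\lambda)\Psi$, where $L(U,V,\lambda)$ is exactly the first Lax matrix of (\ref{lax_1}), and then to read off how $L$ transforms under each reflection of the spectral parameter. Each transformation will be realized as a constant matrix conjugation, which instantly produces the claimed solution together with its eigenvector. Writing $\sigma_3 = {\rm diag}(1,-1)$ and $J := \left(\begin{smallmatrix} 0 & 1 \\ -1 & 0 \end{smallmatrix}\right)$ (so that $J^2 = -I$), the identities I intend to verify by direct inspection of the entries of $L$ are
\begin{equation*}
L(U,V,-\lambda) = \sigma_3\, L(U,V,\lambda)\, \sigma_3, \qquad
\overline{L(U,V,\lambda)} = J\, L(U,V,\bar\lambda)\, J^{-1},
\end{equation*}
together with, under the reduction $V = \bar U$,
\begin{equation*}
L(U,\bar U, \tfrac{1}{\lambda}) = J\, L(U,\bar U,\lambda)\, J^{-1}.
\end{equation*}

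The first identity holds because the diagonal entries of $L$ depend on $\lambda$ only through $\lambda^2 - \lambda^{-2}$, which is even in $\lambda$, while the off-diagonal entries are odd in $\lambda$; conjugation by $\sigma_3$ reverses exactly the off-diagonal signs. Hence, if $\Psi = (p,q)^T$ solves $\Psi' = L(U,V,\lambda)\Psi$, then $\sigma_3\Psi = (p,-q)^T$ solves the problem with $-\lambda$, which is the first symmetry. For the second identity I would conjugate the equation to obtain $\bar\Psi' = \overline{L(U,V,\lambda)}\,\bar\Psi$ and then use $\overline{L(U,V,\lambda)} = J L(U,V,\bar\lambda) J^{-1}$ to rewrite it as $(J^{-1}\bar\Psi)' = L(U,V,\bar\lambda)(J^{-1}\bar\Psi)$; verifying the identity reduces to checking that complex conjugation sends the diagonal entry $\frac{i}{4}(\lambda^2 - \lambda^{-2}) + \frac{i}{4}(|U|^2 - |V|^2)$ to its negative evaluated at $\bar\lambda$ and swaps the two off-diagonal entries up to a sign. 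Since $J^{-1}\bar\Psi = (-\bar q, \bar p)^T$ agrees with $(\bar q, -\bar p)^T$ up to an overall sign, this gives the second symmetry. The third symmetry, $-\bar\lambda$ with eigenvector $(\bar q, \bar p)^T$, then follows with no further computation by composing the first two transformations, i.e. applying $\sigma_3$ to the conjugated eigenvector $(\bar q,-\bar p)^T$.

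The fourth symmetry is the one that genuinely invokes the reduction $V = \bar U$, and is where I expect to be most careful. Under $V = \bar U$ the diagonal potential term $\frac{i}{4}(|U|^2 - |V|^2)$ vanishes and the identity $\bar V = U$ holds, so that the inversion $\lambda \mapsto \lambda^{-1}$ sends $\frac{i}{4}(\lambda^2 - \lambda^{-2})$ to its negative and interchanges the two off-diagonal coefficients $-\frac{i\lambda}{2}\bar V + \frac{i}{2\lambda}\bar U$ and $-\frac{i\lambda}{2}V + \frac{i}{2\lambda}U$ up to a sign; these are precisely the relations encoded in $L(U,\bar U, \lambda^{-1}) = J L(U,\bar U,\lambda) J^{-1}$, and they fail without the reduction. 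Granting this identity, $J\Psi = (q,-p)^T$ solves the spectral problem with $\lambda^{-1}$, which is the final claim.

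Finally, I would note that every one of these operations is either multiplication by a constant matrix or complex conjugation, hence preserves boundedness on $\mathbb{R}$; thus each transformed eigenvector is again a bounded and nonzero solution, so that the reflected values of $\lambda$ genuinely lie in the Lax spectrum. The hypothesis $\lambda \in \mathbb{C}\setminus(\mathbb{R}\cup i\mathbb{R})$ is not needed for the algebra itself — the construction is valid for every $\lambda \neq 0$ — but it ensures that the four points $\pm\lambda, \pm\bar\lambda$ are pairwise distinct, so that the symmetries produce genuinely new eigenvalues. The only real hazard is careful sign tracking in the conjugation identity and in the $\lambda \mapsto \lambda^{-1}$ coefficient interchange.
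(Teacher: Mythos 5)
Your proof is correct and follows essentially the same route as the paper: each symmetry is verified by substituting the transformed spectral parameter and eigenvector into the linear system (\ref{KN}), which you merely package more compactly as the conjugation identities $L(-\lambda)=\sigma_3 L(\lambda)\sigma_3$, $\overline{L(\lambda)}=J L(\bar\lambda)J^{-1}$, and (for $V=\bar U$) $L(\lambda^{-1})=J L(\lambda)J^{-1}$, all of which check out entrywise, with the sign discrepancy $J^{-1}\bar\Psi=-(\bar q,-\bar p)^T$ being harmless. Your closing remarks on boundedness and on the role of the hypothesis $\lambda\notin\mathbb{R}\cup i\mathbb{R}$ are accurate additions that the paper leaves implicit.
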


\begin{proof}
Transformation $\lambda \to -\lambda$ and $(p,q) \to (p,-q)$ leaves system (\ref{KN}) invariant, which yields the first assertion.

Taking complex conjugate of system (\ref{KN}) yields
$$
\left\{ \begin{array}{l}
\bar{p}'(x) =-\frac{i}{4}(\bar{\lambda}^2
-\frac{1}{\bar{\lambda}^2})\bar{p}
+\frac{i\bar{\lambda}}{2} V \bar{q}
-\frac{i}{2\bar{\lambda}} U \bar{q}
-\frac{i}{4}(|U|^2-|V|^2)\bar{p}, \\
\bar{q}'(x) =\frac{i}{4}(\bar{\lambda}^2
-\frac{1}{\bar{\lambda}^2})\bar{q}
+\frac{i\bar{\lambda}}{2}\bar{V} \bar{p}
-\frac{i}{2\bar{\lambda}}\bar{U} \bar{p}
+\frac{i}{4}(|U|^2-|V|^2)\bar{q}. \end{array} \right.
$$	
Hence $(\bar{q},-\bar{p})^T$ is also a solution of   (\ref{KN}) with  $\lambda$ replaced by $\bar{\lambda}$, which yields the second assertion. 

The third assertion is a composition of the first two. 
	
If $V=\bar{U}$, then using (\ref{KN}) and replacing $\lambda$ with $\frac{1}{\lambda}$ yields
	$$
	\left\{ \begin{array}{l} q'(x) =\frac{i}{4}(\lambda^2-\frac{1}{\lambda^2})q
	-\frac{i}{2\lambda} V p +\frac{i\lambda}{2} U p
	=\frac{i}{4}(\lambda^2-\frac{1}{\lambda^2})q
	+\frac{i\lambda}{2}\bar{V} p
	-\frac{i}{2\lambda}\bar{U} p, \\
p'(x) =-\frac{i}{4}(\lambda^2-\frac{1}{\lambda^2})p
	-\frac{i}{2\lambda}\bar{V} q
	+\frac{i\lambda}{2}\bar{U} q
	=-\frac{i}{4}(\lambda^2-\frac{1}{\lambda^2})p
	+\frac{i\lambda}{2} V q
	-\frac{i}{2\lambda} U q. \end{array} \right.
	$$
Hence $(q,-p)^T$ is also a solution of (\ref{KN}) with  $\lambda$ replaced by $\frac{1}{\lambda}$, which yields the final assertion. 
\end{proof}

\begin{corollary}
	\label{cor-1}	
	If  $\lambda\in \mathbb{R}\backslash \{0 \} $ is an eigenvalue, then it is at least double with two eigenvectors $\Psi = (p,q)^T$ and $\Psi = (\bar{q},-\bar{p})^T$.
\end{corollary}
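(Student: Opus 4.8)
The plan is to exploit the conjugation symmetry recorded in Proposition \ref{proposition_11}, which collapses onto a single eigenvalue when $\lambda$ is real, and then to verify that the two resulting eigenvectors are genuinely independent by computing their Wronskian.

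First I would observe that for $\lambda \in \R \setminus \{0\}$ we have $\bar{\lambda} = \lambda$. The conjugation computation carried out in the proof of Proposition \ref{proposition_11}---taking the complex conjugate of the spectral problem (\ref{KN})---does not in fact use the hypothesis $\lambda \notin \R \cup i\R$; it shows in general that $(\bar{q},-\bar{p})^T$ solves (\ref{KN}) with $\lambda$ replaced by $\bar{\lambda}$. Applying this with $\bar{\lambda} = \lambda$ produces a second solution $(\bar{q},-\bar{p})^T$ of (\ref{KN}) at the same value $\lambda$. Moreover, since $(p,q)^T$ is bounded on $\R$ (being an eigenvector for $\lambda$ in the Lax spectrum), its conjugate $(\bar{q},-\bar{p})^T$ is bounded as well.

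The remaining point is linear independence. The coefficient matrix of the first-order system (\ref{KN}) is trace-free, so by Liouville's formula the Wronskian of any two solutions is constant in $x$. For the pair $(p,q)^T$ and $(\bar{q},-\bar{p})^T$ I would compute
\begin{equation*}
W = \det \left( \begin{matrix} p & \bar{q} \\ q & -\bar{p} \end{matrix} \right) = -|p|^2 - |q|^2 .
\end{equation*}
Since $(p,q)^T$ is a nonzero solution of a linear ODE, it vanishes nowhere by uniqueness, so $W = -(|p|^2 + |q|^2)$ is a strictly negative constant. Hence the two bounded solutions are linearly independent, and the two-dimensional solution space of (\ref{KN}) at $\lambda$ consists entirely of bounded functions, which is precisely the assertion that $\lambda$ is at least a double eigenvalue.

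I do not expect any serious obstacle here. The only two things to watch are, first, that the conjugation symmetry is applied directly at real $\lambda$ rather than quoted verbatim from Proposition \ref{proposition_11} (whose statement excludes the real axis), and second, that the constancy of $W$---which rests on the vanishing of the trace of the coefficient matrix in (\ref{KN})---is invoked to rule out the degenerate possibility that the two eigenvectors are proportional.
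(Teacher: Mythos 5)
Your proposal is correct and follows essentially the same route as the paper: both arguments rest on the observation that at real $\lambda$ the conjugation symmetry produces the second eigenvector $(\bar{q},-\bar{p})^T$, and that this vector can never be proportional to $(p,q)^T$. The paper establishes non-proportionality by assuming $\Psi = c_1(\bar{q},-\bar{p})^T$ and deriving the contradiction $|c_1|^2=-1$, while you compute the determinant $-|p|^2-|q|^2\neq 0$ directly; these are the same computation in different clothing (and the appeal to Liouville's formula is not even needed, since nonvanishing of the Wronskian at a single point already gives independence).
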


\begin{proof}
	If $\lambda\in\mathbb{R}\backslash \{0 \}$ is a simple 
	eigenvalue, then the symmetry in Proposition \ref{proposition_11} 
	implies that there is a constant $c_1\in\mathbb{C}$ such that 
	$$
	\left(\begin{array}{c}
	p         \\
	q  \\
	\end{array}\right)=c_1\left(\begin{array}{c}
	\bar{q}         \\
	-\bar{p}  \\
	\end{array}\right),
	$$ 
	which yields $p=c_1\bar{q}$, $\bar{q}=-\bar{c}_1p$, and hence $|c_1|^2=-1$, a contradiction. Therefore, $\lambda\in\mathbb{R}\backslash \{0 \}$ is at least a double eigenvalue.
\end{proof}

\begin{corollary}
	\label{cor-2}
If  $\lambda\in i\mathbb{R}\backslash \{0 \} $ is an eigenvalue, then 
it is simple if and only if the eigenvector $\Psi= (p,q)^T$ satisfies $p = c_2 \bar{q}$ for some constant $c_2 \in \mathbb{C}$ such that $|c_2|=1$.
\end{corollary}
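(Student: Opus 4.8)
The plan is to exploit the symmetry of the spectral problem (\ref{KN}) that fixes the imaginary axis, turning the multiplicity question into a statement about an antilinear involution on the eigenspace. Denote by $E_\lambda$ the complex space of bounded solutions $\Psi = (p,q)^T$ of (\ref{KN}) at the given $\lambda$; since (\ref{KN}) is a first-order $2\times 2$ system, $1 \le \dim E_\lambda \le 2$, and \emph{simple} means $\dim E_\lambda = 1$. The third symmetry recorded in Proposition \ref{proposition_11} shows that whenever $(p,q)^T$ solves (\ref{KN}) at $\lambda$, the vector $(\bar q,\bar p)^T$ solves (\ref{KN}) at $-\bar\lambda$, and this computation does not actually use $\lambda \notin \mathbb{R}\cup i\mathbb{R}$. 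For $\lambda \in i\mathbb{R}\setminus\{0\}$ one has $-\bar\lambda = \lambda$, so the map $T : E_\lambda \to E_\lambda$, $T(p,q)^T := (\bar q,\bar p)^T$, is a well-defined antilinear operator preserving boundedness, and a direct check gives $T^2 = \mathrm{id}$, i.e. $T$ is an antilinear involution.

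For the forward implication I would suppose $\lambda$ is simple, so $E_\lambda = \mathrm{span}\{\Psi\}$ with $\Psi = (p,q)^T \ne 0$. Then $T\Psi \in E_\lambda$ and $T\Psi \ne 0$, hence $T\Psi = c\,\Psi$ for some $c \in \mathbb{C}\setminus\{0\}$. Applying $T$ once more and using antilinearity together with $T^2 = \mathrm{id}$ gives $\Psi = T(c\Psi) = \bar c\, T\Psi = |c|^2\,\Psi$, so $|c| = 1$. Reading $T\Psi = c\Psi$ componentwise yields $\bar q = c\,p$, equivalently $p = \bar c\,\bar q$, which is the claimed relation with $c_2 := \bar c$ and $|c_2| = 1$; any other eigenvector, being a scalar multiple of $\Psi$, satisfies the same relation with a unimodular constant.

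For the converse I read the condition as: every eigenvector $\Psi = (p,q)^T \in E_\lambda$ satisfies $p = c_2\bar q$ with $|c_2|=1$, which by the computation above is exactly the statement that every $\Psi \in E_\lambda$ is a $T$-eigenvector, $T\Psi = \bar c_2\,\Psi$. I will show that this forces $\dim E_\lambda = 1$. If instead $\dim E_\lambda = 2$, I pick linearly independent $\Psi_1,\Psi_2 \in E_\lambda$ with $T\Psi_j = c_j\Psi_j$ and $|c_j|=1$. Testing the vectors $\Psi_1+\Psi_2$ and $\Psi_1+i\Psi_2$ against the $T$-eigenvector requirement, and using antilinearity in the form $T(i\Psi_2) = -i\,c_2\Psi_2$, forces $c_1 = c_2$ from the first combination and $c_2 = -c_1$ from the second, whence $c_1 = 0$, contradicting $|c_1|=1$. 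Hence $\dim E_\lambda = 1$ and $\lambda$ is simple.

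The routine parts are the componentwise bookkeeping and the verification that $T$ maps (\ref{KN}) to itself on $i\mathbb{R}$. The main subtlety, and the step I would be most careful about, is the precise reading of ``the eigenvector'' in the statement: the forward direction produces a genuinely unique (up to scalar) eigenvector obeying the relation, whereas in the double case there always exist \emph{some} eigenvectors obeying $p = \bar q$, namely the fixed points of $T$, so the equivalence can only hold when the relation is demanded of \emph{every} eigenvector. Making this quantifier explicit, via the impossibility of an antilinear involution on a space of dimension $\ge 2$ having every vector as an eigenvector, is the crux of the argument, and it parallels the $S^2 = -\mathrm{id}$ obstruction used for real $\lambda$ in Corollary \ref{cor-1}.
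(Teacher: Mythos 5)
Your argument is correct, and its forward half is exactly the paper's: the symmetry $\lambda\mapsto-\bar\lambda$, $(p,q)^T\mapsto(\bar q,\bar p)^T$ of Proposition \ref{proposition_11} fixes $\lambda\in i\mathbb{R}\setminus\{0\}$, and simplicity forces $\Psi$ to be an eigenvector of this antilinear map with $|c_2|^2=1$ following from $T^2=\mathrm{id}$. Where you go beyond the paper is the converse. The paper's proof establishes only the implication ``simple $\Rightarrow$ $p=c_2\bar q$ with $|c_2|=1$'' and then declares that ``this gives the criterion,'' leaving the reverse implication unargued; your observation that the criterion is vacuous in the double case unless the relation is demanded of \emph{every} eigenvector is exactly right, since an antilinear involution on a two-dimensional space always has a nonzero real form of fixed vectors, i.e.\ eigenvectors with $p=\bar q$. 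Your contradiction via $\Psi_1+\Psi_2$ and $\Psi_1+i\Psi_2$ (yielding $c_1=c_2$ and $c_1=-c_2$ simultaneously) is a clean and correct way to close this gap, and the structural contrast you draw with Corollary \ref{cor-1} — $T^2=\mathrm{id}$ on $i\mathbb{R}$ versus $S^2=-\mathrm{id}$ on $\mathbb{R}$, the latter forbidding one-dimensional invariant subspaces — is the right way to see why the real and imaginary axes behave differently. In short: same mechanism as the paper, but your write-up actually proves the ``if'' direction and makes precise the quantifier that the paper's phrasing leaves ambiguous.
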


\begin{proof}
	If $\lambda\in i\mathbb{R}\backslash \{0 \} $ is a simple eigenvalue, then the symmetry in Proposition \ref{proposition_11} implies that there is a constant $c_2 \in \mathbb{C}$ such that 
$$
\left(\begin{array}{c}
p         \\
q  \\
\end{array}\right) = c_2 
\left(\begin{array}{c}
\bar{q}         \\
\bar{p}  \\
\end{array}\right),
$$
which yields $p=c_2 \bar{q}$, $\bar{q}= \bar{c}_2 p$, and hence $|c_2|^2 = 1$. This gives the criterion for the eigenvalue $\lambda\in i\mathbb{R}\backslash \{0 \}$ to be simple. 
\end{proof}

Next we relate the Lax spectrum of the spectral problem (\ref{KN}) and the stability spectrum of the linearized MTM system (\ref{line_11}) 
by using the relation 
\begin{equation}
\label{stab-rel}
\Lambda = \pm i \sqrt{P(\lambda)}. 
\end{equation}
It follows from (\ref{polynomial-P}) that $P(\lambda)$ inherits the symmetry of Proposition \ref{proposition_11} since 
$V = \bar{U}$. If $\lambda \in \mathbb{C}$ is a root of $P(\lambda)$, so are $-\lambda$, $\lambda^{-1}$, and $-\lambda^{-1}$. Hence, let us introduce  $\lambda_1, \lambda_2 \in \mathbb{C}$ 
and factorize $P(\lambda)$ by its roots $\{ \pm \lambda_1, \pm \lambda_2, \pm \lambda_1^{-1}, \pm \lambda_2^{-2}\}$ 
\begin{equation}
\label{P-factor}
P(\lambda)=\frac{1}{4\lambda^4}(\lambda^2-\lambda_1^2)
(\lambda^2-\lambda_1^{-2})
(\lambda^2- \lambda_2^2)(\lambda^2 - \lambda_2^{-2}).
\end{equation}
The correspondence between (\ref{polynomial-P}) and (\ref{P-factor}) implies the relations between parameters $(b,\omega) \in \R^2$ and $(\lambda_1,\lambda_2) \in \mathbb{C}^2$:
\begin{equation}
\label{bb1}
\left\{ \begin{array}{r} 
4 \omega = \lambda_1^2 + \lambda_1^{-2} + \lambda_2^2 + \lambda_2^{-2}, \\
4 \omega^2 - 4 b = (\lambda_1^2 + \lambda_1^{-2}) (\lambda_2^2 + \lambda_2^{-2}).
\end{array} \right. 
\end{equation}

\begin{remark}
	If $\lambda_1 \in \mathbb{C} \backslash (\mathbb{R} \cup i \mathbb{R} \cup \mathbb{S}^1)$, then $\lambda_2 = \bar{\lambda}_1$ by the symmetry of Proposition \ref{proposition_11}. Also if $\lambda_1 \in \mathbb{R}$ such that $|\lambda_1| \neq 1$, then $\lambda_2 = \lambda_1$ by the symmetry of Corollary \ref{cor-1}. If either $\lambda_1 \in i \R$ or $\lambda_1 \in \mathbb{S}^1$, then $\lambda_2$ may be unrelated to $\lambda_1$. Further details on the distribution of roots of $P(\lambda)$ in relation to the periodic solutions of system (\ref{ini_3}) with $V = \bar{U}$ and parameters $(b,\omega) \in \R^2$ will be given in Section \ref{sec_periodic_1}.
\end{remark}

The following three propositions state some general results on the stability spectrum $\Lambda$ in (\ref{stab-rel}) from the roots of $P(\lambda)$ in (\ref{P-factor}) and the location of the Lax spectrum of $\lambda$. 

\begin{proposition}
	\label{prop-real}
	Assume that $\lambda_1,\lambda_2 \notin \R$ in (\ref{P-factor}). If $\lambda\in\mathbb{R}$, then $\Lambda\in i\mathbb{R}$. 
\end{proposition}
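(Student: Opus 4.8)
The plan is to reduce the claim $\Lambda \in i\mathbb{R}$ to the real-nonnegativity of $P(\lambda)$ on the real axis. Indeed, by the relation (\ref{stab-rel}), once we know that $P(\lambda) \in \mathbb{R}$ and $P(\lambda) \geq 0$ for $\lambda \in \mathbb{R}$, then $\sqrt{P(\lambda)} \in \mathbb{R}$ and hence $\Lambda = \pm i \sqrt{P(\lambda)} \in i\mathbb{R}$. Since $\omega, b \in \mathbb{R}$, it is immediate from (\ref{polynomial-P}) that $P(\lambda) \in \mathbb{R}$ whenever $\lambda \in \mathbb{R}$, so the entire burden is to prove $P(\lambda) \geq 0$ under the hypothesis $\lambda_1,\lambda_2 \notin \mathbb{R}$.

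First I would rewrite the factorization (\ref{P-factor}) in a form adapted to the real axis. Grouping the factors as $(\lambda^2 - \lambda_1^2)(\lambda^2 - \lambda_1^{-2}) = \lambda^4 - a_1 \lambda^2 + 1$ and $(\lambda^2 - \lambda_2^2)(\lambda^2 - \lambda_2^{-2}) = \lambda^4 - a_2 \lambda^2 + 1$ with $a_j := \lambda_j^2 + \lambda_j^{-2}$, and dividing each group by $\lambda^2$, I obtain
\begin{equation*}
P(\lambda) = \frac{1}{4}(s - a_1)(s - a_2), \qquad s := \lambda^2 + \frac{1}{\lambda^2}.
\end{equation*}
For $\lambda \in \mathbb{R}\setminus\{0\}$ the quantity $s$ is real and satisfies $s \geq 2$ (from $\lambda^2 + \lambda^{-2} \geq 2\sqrt{\lambda^2 \cdot \lambda^{-2}}$), with equality only at $\lambda = \pm 1$. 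This is the key structural observation: $P$ restricted to the real axis is a product of the two factors $s - a_1$ and $s - a_2$, each affine in $s$, evaluated on the half-line $s \in [2,\infty)$.

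Next I would determine the possible locations of $a_1,a_2$. From the correspondence (\ref{bb1}) one has $a_1 + a_2 = 4\omega$ and $a_1 a_2 = 4\omega^2 - 4b$, so $a_1,a_2$ are the two roots of a quadratic with real coefficients; hence they are either both real or form a complex-conjugate pair. The hypothesis enters through the map $\lambda_j \mapsto a_j = \lambda_j^2 + \lambda_j^{-2}$: for a real number $a_j$ the preimage $\lambda_j$ lies in $\mathbb{R}\setminus\{0\}$ precisely when $a_j \geq 2$ (since $\lambda_j^2$ solves $w^2 - a_j w + 1 = 0$, whose roots are real positive exactly for $a_j \geq 2$, real negative for $a_j \leq -2$, and on the unit circle for $|a_j| < 2$). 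Therefore $\lambda_j \notin \mathbb{R}$ forces $a_j$ to be either non-real or real with $a_j < 2$.

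The proof then splits into the two exhaustive cases. If $a_1,a_2$ are both real, the hypothesis gives $a_1 < 2$ and $a_2 < 2$, so for $s \geq 2$ both factors $s - a_1$ and $s - a_2$ are strictly positive and $P(\lambda) > 0$. If instead $a_2 = \bar{a}_1$ is a genuine complex-conjugate pair, then for real $s$ we get $P(\lambda) = \frac{1}{4}(s - a_1)(s - \bar{a}_1) = \frac{1}{4}|s - a_1|^2 \geq 0$. In either case $P(\lambda) \geq 0$, which completes the reduction. I expect the main obstacle to be the careful verification of the correspondence $\lambda_j \in \mathbb{R} \Leftrightarrow a_j \in [2,\infty)$ together with the dichotomy that a non-real $a_j$ can only arise as part of a conjugate pair; once these are in place, the positivity of $P$ on $[2,\infty)$ is automatic. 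A minor point to track is the endpoint $s = 2$ (that is, $\lambda = \pm 1$), which is harmless because the hypothesis makes $a_j < 2$ strict in the real case.
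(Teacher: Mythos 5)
Your proof is correct. It shares the paper's overall reduction — show that $P(\lambda)$ is real and nonnegative on $\mathbb{R}$, then conclude from $\Lambda = \pm i \sqrt{P(\lambda)}$ — but the way you establish positivity is genuinely different. The paper's argument is soft: since none of the roots $\pm\lambda_1,\pm\lambda_2,\pm\lambda_1^{-1},\pm\lambda_2^{-1}$ is real, $P$ has no zeros on $\mathbb{R}\setminus\{0\}$, and since the dominant behaviour $P(\lambda)\sim \frac14\lambda^4$ as $|\lambda|\to\infty$ (and $P(\lambda)\sim \frac{1}{4\lambda^4}$ as $\lambda\to 0$) is positive, continuity forces $P>0$ on each component of $\mathbb{R}\setminus\{0\}$. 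You instead compute explicitly in the variable $s=\lambda^2+\lambda^{-2}\geq 2$, writing $P=\frac14(s-a_1)(s-a_2)$ with $a_j=\lambda_j^2+\lambda_j^{-2}$, and run a case analysis: either $a_1,a_2$ are both real and the hypothesis forces $a_j<2$ (via the correct equivalence $\lambda_j\in\mathbb{R}\setminus\{0\}\Leftrightarrow a_j\geq 2$), or they are a conjugate pair and the product is $\frac14|s-a_1|^2$. Both arguments are valid; yours is longer but makes explicit exactly how the hypothesis $\lambda_1,\lambda_2\notin\mathbb{R}$ enters (it excludes $a_j\in[2,\infty)$, which is precisely the range of $s$ on the real axis), and it yields strict positivity with no appeal to a connectedness/sign argument. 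The paper's version is shorter and transfers immediately to similar propositions (e.g.\ the first case of Proposition \ref{prop-imag}), which is presumably why the authors chose it.
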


\begin{proof}
If $\lambda_1,\lambda_2 \notin \R$, then $P(\lambda)$ has no zeros for $\lambda \in \R$ and $P(\lambda) > 0$ for every $\lambda\in \mathbb{R}$ from the dominant term $P(\lambda) \sim \frac{1}{4} \lambda^4$ as $|\lambda| \to \infty$.
It follows from (\ref{stab-rel}) that $\Lambda \in i \R$ for $P(\lambda) > 0$. 
\end{proof}

\begin{proposition}
	\label{prop-imag}
	If $\lambda\in i\mathbb{R}$, then $\Lambda \in i\mathbb{R}$, provided the following conditions on the roots $\lambda_1$, $\lambda_2$ in (\ref{P-factor}) are satisfied:
	\begin{itemize}
		\item $\lambda_1,\lambda_2 \in \mathbb{C} \backslash (\mathbb{R} \cup i \mathbb{R})$.
		\item $\lambda_1 \in \mathbb{S}^1$ and $\lambda_2 = i \beta_2$ with $\beta_2 \geq 1$
under further restriction:
		\begin{equation}
		\label{restriction-1}
			{\rm Im(\lambda)\in(-\infty,-\beta_2]\cup[-\beta_2^{-1}, \beta_2^{-1}]\cup[\beta_2, \infty)}.
		\end{equation}
		\item $\lambda_1 = i \beta_1$ and $\lambda_2 = i \beta_2$ with $\beta_1 \geq \beta_2 \geq 1$ under further restriction:
		\begin{equation}
				\label{restriction-2}
			{\rm Im(\lambda)\in(-\infty,-\beta_1]
				\cup[-\beta_2, -\beta_2^{-1}]
				\cup[-\beta_1^{-1},\beta_1^{-1}]
				\cup[\beta_2^{-1}, \beta_2]\cup[\beta_1, \infty)}.
		\end{equation}
	\end{itemize}	
\end{proposition}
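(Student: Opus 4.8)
The plan is to reduce the entire statement to the sign of $P(\lambda)$ on the imaginary axis. From the relation (\ref{stab-rel}), $\Lambda = \pm i \sqrt{P(\lambda)}$, so $\Lambda \in i\mathbb{R}$ holds precisely when $P(\lambda)$ is real and nonnegative. Thus each of the three hypotheses needs only the claim $P(\lambda) \geq 0$ for the admissible $\lambda \in i\mathbb{R}$. To use the factorization (\ref{P-factor}), I would substitute $\lambda = is$ with $s = \mathrm{Im}(\lambda) \in \mathbb{R}\setminus\{0\}$ and set $w := \lambda^2 = -s^2 \leq 0$. Since $\lambda^4 = s^4 > 0$, (\ref{P-factor}) gives $P(\lambda) = (4 s^4)^{-1} Q(w)$ with
\[
Q(w) := (w - \lambda_1^2)(w - \lambda_1^{-2})(w - \lambda_2^2)(w - \lambda_2^{-2}),
\]
so that the sign of $P(\lambda)$ equals the sign of the real quartic $Q$ evaluated on the ray $w \leq 0$, where the four roots $\lambda_1^{\pm 2},\lambda_2^{\pm 2}$ sit according to the case at hand.

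First I would dispose of the case $\lambda_1,\lambda_2 \in \mathbb{C}\setminus(\mathbb{R}\cup i\mathbb{R})$. Here $\lambda_1^2,\lambda_2^2 \notin \mathbb{R}$, and since $P$ has real coefficients as a polynomial in $w$, the four roots are forced into two complex-conjugate pairs. Each such pair $\{r,\bar r\}$ contributes a real quadratic factor $w^2 - 2\mathrm{Re}(r)\,w + |r|^2$ with strictly negative discriminant $-4(\mathrm{Im}\,r)^2 < 0$, hence strictly positive for every real $w$. Therefore $Q(w) > 0$ for all real $w$, so $P(\lambda) > 0$ on all of $i\mathbb{R}$ and no restriction is needed.

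Next, for $\lambda_1 \in \mathbb{S}^1$ and $\lambda_2 = i\beta_2$ with $\beta_2 \geq 1$, the pair $\lambda_1^{\pm 2} = e^{\pm i\phi_1}$ lies on the unit circle and again yields the strictly positive factor $w^2 - 2\cos\phi_1\,w + 1$ (strict since $\lambda_1$ is off the real axis), while $\lambda_2^2 = -\beta_2^2$ and $\lambda_2^{-2} = -\beta_2^{-2}$ are two negative reals. Thus the sign of $Q$ on $w \leq 0$ is governed entirely by $(w+\beta_2^2)(w+\beta_2^{-2})$, which is nonnegative exactly for $w \leq -\beta_2^2$ or $-\beta_2^{-2} \leq w \leq 0$; re-expressing through $w = -s^2$ gives $|s| \geq \beta_2$ or $|s| \leq \beta_2^{-1}$, which is precisely the restriction (\ref{restriction-1}). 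The last case $\lambda_1 = i\beta_1$, $\lambda_2 = i\beta_2$ with $\beta_1 \geq \beta_2 \geq 1$ is treated the same way: now $Q$ has four negative real roots, ordered $-\beta_1^2 \leq -\beta_2^2 \leq -\beta_2^{-2} \leq -\beta_1^{-2} < 0$, and the sign chart of a quartic with positive leading coefficient gives $Q(w) \geq 0$ on $(-\infty,-\beta_1^2]\cup[-\beta_2^2,-\beta_2^{-2}]\cup[-\beta_1^{-2},0]$; translating via $w = -s^2$ reproduces exactly the five intervals in (\ref{restriction-2}).

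The routine parts are the two discriminant computations and the bookkeeping of root orderings. The one step needing genuine care—and which I regard as the main obstacle—is the faithful translation of the sign chart for $Q(w)$ on the half-line $w \leq 0$ into the intervals for $\mathrm{Im}(\lambda)$: one must correctly order the (possibly coincident) real roots $-\beta_j^{\pm 2}$ using $\beta_1 \geq \beta_2 \geq 1$, track which of the sign intervals actually meet $w \leq 0$, and check that the closed endpoints, where $P = 0$ and hence $\Lambda = 0 \in i\mathbb{R}$, are included consistently so that the resulting set matches (\ref{restriction-2}) verbatim rather than up to a boundary discrepancy.
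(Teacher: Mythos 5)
Your proposal is correct and follows essentially the same route as the paper: both reduce the claim to the sign of $P$ on the imaginary axis via the factorization (\ref{P-factor}) evaluated at $\lambda = i\beta$ (your substitution $w=\lambda^2=-\beta^2$ is only a cosmetic change of variable), handle the first case by noting the absence of real roots together with positivity, and in the remaining two cases read off the sign chart of the real-rooted factors to recover (\ref{restriction-1}) and (\ref{restriction-2}).
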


\begin{proof}
If $\lambda\in i\mathbb{R}$, we use $\lambda = i \beta$ and rewrite 
$P(\lambda)$ as 
	$$
	P(i \beta) = \frac{1}{4 \beta^4}(\beta^2 +\lambda_1^2)
	(\beta^2 + \lambda_1^{-2})
	(\beta^2 + \lambda_2^2)(\beta^2 + \lambda_2^{-2}).
	$$ 
If $\lambda_1,\lambda_2 \in \mathbb{C} \backslash (\mathbb{R} \cup i \mathbb{R})$, then $P(i \beta)$ has no roots for $\beta \in \R$ 
and $P(i \beta) > 0$ for every $\beta \in \R$ from the dominant term $P(i \beta) \sim \frac{1}{4} \beta^4$ as $|\beta| \to \infty$. It follows from (\ref{stab-rel}) that $\Lambda \in i \R$ for $P(i \beta) > 0$.
	
If $\lambda_1 \in \mathbb{S}$ and $\lambda_2 = i \beta_2$ with $\beta_2 \geq 1$, then 
	$$
	P(i \beta) = \frac{1}{4 \beta^4} (\beta^2+\lambda_1^2) (\beta^2+\lambda_1^{-2}) (\beta^2-\beta_2^2)(\beta^2-\beta_2^{-2}).
	$$ 
We have $P(i \beta) \geq 0$ if $\beta^2 \geq \beta_2^2$ or $0 \leq \beta^2 \leq \beta_2^{-2}$, which implies $\Lambda \in i \R$ under the restriction (\ref{restriction-1}). 
	
If $\lambda_1 = i \beta_1$ and $\lambda_2 = i \beta_2$ with $\beta_1 \geq \beta_2 \geq 1$, then 
$$
P(i \beta) = \frac{1}{4 \beta^4} (\beta^2-\beta_1^2) (\beta^2-\beta_1^{-2}) (\beta^2-\beta_2^2)(\beta^2-\beta_2^{-2}).
$$ 
We have $P(i \beta) \geq 0$ if $\beta^2 \geq \beta_1^2$, or $\beta_2^{-2} \leq \beta^2 \leq \beta_2^2$, or $0 \leq \beta^2 \leq \beta_1^{-2}$, which implies 
$\Lambda \in i \R$ under the restriction (\ref{restriction-2}).
\end{proof}

\begin{proposition}
	\label{prop-diag}
Assume that $\lambda_1 = \alpha_1 e^{\frac{\pi i}{4}}$, $\lambda_2 = \alpha_1 e^{-\frac{\pi i}{4}}$ with $\alpha_1 > 1$. If $\lambda = \pm \alpha e^{\pm \frac{\pi i}{4}}$ with $\alpha_1^{-1} \leq \alpha \leq \alpha_1$, then $\Lambda \in i \mathbb{R}$.
\end{proposition}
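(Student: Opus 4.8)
The plan is to follow the same strategy as in Propositions \ref{prop-real} and \ref{prop-imag}: since $\Lambda = \pm i \sqrt{P(\lambda)}$ by (\ref{stab-rel}), it suffices to show that $P(\lambda)$ is real and nonnegative along the diagonal rays $\lambda = \pm \alpha e^{\pm \frac{\pi i}{4}}$ for $\alpha_1^{-1} \le \alpha \le \alpha_1$; then $\sqrt{P(\lambda)}$ is real and $\Lambda \in i\mathbb{R}$.

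First I would substitute the hypotheses into the factorized form (\ref{P-factor}). Squaring $\lambda_1 = \alpha_1 e^{\pi i/4}$ and $\lambda_2 = \alpha_1 e^{-\pi i/4}$ gives $\lambda_1^2 = i\alpha_1^2$, $\lambda_1^{-2} = -i\alpha_1^{-2}$, $\lambda_2^2 = -i\alpha_1^2$, and $\lambda_2^{-2} = i\alpha_1^{-2}$. One may observe in passing that these values force $\omega = 0$ and $b = -\tfrac14(\alpha_1^2 - \alpha_1^{-2})^2 < 0$ through (\ref{bb1}), consistent with the $\omega = 0$ subcase of region $\rm \Rmnum{1}$. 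On the ray $\lambda = \alpha e^{\pi i/4}$ one has $\lambda^2 = i\alpha^2$ and $\lambda^4 = -\alpha^4$.

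Next I would evaluate the four factors in (\ref{P-factor}). Each factor $\lambda^2 - (\,\cdot\,)$ becomes $i$ times a real quantity, namely $i(\alpha^2 - \alpha_1^2)$, $i(\alpha^2 + \alpha_1^{-2})$, $i(\alpha^2 + \alpha_1^2)$, and $i(\alpha^2 - \alpha_1^{-2})$; since $i^4 = 1$ and $\lambda^4 = -\alpha^4$, these combine to
\begin{equation*}
P(\lambda) = -\frac{1}{4\alpha^4}(\alpha^2 - \alpha_1^2)(\alpha^2 - \alpha_1^{-2})(\alpha^2 + \alpha_1^2)(\alpha^2 + \alpha_1^{-2}),
\end{equation*}
which is manifestly real. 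The sign is then immediate: the factor $(\alpha^2 + \alpha_1^2)(\alpha^2 + \alpha_1^{-2})$ is strictly positive, while for $\alpha_1^{-2} \le \alpha^2 \le \alpha_1^2$ one has $\alpha^2 - \alpha_1^2 \le 0$ and $\alpha^2 - \alpha_1^{-2} \ge 0$, so $(\alpha^2 - \alpha_1^2)(\alpha^2 - \alpha_1^{-2}) \le 0$; with the leading minus sign this yields $P(\lambda) \ge 0$, hence $\Lambda \in i\mathbb{R}$.

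Finally I would dispatch the remaining three rays. Because $P$ depends on $\lambda$ only through $\lambda^2 + \lambda^{-2}$, it is even in $\lambda$, so $\lambda = -\alpha e^{\pm \pi i/4}$ give the same value as $\lambda = \alpha e^{\pm \pi i/4}$; and because $P$ has real coefficients (as $\omega, b \in \mathbb{R}$), one has $\overline{P(\lambda)} = P(\bar\lambda)$, so the ray $\lambda = \alpha e^{-\pi i/4} = \overline{\alpha e^{\pi i/4}}$ yields the complex conjugate of the already-real value computed above, i.e. the same nonnegative number. Thus $P(\lambda) \ge 0$ on all four diagonal rays in the stated range, completing the argument. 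I expect no real obstacle beyond careful sign bookkeeping; the only point worth flagging is the verification that the product of the four imaginary factors indeed produces a \emph{real} $P$, which is precisely what places $\Lambda$ exactly on $i\mathbb{R}$ rather than merely near it.
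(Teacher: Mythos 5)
Your proposal is correct and follows essentially the same route as the paper: both evaluate $P$ on the diagonal rays via the factorization (\ref{P-factor}), arrive at the identical real expression $P = -\frac{1}{4\alpha^4}(\alpha^4-\alpha_1^4)(\alpha^4-\alpha_1^{-4})$ (your four quadratic factors pair up into exactly these two quartic ones), and read off nonnegativity on $\alpha_1^{-1}\leq\alpha\leq\alpha_1$. The extra bookkeeping you include — checking that the product of the four purely imaginary factors is real, and reducing the other three rays to the first by evenness and real coefficients — is implicit in the paper's single-line computation.
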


\begin{proof}
We rewrite (\ref{P-factor}) as
\begin{align*}
P\left(\pm \alpha e^{\pm \frac{\pi i}{4}} \right) = -\frac{1}{4\alpha^4} (\alpha^4 - \alpha_1^4) (\alpha^4 - \alpha_1^{-4}).
\end{align*}
We have $P(\lambda) \geq 0$ if $\alpha_1^{-4}\leq \alpha^4 \leq\alpha_1^4$, 
which is equivalent to $\alpha_1^{-1} \leq \alpha \leq \alpha_1$ if $\alpha > 0$. For this $\lambda$, we have $\Lambda \in i \R$. 
\end{proof}

\section{Lax and stability spectra for constant-amplitude solutions}
\label{sec_laxpspec}

Here we compute explicitly the Lax and stability spectra for the constant-amplitude solutions which correspond to the equilibrium points 
$\mathrm{P}_{\pm}$ in Proposition \ref{prop-critical-points}. The exact results are used for comparison with the numerical results to control accuracy of numerical approximations. We do not compute the Lax and stability spectra for 
the zero-amplitude solutions which correspond to the equilibrium points $\mathrm{Q}_{\pm}$ since the spectral stability of the zero-amplitude solutions is well-known for the MTM system (\ref{ini_1}) and the relevant Lax and stability spectra can be easily computed in the exact form.

\subsection{Constant-amplitude solution for $\mathrm{P}_-$}

By Proposition \ref{prop-stability-points}, the equilibrium point is a center for $\omega \in (-\infty,1)$, see Figures \ref{fig_w_15} (left), 
\ref{fig_w_1} (left), and \ref{fig_w_16}. It is located in $\mathbb{T} \times \mathbb{R}_+$  along the red curve on Figure \ref{fig00}, where $b = (1-\omega)^2$ and $\omega \in (-\infty,1]$. Since $\theta = \pi$ and $\xi = 1 - \omega$ by Proposition \ref{prop-critical-points}, the constant-amplitude solution  (\ref{periodic-waves}) is given by   
\begin{equation}\label{constant_1}
	U = i \sqrt{1-\omega},\quad
	V = - i\sqrt{1-\omega}, \quad \omega \in (-\infty,1).
\end{equation}
The Lax spectrum for the admissible solutions of the spectral problem (\ref{KN}) with $(U,V)$ in (\ref{constant_1}) is given by the following proposition.

\begin{proposition}\label{pro_71}
Let $(U,V)$ be given by (\ref{constant_1}) and consider bounded solutions 
of the spectral problem (\ref{KN}). 
\begin{itemize}
	\item  If $\omega \in (-\infty,0]$, then the Lax spectrum is given by 
\begin{align*}
\mathbb{R} \cup \left\{ i \beta : \; \beta \in (-\infty, -\beta_2 ]
	\cup [-\beta_2^{-1}, \beta_2^{-1}] \cup [\beta_2,+\infty) \right\} \cup\mathbb{S}^1 \backslash \{0\},	
\end{align*}
where $\beta_2 := \sqrt{1-\omega}+\sqrt{-\omega}$.

\item If $\omega \in (0,1)$, then the Lax spectrum is given by 
\begin{align*}
\mathbb{R}\cup i\mathbb{R}\cup\left\{e^{i\alpha} : \; \alpha\in [-\alpha_1, \alpha_1] \cup [\pi-\alpha_1, \pi+\alpha_1] \right\} \backslash \{0\},	
\end{align*}
where $\alpha_1 := {\rm arccos}\sqrt{\omega}$.
\end{itemize}
\end{proposition}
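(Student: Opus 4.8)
The plan is to exploit that for the constant-amplitude solution (\ref{constant_1}) the spectral problem (\ref{KN}) has \emph{constant} coefficients, hence is explicitly solvable. Substituting $U = i\sqrt{1-\omega}$ and $V = \bar U = -i\sqrt{1-\omega}$ and using $|U|^2 = |V|^2 = 1-\omega$, the diagonal term $\frac{i}{4}(|U|^2-|V|^2)$ drops out and (\ref{KN}) becomes $\Psi' = A(\lambda)\Psi$ with the traceless matrix
\begin{equation*}
A(\lambda) = \begin{pmatrix} \tfrac{i}{4}\left(\lambda^2 - \lambda^{-2}\right) & \tfrac{\sqrt{1-\omega}}{2}\left(\lambda + \lambda^{-1}\right) \\[2pt] -\tfrac{\sqrt{1-\omega}}{2}\left(\lambda + \lambda^{-1}\right) & -\tfrac{i}{4}\left(\lambda^2 - \lambda^{-2}\right) \end{pmatrix}.
\end{equation*}
Since $\mathrm{tr}\,A = 0$, the eigenvalues are $\pm\mu(\lambda)$ with $\mu^2 = -\det A(\lambda)$, and the general solution is a combination of $e^{\pm\mu x}$. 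Hence (away from the degenerate points discussed below) $\Psi$ is bounded on $\mathbb{R}$ if and only if $\mu \in i\mathbb{R}$, that is, if and only if
\begin{equation*}
\det A(\lambda) = \tfrac{1}{16}\left(\lambda^2 - \lambda^{-2}\right)^2 + \tfrac{1-\omega}{4}\left(\lambda + \lambda^{-1}\right)^2
\end{equation*}
is real and nonnegative. The proof then reduces to describing the set $\{\lambda : \det A(\lambda) \in [0,\infty)\}$.

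First I would localize this set onto the three symmetry curves. Writing $s := \lambda^2 + \lambda^{-2}$ one has $\det A = \tfrac{1}{16}(s^2-4) + \tfrac{1-\omega}{4}(s+2)$, a quadratic in $s$ with real coefficients, so $\det A \in \mathbb{R}$ exactly when $\mathrm{Im}(s) = 0$ or $\mathrm{Re}(s) = -2(1-\omega)$. Since $\mathrm{Im}(\lambda^2 + \lambda^{-2}) = (|\lambda|^2 - |\lambda|^{-2})\sin(2\arg\lambda)$, the first alternative holds precisely on $\mathbb{R}\cup i\mathbb{R}\cup\mathbb{S}^1$, the triple-symmetry locus of Proposition \ref{proposition_11}. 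On the second alternative the substitution $s = -2(1-\omega) + i s_I$ gives $\det A = -\tfrac{\omega^2}{4} - \tfrac{s_I^2}{16} < 0$ whenever $s_I \neq 0$; hence this auxiliary locus carries no bounded solutions and may be discarded. This confines the Lax spectrum to $\mathbb{R}\cup i\mathbb{R}\cup\mathbb{S}^1$.

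It then remains to solve $\det A(\lambda) \geq 0$ on each curve, where $\det A$ is manifestly real. On $\mathbb{R}\setminus\{0\}$ both terms are nonnegative (using $\omega < 1$), so the entire real axis is admitted. On $\mathbb{S}^1$, setting $\lambda = e^{i\alpha}$ collapses $\det A$ to $\cos^2\alpha\,(\cos^2\alpha - \omega)$, which is nonnegative for all $\alpha$ when $\omega \leq 0$, and for $\omega \in (0,1)$ forces $|\cos\alpha| \geq \sqrt{\omega}$, i.e. the two arcs with $\alpha_1 = \arccos\sqrt{\omega}$. On $i\mathbb{R}$, setting $\lambda = i\beta$ and factoring out $(\beta - \beta^{-1})^2$ reduces the condition to $(\beta + \beta^{-1})^2 \geq 4(1-\omega)$; for $\omega \leq 0$ the roots of $\beta^2 - 2\sqrt{1-\omega}\,\beta + 1$ are $\beta_2^{\pm1}$ with $\beta_2 = \sqrt{1-\omega}+\sqrt{-\omega}$, giving the three bands in the statement, whereas for $\omega \in (0,1)$ the bound $4(1-\omega) < 4 \leq (\beta+\beta^{-1})^2$ holds identically and the whole imaginary axis is admitted.

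The main obstacle I anticipate is not any single computation but the completeness argument of the second step, namely certifying that \emph{no} point off $\mathbb{R}\cup i\mathbb{R}\cup\mathbb{S}^1$ lies in the spectrum; this is exactly what the analysis of the inert reality locus $\mathrm{Re}(s) = -2(1-\omega)$ secures. Two degenerate situations also need care. At band edges $\det A = 0$ with $A \neq 0$, so $A$ is nilpotent; one must check that the eigenvector direction still yields a non-growing solution, which keeps the closed endpoints $\pm\beta_2^{\pm1}$ (and, for $\omega\in(0,1)$, the arc endpoints) in the spectrum. Separately, at $\lambda = \pm i$ one finds $A \equiv 0$; these points must be treated by hand (for $\omega < 0$ they sit inside the spectral gap on $i\mathbb{R}$). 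As a consistency check, the band endpoints $\pm\beta_2^{\pm1}$, $e^{\pm i\alpha_1}$, $e^{\pm i(\pi-\alpha_1)}$ together with the double points $\pm1$ should coincide with the eight roots of $P(\lambda)$ from Proposition \ref{pro_omega_1}, confirming that the Lax bands terminate exactly at the zeros of the characteristic function.
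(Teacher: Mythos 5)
Your proposal is correct and follows essentially the same route as the paper: substitute the constant profile to get a traceless constant-coefficient system, require the characteristic exponent to be purely imaginary (your condition $\det A(\lambda)\ge 0$ is identical to the paper's $\kappa^2\in[0,\infty)$ from the ansatz $e^{i\kappa x}$), and then solve the resulting inequality separately on $\mathbb{R}$, $i\mathbb{R}$ and $\mathbb{S}^1$, with matching formulas in each case. Your one genuine addition is the localization step: where the paper only asserts ``we checked that $\kappa\notin\mathbb{R}$ for $\lambda\in\mathbb{C}\setminus(\mathbb{R}\cup i\mathbb{R}\cup\mathbb{S}^1)$,'' your reduction to the reality locus of the quadratic in $s=\lambda^2+\lambda^{-2}$, together with the computation $\det A=-\tfrac{\omega^2}{4}-\tfrac{s_I^2}{16}<0$ on the spurious branch $\mathrm{Re}(s)=-2(1-\omega)$, turns that assertion into a proof. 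Your observation about $\lambda=\pm i$ (where $A\equiv 0$, so constant vectors are bounded solutions even though $\pm i$ lies in the stated gap $(\beta_2^{-1},\beta_2)$ when $\omega<0$) identifies a genuine, though isolated and harmless, edge case that the proposition's statement glosses over.
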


\begin{proof}
The spectral problem for the Lax spectrum is obtained by substituting (\ref{constant_1}) into (\ref{KN}),
\begin{equation}
\label{KN-constant-1}
\left\{ \begin{array}{l}
p'(x) = \frac{i}{4}(\lambda^2 - \lambda^{-2}) p 
+ \frac{1}{2} (\lambda + \lambda^{-1}) \sqrt{1-\omega} q, \\
q'(x) = -\frac{i}{4}(\lambda^2-\lambda^{-2})q
-\frac{1}{2} (\lambda + \lambda^{-1}) \sqrt{1-\omega} p. \end{array} \right.
\end{equation}
Looking for nonzero bounded solutions of 
(\ref{KN-constant-1}) in the form 
$p(x) = \hat{p} e^{i \kappa x}$, $q(x) = \hat{q} e^{i \kappa x}$ with $\kappa \in \mathbb{R}$ and constant $(\hat{p},\hat{q}) \in \mathbb{C}^2$, we obtain the characteristic equation in the form
$$
\left| \begin{array}{cc}
	\frac{1}{4}(\lambda^2-\lambda^{-2})-\kappa  & -\frac{i}{2} (\lambda + \lambda^{-1}) \sqrt{1-\omega}   \\
\frac{i}{2} (\lambda + \lambda^{-1}) \sqrt{1-\omega} & -\frac{1}{4}(\lambda^2-\lambda^{-2})-\kappa \\
\end{array} \right| = 0.
$$
Expansion of the determinant yields 
\begin{align*}
	\kappa^2=\frac{1}{16}(\lambda^2-\lambda^{-2})^2+\frac{1}{4}(\lambda+\lambda^{-1})^2 (1-\omega).
\end{align*}
Next we analyze admissible values of $\lambda$ for which $\kappa \in \mathbb{R}$.
\begin{itemize}
	\item For every $\lambda\in\mathbb{R}$ and $\omega \in (-\infty,1)$, we have  $\kappa\in\mathbb{R}$.

\item For  $\lambda\in i\mathbb{R}$, we set $\lambda= i \beta$ with $\beta \in\mathbb{R}$ and obtain 
\begin{align*}
	\kappa = \pm\frac{1}{4}(\beta - \beta^{-1})
	\sqrt{(\beta + \beta^{-1})^2-4(1-\omega)}.
\end{align*}
Since $(\beta + \beta^{-1})^2\geq 4$ for any $\beta \in\mathbb{R}$, 
we have $\kappa \in \R$ for every $\beta \in \R$ if $\omega \in (0,1)$. 
On the other hand, if $\omega \in (-\infty,0]$, then $\kappa \in \R$ 
if and only if $\beta \in (-\infty,-\beta_2] \cup [-\beta_2^{-1},\beta_2^{-1}] \cup [\beta_2,\infty)$, where $\beta_2 := \sqrt{1-\omega} + \sqrt{-\omega}$.

\item For $\lambda \in \mathbb{S}^1$, we set $\lambda = e^{i\alpha}$ with $\alpha \in [0,2\pi]$ and obtain 
\begin{align*}
	\kappa= \pm \cos \alpha \sqrt{\cos^2\alpha-\omega}.
\end{align*}
If $\omega \in (-\infty,0]$, then $\kappa\in \mathbb{R}$ for any $\alpha \in[0,2\pi]$. On the other hand, if $\omega \in (0,1)$, then $\kappa\in \mathbb{R}$ for $\cos^2\alpha \geq\omega$, which 
yields $\alpha \in [-\alpha_1,\alpha_1] \cup [\pi - \alpha_1,\pi + \alpha_1]$ with $\alpha_1 := {\rm arccos}\sqrt{\omega}$.
\end{itemize}
In addition, we checked that $\kappa \notin \R$ if $\lambda \in \mathbb{C} \backslash (\mathbb{R}\cup i \mathbb{R} \cup \mathbb{S}^1)$.
\end{proof}

The end points of the Lax spectrum in Proposition \ref{pro_71} are related to the roots of $P(\lambda)$ in (\ref{polynomial-P}) according to the following corollary. 

\begin{corollary}
	\label{cor-roots-1}
	Let $(U,V)$ be given by (\ref{constant_1}) and consider roots of $P(\lambda)$ in (\ref{polynomial-P}).
	\begin{itemize}
		\item  If $\omega \in (-\infty,0]$, then roots of $P(\lambda)$ are 
		 $\{\pm 1,\pm 1, \pm \beta_2 i,\pm \beta_2^{-1} i\}$,
		where $\beta_2 := \sqrt{1-\omega}+\sqrt{-\omega}$.
		
		\item If $\omega \in (0,1)$, then roots of $P(\lambda)$ are 
			 $\{\pm 1,\pm 1, \pm  e^{i \alpha_1},\pm e^{-i \alpha_1} \}$,
		where $\alpha_1 := {\rm arccos}\sqrt{\omega}$.
	\end{itemize}
\end{corollary}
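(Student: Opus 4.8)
The plan is to insert the value $b = (1-\omega)^2$, which corresponds to the equilibrium point $\mathrm{P}_-$ along the red curve on Figure \ref{fig00}, directly into the characteristic polynomial (\ref{polynomial-P}) and solve $P(\lambda) = 0$ explicitly. Setting $P(\lambda) = 0$ gives $\left( \lambda^2 + \lambda^{-2} - 2\omega \right)^2 = 4(1-\omega)^2$, so that $\lambda^2 + \lambda^{-2} - 2\omega = \pm 2(1-\omega)$, and I would treat the two signs separately.

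The $+$ sign yields $\lambda^2 + \lambda^{-2} = 2$, equivalently $(\lambda^2 - 1)^2 = 0$, which produces the double pair of roots $\{\pm 1, \pm 1\}$ in both cases of the corollary. The $-$ sign yields $\lambda^2 + \lambda^{-2} = 4\omega - 2$. Introducing $\mu := \lambda^2$, I would solve the quadratic $\mu^2 - (4\omega-2)\mu + 1 = 0$ to obtain $\mu = (2\omega - 1) \pm 2\sqrt{\omega^2 - \omega}$; the two values satisfy $\mu_+ \mu_- = 1$, reflecting the $\lambda \mapsto \lambda^{-1}$ symmetry of Proposition \ref{proposition_11}.

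The remaining step is to identify these two values of $\mu$ with the squared endpoints of the Lax spectrum in Proposition \ref{pro_71}. For $\omega \in (-\infty,0]$ the quantity $\omega^2 - \omega \geq 0$, so $\sqrt{\omega^2 - \omega}$ is real; using the identity $\beta_2^2 = (\sqrt{1-\omega} + \sqrt{-\omega})^2 = 1 - 2\omega + 2\sqrt{\omega^2 - \omega}$, I would verify that $\mu_- = -\beta_2^2$ and hence $\mu_+ = \mu_-^{-1} = -\beta_2^{-2}$, giving $\lambda = \pm i \beta_2$ and $\lambda = \pm i \beta_2^{-1}$. For $\omega \in (0,1)$ the quantity $\omega^2 - \omega < 0$, so $\sqrt{\omega^2 - \omega} = i\sqrt{\omega - \omega^2}$; using the double-angle formulas $\cos 2\alpha_1 = 2\omega - 1$ and $\sin 2\alpha_1 = 2\sqrt{\omega - \omega^2}$ with $\alpha_1 = \arccos\sqrt{\omega}$, I would recognize $\mu_\pm = e^{\pm 2 i \alpha_1}$ and hence $\lambda = \pm e^{\pm i \alpha_1}$.

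None of the steps presents a genuine obstacle: the computation is elementary once $b = (1-\omega)^2$ is inserted. The only point requiring care is the correct choice of branch for $\sqrt{\omega^2 - \omega}$ across the sign change at $\omega = 0$, together with the algebraic matching of the quadratic roots $\mu_\pm$ to the squared endpoints $-\beta_2^{\pm 2}$ and $e^{\pm 2 i \alpha_1}$. I would present the result as the observation that the endpoints of every spectral band found in Proposition \ref{pro_71} are precisely the roots of $P(\lambda)$, which is exactly the content of the corollary.
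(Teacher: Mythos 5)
Your proposal is correct and follows essentially the same route as the paper: both reduce to the two equations $\lambda^2+\lambda^{-2}=2$ and $\lambda^2+\lambda^{-2}=4\omega-2$ (you by taking the square root of $P(\lambda)=0$ directly, the paper by eliminating in the Vieta relations (\ref{bb1}), which is the same quadratic), and then identify the resulting $\lambda^2$ with $-\beta_2^{\pm 2}$ or $e^{\pm 2i\alpha_1}$ exactly as you describe. Your branch bookkeeping at $\omega=0$ and the double-angle check $\cos 2\alpha_1 = 2\omega-1$ are both correct, and in fact confirm that $\alpha_1=\arccos\sqrt{\omega}$ as in the statement.
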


\begin{proof}
We obtain from (\ref{bb1}) with $b = (1-\omega)^2$ that 
\begin{equation*}
\left\{ \begin{array}{r} 
4 \omega = \lambda_1^2 + \lambda_1^{-2} + \lambda_2^2 + \lambda_2^{-2}, \\
8 \omega - 4 = (\lambda_1^2 + \lambda_1^{-2}) (\lambda_2^2 + \lambda_2^{-2}).
\end{array} \right. 
\end{equation*}	
Eliminating either $\lambda_1^2 + \lambda_1^{-2}$ or $\lambda_2^2 + \lambda_2^{-2}$ yields a quadratic equation with two roots at
$$
\lambda_1^2 + \lambda_1^{-2} = 2, \quad \lambda_2^2 + \lambda_2^{-2} = 4 \omega - 2.
$$
Hence we have $\lambda_1 = 1$ and either $\lambda_2 = i \beta_2$ for $\omega \in (-\infty,0]$ with $\beta_2 := \sqrt{1-\omega}+\sqrt{-\omega}$ or $\lambda_2 = e^{i \alpha_1}$ for $\omega \in (0,1)$ with $\alpha_1 := \arccos(\omega)$.	 
This defines all roots of $P(\lambda)$ in view of the factorization formula (\ref{P-factor}).	
\end{proof}

\begin{remark}
	\label{rem-1}
The part of the Lax spectrum of Proposition \ref{pro_71} on $i \R$ satisfy the stability restriction (\ref{restriction-1}) of Proposition \ref{prop-imag}. However, the other part of the Lax spectrum on $\mathbb{S}^1$ does not satisfy the stability restriction. Hence the linearized MTM system (\ref{line_11}) has the unstable spectrum for the equilibrium point $\mathrm{P}_-$. The end points of the Lax spectrum on either $i\R$ or $\mathbb{S}^1$ are given by roots of $P(\lambda)$ in Corollary \ref{cor-roots-1}. 
\end{remark}

\subsection{Constant-amplitude solution for $\mathrm{P}_+$}

By Proposition \ref{prop-stability-points}, the equilibrium point is a saddle for $\omega \in (-\infty,-1)$, see Figure \ref{fig_w_15} (left). It is located in $\mathbb{T} \times \mathbb{R}_+$ along the blue curve on Figure \ref{fig00}, where $b = (1+\omega)^2$ and $\omega \in (-\infty,-1]$. 
Since $\theta = 0$ and $\xi = -(1+\omega)$ by Proposition \ref{prop-critical-points}, the constant-amplitude solution  (\ref{periodic-waves}) is given by   
\begin{equation}\label{constant_2}
U =\sqrt{-(1+\omega)}, \quad V =\sqrt{-(1+\omega)}, \quad \omega \in (-\infty,-1).
\end{equation}
The Lax spectrum for the admissible solutions of the spectral problem (\ref{KN}) with $(U,V)$ in (\ref{constant_2}) is given by the following proposition.

\begin{proposition}\label{pro_72}
	Let $(U,V)$ be given by (\ref{constant_2}) and consider bounded solutions 
	of the spectral problem (\ref{KN}). If $\omega \in (-\infty,-1)$, then the Lax spectrum is given by 
		\begin{align*}
		\mathbb{R} \cup \left\{ i \beta : \; \beta \in (-\infty, -\beta_2 ]
		\cup [-\beta_2^{-1}, \beta_2^{-1}] \cup 
		[\beta_2,+\infty)  \right\} \backslash \{0\},	
		\end{align*}
		where $\beta_2 := \sqrt{-(1+\omega)} + \sqrt{-\omega}$.
\end{proposition}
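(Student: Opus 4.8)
The plan is to follow the template of the proof of Proposition \ref{pro_71}: substitute the constant solution into the spectral problem (\ref{KN}), reduce it to an $x$-autonomous system, and determine for which $\lambda$ it admits bounded plane-wave solutions. Since $U = V = \sqrt{-(1+\omega)} =: c$ is real, we have $|U|^2 - |V|^2 = 0$ and $\bar{U} = \bar{V} = c$, so the off-diagonal coefficient becomes $\tfrac{ic}{2}(\lambda^{-1}-\lambda)$. Seeking $p(x) = \hat{p}\, e^{i\kappa x}$, $q(x) = \hat{q}\, e^{i\kappa x}$ with $\kappa \in \mathbb{R}$ and $(\hat{p},\hat{q})\neq(0,0)$, the vanishing of the $2\times2$ characteristic determinant yields
\[
\kappa^2 = \tfrac{1}{16}(\lambda^2 - \lambda^{-2})^2 + \tfrac{c^2}{4}(\lambda - \lambda^{-1})^2, \qquad c^2 = -(1+\omega) > 0 .
\]
A value of $\lambda$ lies in the Lax spectrum precisely when this $\kappa^2$ is a nonnegative real number.

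The key step is to simplify the right-hand side. Factoring out $(\lambda - \lambda^{-1})^2$ and using $(\lambda+\lambda^{-1})^2 = (\lambda - \lambda^{-1})^2 + 4$, I set $\tau := (\lambda - \lambda^{-1})^2$ and obtain the compact form $\kappa^2 = \tfrac{1}{16}\,\tau(\tau - 4\omega)$. The problem thus reduces to describing the preimage under $\lambda \mapsto \tau(\lambda)$ of the set where the real quadratic $\tau \mapsto \tau(\tau - 4\omega)$ is real and nonnegative. Its roots are $\tau = 0$ and $\tau = 4\omega < 0$, so that set equals $\tau \in (-\infty, 4\omega] \cup [0,\infty)$ on the real axis, while on the only other line $\mathrm{Re}\,\tau = 2\omega$ where $\tau(\tau-4\omega)$ is real it takes the value $-4\omega^2 - (\mathrm{Im}\,\tau)^2 < 0$. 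Writing $\lambda = r e^{i\phi}$ gives $\mathrm{Im}(\lambda - \lambda^{-1}) = (r + r^{-1})\sin\phi$ and $\mathrm{Re}(\lambda - \lambda^{-1}) = (r - r^{-1})\cos\phi$, so $\tau$ is real precisely when $\lambda - \lambda^{-1}$ is real or purely imaginary, i.e. when $\lambda \in \mathbb{R} \cup i\mathbb{R} \cup \mathbb{S}^1$; this rules out all other $\lambda$ at once.

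It then remains to run the three admissible cases. For $\lambda \in \mathbb{R}\setminus\{0\}$, both terms in $\kappa^2$ are nonnegative, so the whole real axis lies in the spectrum. For $\lambda = i\beta$, one has $\tau = -(\beta + \beta^{-1})^2 \leq 0$, and $\tau \leq 4\omega$ reduces to $(\beta - \beta^{-1})^2 \geq 4c^2$; solving $\beta - \beta^{-1} = \pm 2c$ for $\beta > 0$ gives the reciprocal pair $\beta_2 = c + \sqrt{c^2+1} = \sqrt{-(1+\omega)} + \sqrt{-\omega}$ and $\beta_2^{-1}$, whence $\kappa \in \mathbb{R}$ iff $\beta \in (-\infty,-\beta_2]\cup[-\beta_2^{-1},\beta_2^{-1}]\cup[\beta_2,\infty)$, using the invariance of $|\beta - \beta^{-1}|$ under $\beta \mapsto -\beta$. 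For $\lambda = e^{i\alpha} \in \mathbb{S}^1$, one finds $\kappa^2 = \sin^2\alpha\,(\sin^2\alpha + \omega)$; since $\omega < -1 \leq -\sin^2\alpha$, this is $\leq 0$, with equality only at $\alpha = 0,\pi$ (the points $\pm 1$, already counted in $\mathbb{R}$), so no arc of the unit circle enters the spectrum. Collecting the three cases and removing $\lambda = 0$ gives exactly the stated set.

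The main obstacle is the generic case $\lambda \notin \mathbb{R}\cup i\mathbb{R}\cup\mathbb{S}^1$, which the proof of Proposition \ref{pro_71} dispatches with the remark that ``$\kappa \notin \mathbb{R}$.'' The substitution $\tau = (\lambda - \lambda^{-1})^2$ is what makes this transparent and rigorous: it fuses the two separate requirements (reality and nonnegativity of $\kappa^2$) into a single condition on $\tau$, and the reality locus of $\tau$ is exactly $\mathbb{R}\cup i\mathbb{R}\cup\mathbb{S}^1$. The only genuinely new feature compared with the $\mathrm{P}_-$ computation is the sign in the $\mathbb{S}^1$ case: here $\omega < -1$ forces $\kappa^2 \leq 0$ on the whole circle, so, unlike for $\mathrm{P}_-$, the Lax spectrum contains no part of $\mathbb{S}^1$.
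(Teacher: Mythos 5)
Your proposal is correct: the reduction of (\ref{KN}) with the constant profile (\ref{constant_2}) to the plane-wave dispersion relation $\kappa^2 = \tfrac{1}{16}(\lambda^2-\lambda^{-2})^2 - \tfrac{1}{4}(1+\omega)(\lambda-\lambda^{-1})^2$ is exactly the paper's starting point, and your three cases $\lambda\in\mathbb{R}$, $\lambda\in i\mathbb{R}$, $\lambda\in\mathbb{S}^1$ reproduce the paper's conclusions verbatim (in particular $\kappa^2=\sin^2\alpha\,(\sin^2\alpha+\omega)\le 0$ on $\mathbb{S}^1$, so the circle contributes nothing, and $(\beta-\beta^{-1})^2\ge -4(1+\omega)$ gives the stated intervals with $\beta_2=\sqrt{-(1+\omega)}+\sqrt{-\omega}$). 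The one place you genuinely improve on the paper is the generic case: the paper dismisses $\lambda\in\mathbb{C}\setminus(\mathbb{R}\cup i\mathbb{R}\cup\mathbb{S}^1)$ with the unproved remark ``we checked that $\kappa\notin\mathbb{R}$,'' whereas your substitution $\tau=(\lambda-\lambda^{-1})^2$ turns the admissibility condition into $\tau(\tau-4\omega)\ge 0$, shows that this forces $\tau\in\mathbb{R}$ (since on the only non-real line $\mathrm{Re}\,\tau=2\omega$ where the quadratic is real it equals $-4\omega^2-(\mathrm{Im}\,\tau)^2<0$), and identifies the reality locus of $\tau$ as precisely $\mathbb{R}\cup i\mathbb{R}\cup\mathbb{S}^1$. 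That makes the exclusion step a proof rather than an assertion, at no extra cost.
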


\begin{proof}
	The spectral problem for the Lax spectrum is obtained by substituting  (\ref{constant_2}) into (\ref{KN}),
	\begin{equation}
	\label{KN-constant-2}
	\left\{ \begin{array}{l}
	p'(x) = \frac{i}{4}(\lambda^2 - \lambda^{-2}) p 
	- \frac{i}{2} (\lambda - \lambda^{-1}) \sqrt{-(1+\omega)} q, \\
	q'(x) = -\frac{i}{4}(\lambda^2-\lambda^{-2})q
	-\frac{1}{2} (\lambda - \lambda^{-1}) \sqrt{-(1+\omega)} p. 
	\end{array} \right.
	\end{equation}
	Looking for nonzero bounded solutions of 
	(\ref{KN-constant-2}) in the form 
	$p(x) = \hat{p} e^{i \kappa x}$, $q(x) = \hat{q} e^{i \kappa x}$ with $\kappa \in \mathbb{R}$ and constant $(\hat{p},\hat{q}) \in \mathbb{C}^2$, we obtain the characteristic equation in the form
	$$
	\left| \begin{array}{cc}
	\frac{1}{4}(\lambda^2-\lambda^{-2})-\kappa  & -\frac{1}{2} (\lambda - \lambda^{-1}) \sqrt{-(1+\omega)}   \\
	-\frac{1}{2} (\lambda - \lambda^{-1}) \sqrt{-(1+\omega)} & -\frac{1}{4}(\lambda^2-\lambda^{-2})-\kappa \\
	\end{array} \right| = 0.
	$$
Expansion of the determinant yields 	
\begin{align*}
		\kappa^2=\frac{1}{16}(\lambda^2-\lambda^{-2})^2-\frac{1}{4}(\lambda-\lambda^{-1})^2 (1+\omega).
\end{align*}
Next we analyze admissible values of $\lambda$ for which $\kappa \in \mathbb{R}$.
\begin{itemize}
	\item For every $\lambda\in\mathbb{R}$ and $\omega \in (-\infty,-1)$, we have $\kappa\in\mathbb{R}$.
	
	\item For  $\lambda\in i\mathbb{R}$, we set $\lambda= i \beta$ with $\beta \in\mathbb{R}$ and obtain 
	\begin{align*}
	\kappa = \pm\frac{1}{4}(\beta + \beta^{-1})
	\sqrt{(\beta - \beta^{-1})^2+4(1+\omega)}.
	\end{align*}
	Since $\omega \in (-\infty,-1)$, we have $\kappa \in \R$ 
	if and only if $\beta \in (-\infty,-\beta_2] \cup [-\beta_2^{-1},\beta_2^{-1}] \cup [\beta_2,\infty)$, where $\beta_2 := \sqrt{-(1+\omega)} + \sqrt{-\omega}$.
	
	\item For $\lambda \in \mathbb{S}^1$, we set $\lambda = e^{i\alpha}$ with $\alpha \in [0,2\pi]$ and obtain 
	\begin{align*}
	\kappa= \pm i \sin \alpha \sqrt{\cos^2\alpha-(1+\omega)}.
	\end{align*}
	Since $\omega \in (-\infty,-1)$, we have $\kappa \in i \R$ for any $\alpha \in[0,2\pi]$. The points $\alpha = 0$ and $\alpha = \pi$ for which $\kappa = 0$ correspond to $\lambda = \pm 1 \in \R$.
\end{itemize}
In addition, we checked that $\kappa \notin \R$ if $\lambda \in \mathbb{C} \backslash (\mathbb{R}\cup i \mathbb{R} \cup \mathbb{S}^1)$.
\end{proof}

The roots of $P(\lambda)$ in (\ref{polynomial-P}) are given by the following corollary.

\begin{corollary}
	\label{cor-roots-2}
	Let $(U,V)$ be given by (\ref{constant_2}). Roots of $P(\lambda)$ in (\ref{polynomial-P}) are given by 
		$\{\pm i,\pm i, \pm \beta_2 i,\pm \beta_2^{-1} i\}$,
		where $\beta_2 := \sqrt{-(1+\omega)}+\sqrt{-\omega}$.
\end{corollary}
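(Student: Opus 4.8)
The plan is to mirror the proof of Corollary \ref{cor-roots-1}, exploiting the relations (\ref{bb1}) between the parameters $(b,\omega)$ and the roots $(\lambda_1,\lambda_2)$ of $P(\lambda)$. Setting $b = (1+\omega)^2$ in (\ref{bb1}) and abbreviating $s_1 := \lambda_1^2 + \lambda_1^{-2}$ and $s_2 := \lambda_2^2 + \lambda_2^{-2}$, I obtain the symmetric system $s_1 + s_2 = 4\omega$ and $s_1 s_2 = 4\omega^2 - 4b = -8\omega - 4$. Hence $s_1$ and $s_2$ are the two roots of the quadratic $t^2 - 4\omega t - (8\omega + 4) = 0$, exactly as in Corollary \ref{cor-roots-1} but with $b = (1-\omega)^2$ replaced by $b = (1+\omega)^2$.

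First I would solve this quadratic. Its discriminant is $16\omega^2 + 4(8\omega + 4) = 16(\omega+1)^2$, and since the hypothesis is $\omega < -1$ we have $\sqrt{(\omega+1)^2} = -(\omega+1)$, so the two roots collapse to $s_1 = -2$ and $s_2 = 4\omega + 2$. From $s_1 = -2$ I solve $\lambda_1^2 + \lambda_1^{-2} = -2$, which is equivalent to $(\lambda_1^2 + 1)^2 = 0$ and gives $\lambda_1 = \pm i$. Because $\lambda_1^2 = \lambda_1^{-2} = -1$, both factors $(\lambda^2 - \lambda_1^2)$ and $(\lambda^2 - \lambda_1^{-2})$ in the factorization (\ref{P-factor}) reduce to $(\lambda^2 + 1)$, which accounts for the double pair $\{ \pm i, \pm i\}$ appearing in the statement.

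For the remaining root I set $\lambda_2 = i\beta_2$ and verify that the claimed value $\beta_2 = \sqrt{-(1+\omega)} + \sqrt{-\omega}$ reproduces $s_2 = 4\omega + 2$. Writing $a := \sqrt{-(1+\omega)}$ and $c := \sqrt{-\omega}$, the identity $c^2 - a^2 = 1$ forces $\beta_2^{-1} = c - a$, whence $\beta_2^2 + \beta_2^{-2} = (a+c)^2 + (c-a)^2 = 2(a^2 + c^2) = -4\omega - 2$. Consequently $\lambda_2^2 + \lambda_2^{-2} = -(\beta_2^2 + \beta_2^{-2}) = 4\omega + 2 = s_2$, so the factors $(\lambda^2 - \lambda_2^2)$ and $(\lambda^2 - \lambda_2^{-2})$ contribute the roots $\{ \pm \beta_2 i, \pm \beta_2^{-1} i\}$. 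Collecting all eight roots through (\ref{P-factor}) then yields the asserted list.

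I do not expect any genuine obstacle: the argument is purely algebraic and parallels Corollary \ref{cor-roots-1}. The one point requiring care is the choice of sign of the square root in the discriminant, which is pinned down by the standing assumption $\omega < -1$; together with the elementary identity $c^2 - a^2 = 1$ that produces $\beta_2^{-1} = c - a$, this is what makes the two symmetric functions collapse to the clean values $-2$ and $4\omega + 2$.
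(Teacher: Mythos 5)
Your argument is correct and follows essentially the same route as the paper: both reduce (\ref{bb1}) with $b=(1+\omega)^2$ to the quadratic for $s=\lambda^2+\lambda^{-2}$ with roots $-2$ and $4\omega+2$, then read off the eight roots from the factorization (\ref{P-factor}). Your extra details (the discriminant $16(\omega+1)^2$ and the identity $\beta_2^{-1}=\sqrt{-\omega}-\sqrt{-(1+\omega)}$ used to verify $\beta_2^2+\beta_2^{-2}=-4\omega-2$) simply make explicit the computations the paper leaves to the reader.
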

	
\begin{proof}
	We obtain from (\ref{bb1}) with $b = (1+\omega)^2$ that 
	\begin{equation*}
	\left\{ \begin{array}{r} 
	4 \omega = \lambda_1^2 + \lambda_1^{-2} + \lambda_2^2 + \lambda_2^{-2}, \\
	-8 \omega - 4 = (\lambda_1^2 + \lambda_1^{-2}) (\lambda_2^2 + \lambda_2^{-2}).
	\end{array} \right. 
	\end{equation*}	
	Eliminating either $\lambda_1^2 + \lambda_1^{-2}$ or $\lambda_2^2 + \lambda_2^{-2}$ yields a quadratic equation with two roots at
	$$
	\lambda_1^2 + \lambda_1^{-2} = -2, \quad \lambda_2^2 + \lambda_2^{-2} = 4 \omega + 2.
	$$
	Hence we have $\lambda_1 = i$ and $\lambda_2 = i \beta_2$ with $\beta_2 := \sqrt{-(1+\omega)}+\sqrt{-\omega}$.	This defines all the roots of $P(\lambda)$ in view of the factorization formula (\ref{P-factor}).
\end{proof}

\begin{remark}
	\label{rem-2}
	The double roots $\{ \pm i, \pm i \} $ of $P(\lambda)$ in Corollary \ref{cor-roots-2} do not belong to the Lax spectrum of Proposition \ref{pro_72}, whereas the roots $\{ \pm i \beta_2, \pm i \beta_2^{-1}\}$ correspond to the end points of the Lax spectrum on $i \R$. This part of the Lax spectrum satisfy the stability restriction (\ref{restriction-1}) of Proposition \ref{prop-imag}. Hence the linearized MTM system (\ref{line_11}) has the stable spectrum for the equilibrium point $\mathrm{P}_+$.
\end{remark}

\subsection{Numerical approximation of the Lax and stability spectra}

We will now approximate the Lax spectrum numerically in the complex $\lambda$-plane to confirm the conclusions of Propositions \ref{pro_71} and \ref{pro_72}.
Moreover, by using $\Lambda = \pm i \sqrt{P(\lambda)}$, we will 
also plot the stability spectrum of the linearized MTM system (\ref{line_11}) on the complex $\Lambda$-plane.

Figure \ref{fig_6} (a)--(b) gives the Lax and stability spectra for the equilibrium point $\mathrm{P}_-$ with $\omega = -0.1$ obtained from Proposition \ref{pro_71}. The red crosses display the roots of $P(\lambda)$. The numerical approximations of the Lax and stability spectra are shown in Figure \ref{fig_6} (c)-(d) for the same value $\omega = -0.1$. Details of the numerical method are explained in Appendix \ref{app_1}.

The same results are shown in Figure \ref{fig_7} for the equilibrium point $\mathrm{P}_-$ with $\omega=0.1$. Compared with Figure \ref{fig_6} and in agreement with Proposition \ref{pro_71}, the Lax spectrum on $i \R$ has gaps for $\omega = -0.1$ and no gaps for $\omega = 0.1$ and the Lax spectrum on $\mathbb{S}^1$ has gaps for $\omega = 0.1$ and no gaps for $\omega = -0.1$.

\begin{figure}[htb!]
	\centering
	\subfigure[Lax spectrum in $\lambda$-plane. ] {\includegraphics[width=2.1in,height=1.3in]{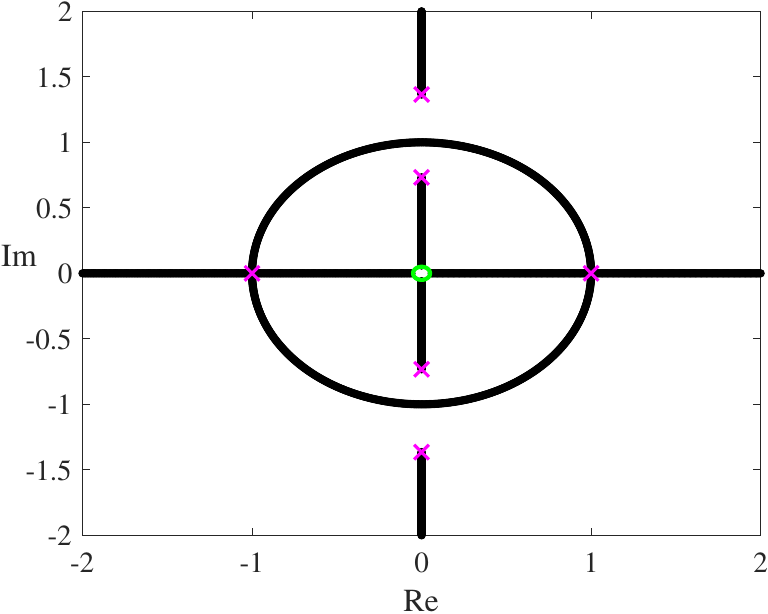}}
	\subfigure[Stability spectrum in $\Lambda$-plane.]
	{\includegraphics[width=2.1in,height=1.3in]{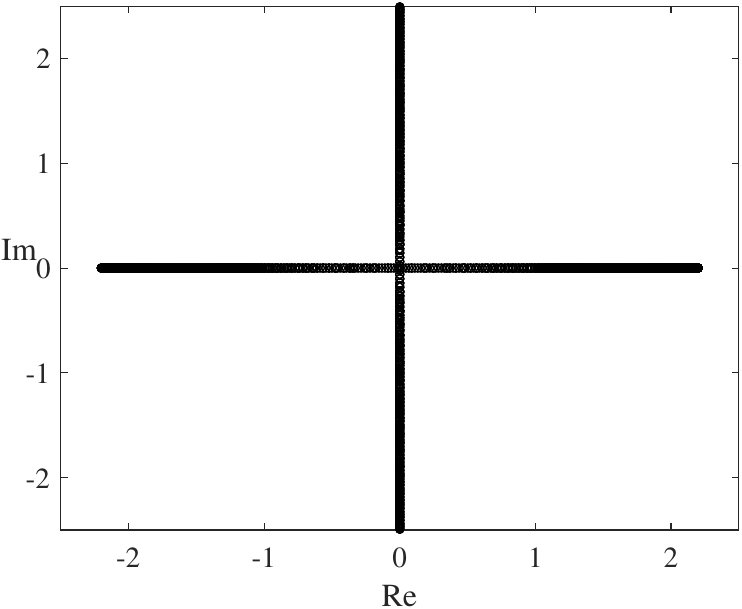}}\\
	\subfigure[Lax spectrum in $\lambda$-plane. ] {\includegraphics[width=2.1in,height=1.3in]{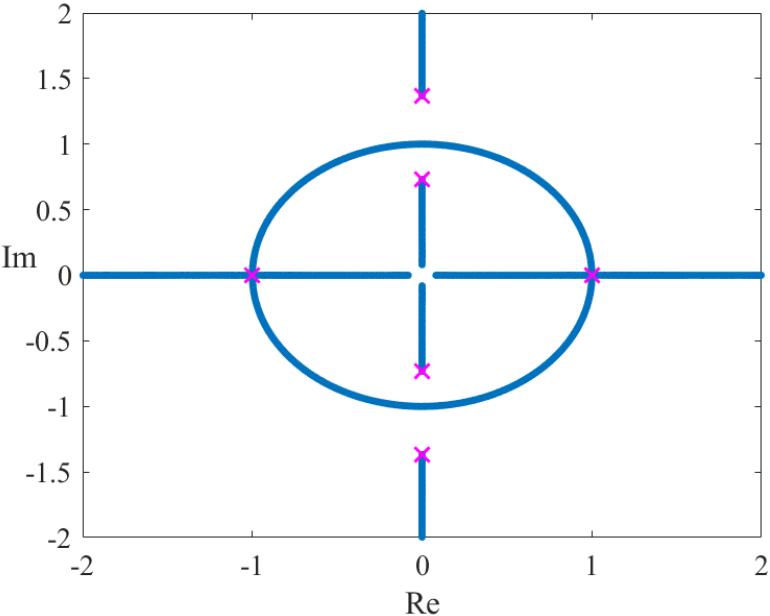}}
	\subfigure[Stability spectrum in  $\Lambda$-plane.]
	{\includegraphics[width=2.1in,height=1.3in]{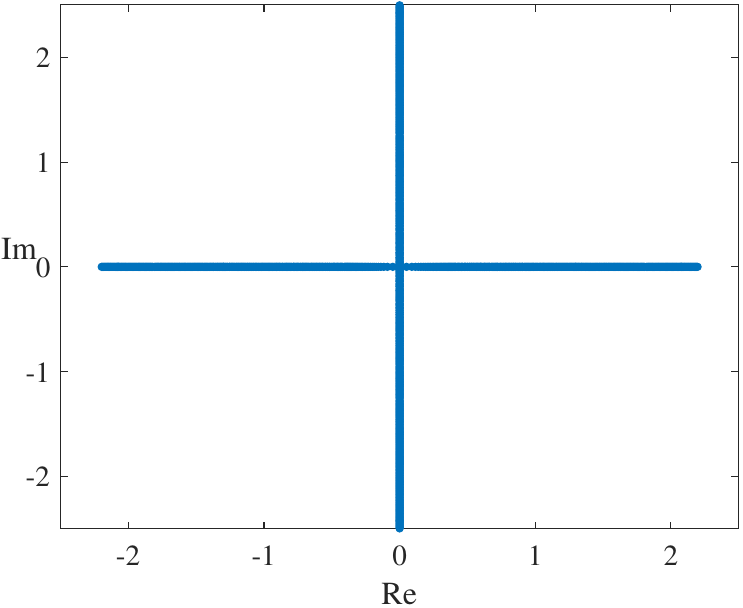}}
	\caption{Lax and stability spectra for the equilibrium point $\mathrm{P}_-$ with $\omega=-0.1$ obtained analytically (a)--(b) and numerically (c)--(d). }\label{fig_6}
\end{figure}

\begin{figure}[htb!]
	\centering
	\subfigure[Lax spectrum in $\lambda$-plane.] {\includegraphics[width=2.1in,height=1.3in]{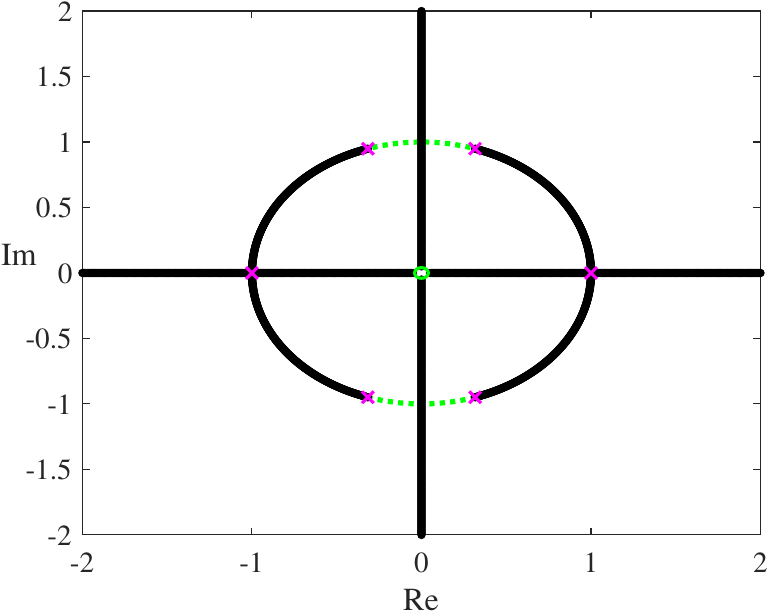}}
	\subfigure[Stability spectrum in $\Lambda$-plane.]
	{\includegraphics[width=2.1in,height=1.3in]{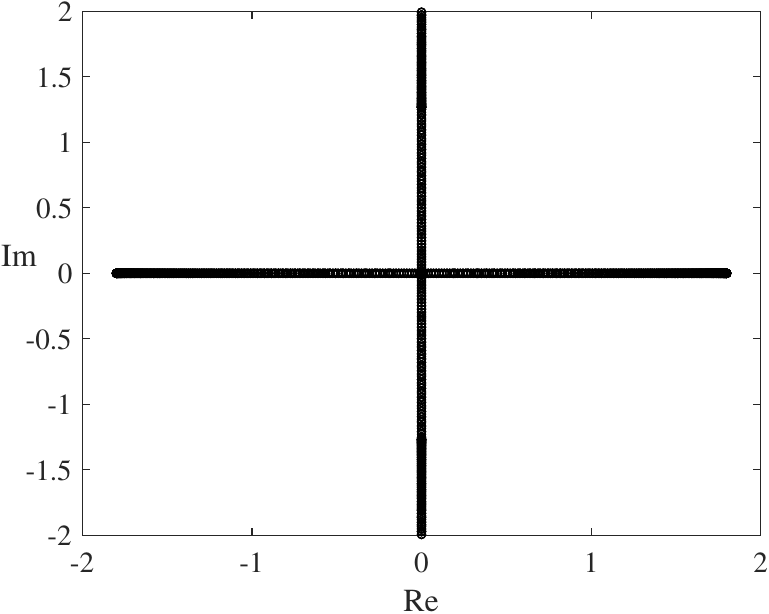}}\\
	\subfigure[Lax spectrum in $\lambda$-plane. ] {\includegraphics[width=2.1in,height=1.3in]{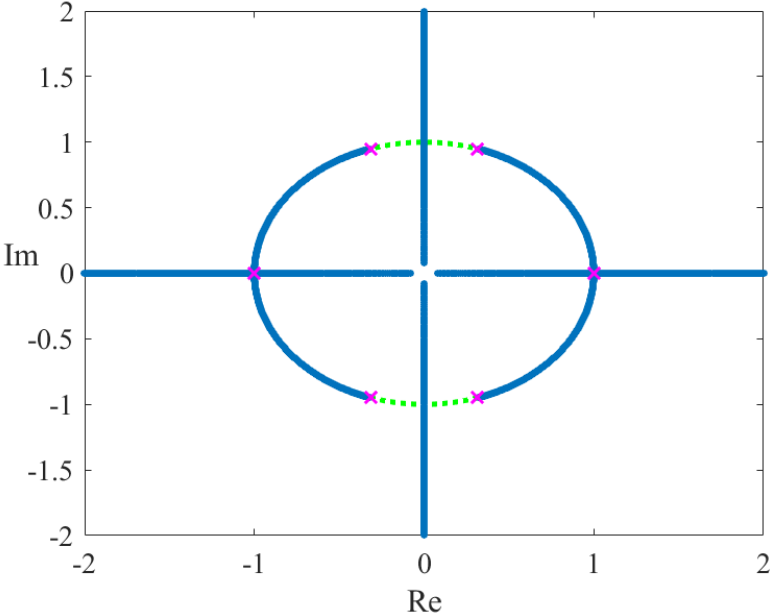}}
	\subfigure[Stability spectrum in  $\Lambda$-plane.] {\includegraphics[width=2.1in,height=1.3in]{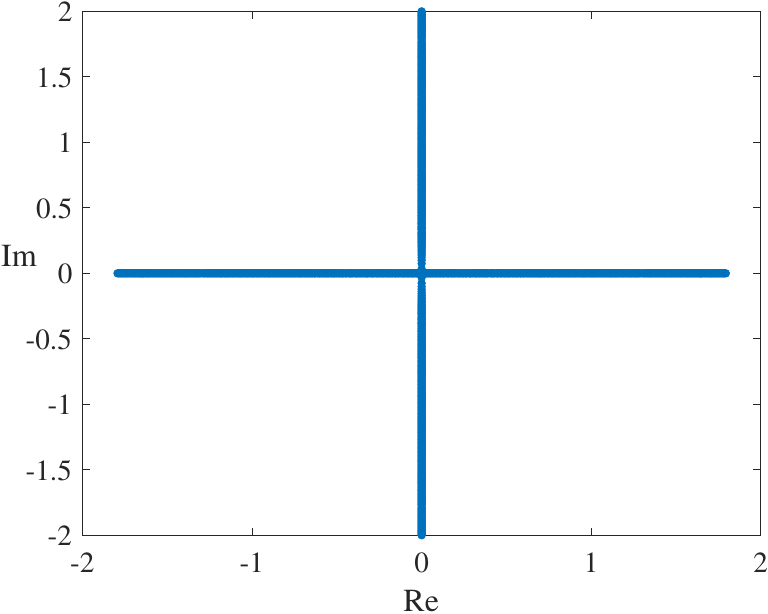}}
	\caption{The same as in Figure \ref{fig_6} but for $\mathrm{P}_-$ with $\omega=0.1$. } \label{fig_7}
\end{figure}

Figure \ref{fig_8} (a)--(b) gives the Lax and stability spectra for the equilibrium point $\mathrm{P}_+$ with $\omega = -1.2$ obtained from Proposition \ref{pro_72}. The red crosses display again the roots of $P(\lambda)$. Note that 
the double roots $\pm i$ do not belong to the Lax spectrum. 
The numerical approximations of the Lax and stability spectra 
are shown in Figure \ref{fig_8} (c)-(d) for the same value $\omega = -1.2$. The green dotted curve shows the unit circle $\mathbb{S}^1$ which is not a part of the Lax spectrum. We have again a full agreement between the theory and the numerical approximations. 

\begin{figure}[htb!]
	\centering
	\subfigure[Lax spectrum in $\lambda$-plane. ] {\includegraphics[width=2.1in,height=1.6in]{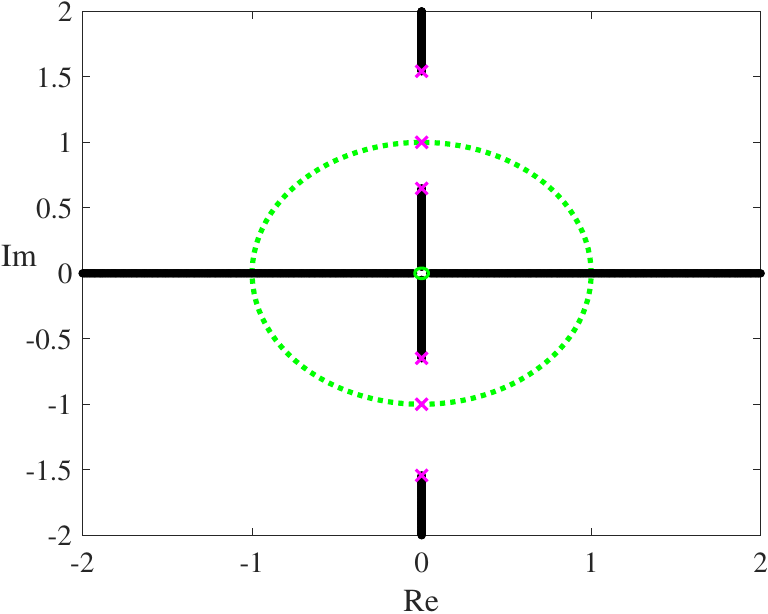}}
	\subfigure[Stability spectrum in $\Lambda$-plane.] {\includegraphics[width=2.1in,height=1.6in]{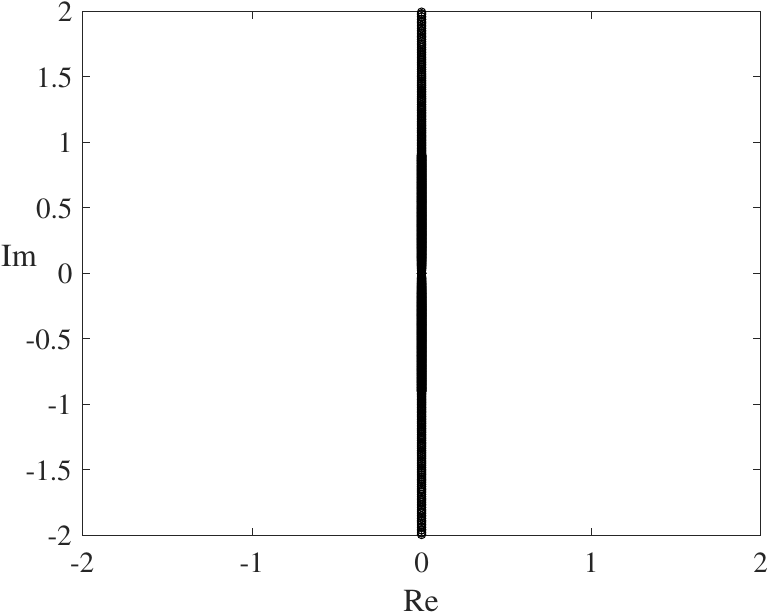}}\\
	\subfigure[Lax spectrum in $\lambda$-plane. ] {\includegraphics[width=2.1in,height=1.6in]{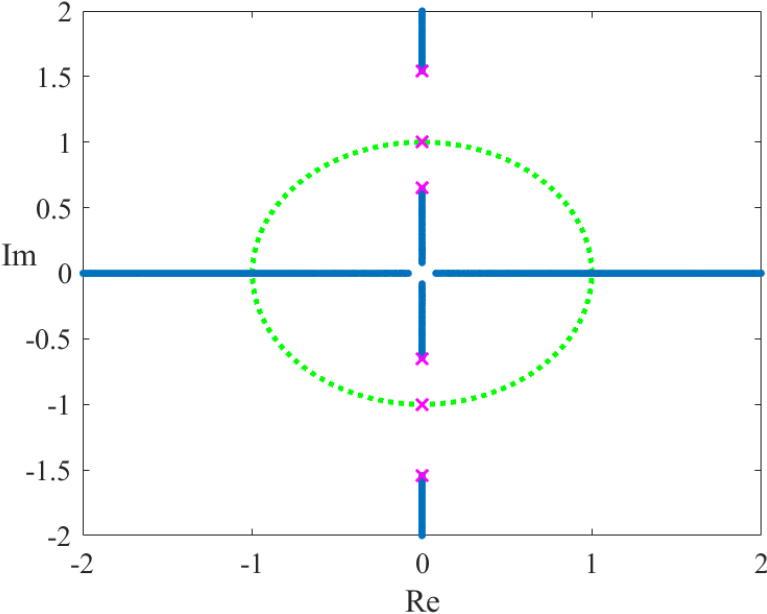}}
	\subfigure[Stability spectrum in $\Lambda$-plane.] {\includegraphics[width=2.1in,height=1.6in]{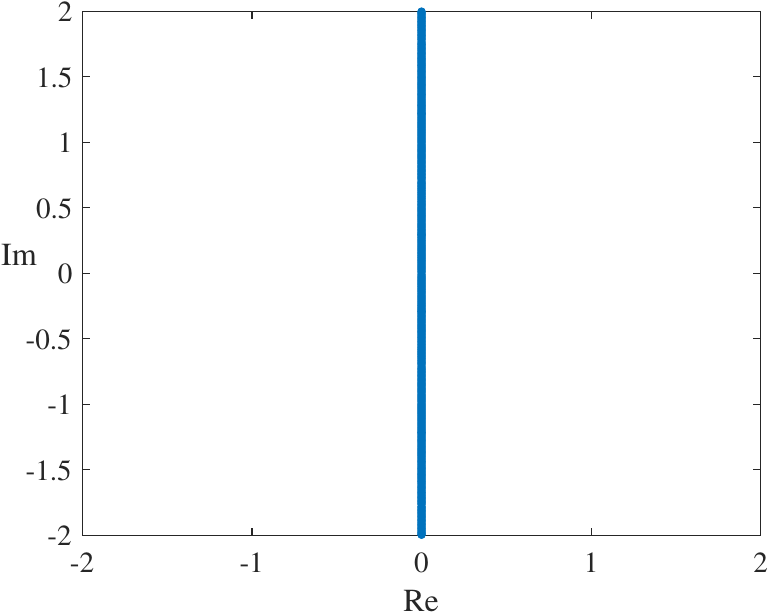}}
	\caption{The same as in Figure \ref{fig_6} but for $\mathrm{P}_+$ with $\omega=-1.2$. } \label{fig_8}
\end{figure}

\begin{remark}
	In agreement with Remarks \ref{rem-1} and \ref{rem-2}, Figures \ref{fig_6}, \ref{fig_7} and \ref{fig_8} confirm that 
	the center equilibrium point $\mathrm{P}_-$ is spectrally unstable 
	and the saddle equilibrium point $\mathrm{P}_+$ is spectrally stable in the spectral stability problem (\ref{line_11}). 
\end{remark}

\section{Lax and stability spectra for standing periodic waves}
\label{sec_periodic_1}

Here we approximate  the Lax and stability spectra for the standing periodic waves (\ref{standing-wave}) with $V = \bar{U}$. By using (\ref{periodic-waves}), (\ref{hamiton_2}), and (\ref{re_eq_2}) with $b = -2\omega\xi-\xi^2-2\xi\cos\theta$, the profile $\xi$ of the standing 
periodic waves is obtained from the first-order invariant
\begin{equation}
\label{exact_1}
(\xi')^2+R(\xi)=0,
\end{equation}
where 
\begin{align*}
R(\xi) &=(b+2\omega\xi+\xi^2)^2-4\xi^2 \\
&=\xi^4+4\omega\xi^3+(4\omega^2+2b-4)\xi^2+4b\omega\xi+b^2 \\
&=(\xi-u_1)(\xi-u_2)(\xi-u_3)(\xi-u_4),
\end{align*}
where $(u_1,u_2,u_3,u_4)$ satisfy
	\begin{equation}\label{521}
	\begin{cases}
	u_1+u_2+u_3+u_4=-4\omega,\\
	u_1u_2+u_1u_3+u_1u_4+u_2u_3+u_2u_4+u_3u_4=4\omega^2+2b-4,\\
	u_1u_2u_3+u_1u_2u_4+u_1u_3u_4+u_2u_3u_4=-4b\omega,\\
	u_1u_2u_3u_4=b^2.
	\end{cases}	
	\end{equation}
The following proposition related roots $\{ u_1,u_2,u_3,u_4 \}$ of the quartic polynomial $R(\xi)$ in (\ref{exact_1}) to roots $\{ \pm \lambda_1, \pm \lambda_2, \pm \lambda_1^{-1}, \pm \lambda_2^{-2}\}$ of $P(\lambda)$ in (\ref{polynomial-P}). For the cubic and derivative NLS equations, such relations were found by Kamchatnov \cite{Kam1,Kam2}.

\begin{proposition}\label{pro_u1}
	Roots $(u_1,u_2,u_3,u_4)$ of $R(\xi)$ are related to roots $\{ \pm \lambda_1, \pm \lambda_2, \pm \lambda_1^{-1}, \pm \lambda_2^{-1}\}$ of $P(\lambda)$ by
	\begin{equation}
	\begin{cases}
	u_1=-\frac{1}{4}(\lambda_{1}-\lambda_1^{-1}-\lambda_2+\lambda_2^{-1})^2,\\
	u_2=-\frac{1}{4}(\lambda_{1}+\lambda_1^{-1}-\lambda_2-\lambda_2^{-1})^2,\\
	u_3=-\frac{1}{4}(\lambda_{1}-\lambda_1^{-1}+\lambda_2-\lambda_2^{-1})^2,\\
	u_4=-\frac{1}{4}(\lambda_{1}+\lambda_1^{-1}+\lambda_2+\lambda_2^{-1})^2.\\
	\end{cases}	
	\label{u1-u4}
	\end{equation}
\end{proposition}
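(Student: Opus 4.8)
The plan is to bypass verifying the four symmetric-function identities in (\ref{521}) one at a time, and instead to factorize $R(\xi)$ explicitly. Since $R(\xi) = (b + 2\omega\xi + \xi^2)^2 - 4\xi^2$ as defined above, it is a difference of two squares and therefore splits as
\begin{equation*}
R(\xi) = \bigl(\xi^2 + 2(\omega-1)\xi + b\bigr)\bigl(\xi^2 + 2(\omega+1)\xi + b\bigr).
\end{equation*}
One checks that the product on the right has $\xi^3$-coefficient $4\omega$, $\xi^2$-coefficient $2b + 4(\omega^2-1)$, $\xi^1$-coefficient $4b\omega$, and constant term $b^2$, which matches $R$. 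Hence $\{u_1,u_2,u_3,u_4\}$ are exactly the roots of the two quadratic factors, namely $-(\omega-1) \pm \sqrt{(\omega-1)^2 - b}$ and $-(\omega+1) \pm \sqrt{(\omega+1)^2 - b}$. It then only remains to rewrite the two linear coefficients and the two discriminants in terms of $\lambda_1,\lambda_2$.

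To carry this out I would introduce the abbreviations $s_1 := \lambda_1^2 + \lambda_1^{-2}$ and $s_2 := \lambda_2^2 + \lambda_2^{-2}$, so that relations (\ref{bb1}) read $4\omega = s_1 + s_2$ and $4\omega^2 - 4b = s_1 s_2$. Eliminating $\omega$ yields the clean identity $16b = (s_1 - s_2)^2$. The two elementary observations driving everything are the completed-square identities $s_j - 2 = (\lambda_j - \lambda_j^{-1})^2$ and $s_j + 2 = (\lambda_j + \lambda_j^{-1})^2$ for $j=1,2$.

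For the first factor a short computation gives $-(\omega-1) = -\tfrac14\bigl[(s_1-2)+(s_2-2)\bigr]$, while factoring $(s_1+s_2-4)^2 - (s_1-s_2)^2$ as a difference of squares produces
\begin{equation*}
(\omega-1)^2 - b = \frac{(s_1-2)(s_2-2)}{4} = \frac{(\lambda_1-\lambda_1^{-1})^2(\lambda_2-\lambda_2^{-1})^2}{4}.
\end{equation*}
Writing $a := \lambda_1 - \lambda_1^{-1}$ and $c := \lambda_2 - \lambda_2^{-1}$, the two roots of the first quadratic become $-\tfrac14(a^2+c^2) \pm \tfrac12 ac = -\tfrac14(a \mp c)^2$, which are precisely $u_1$ and $u_3$ in (\ref{u1-u4}); the two square-root signs reproduce both sign combinations, so no separate sign discussion is needed. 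The identical computation for the second factor, now with $s_j + 2 = (\lambda_j + \lambda_j^{-1})^2$ in place of $s_j - 2$, gives $(\omega+1)^2 - b = \tfrac14(s_1+2)(s_2+2)$ and hence the roots $-\tfrac14\bigl[(\lambda_1+\lambda_1^{-1}) \mp (\lambda_2+\lambda_2^{-1})\bigr]^2$, i.e. $u_2$ and $u_4$.

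The genuinely clever step is spotting the difference-of-squares factorization of $R$; once it is in hand the argument is mechanical. The only point demanding care is the sign bookkeeping: confirming $16b = (s_1-s_2)^2$ from (\ref{bb1}) and checking that the completed-square identities $(\omega \mp 1)^2 - b = \tfrac14(s_1 \mp 2)(s_2 \mp 2)$ emerge with the correct signs. Finally, the labeling of which root is called $u_1$ versus $u_3$ (and $u_2$ versus $u_4$) is immaterial, since $R$ determines only the unordered multiset of its roots.
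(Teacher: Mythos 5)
Your proof is correct, and it takes a genuinely different route from the paper's. The paper proves Proposition \ref{pro_u1} by brute-force verification of the four Vieta relations (\ref{521}): the first two symmetric functions are checked by hand, while the last two are verified by substituting (\ref{u1-u4}) into MAPLE. You instead exploit the fact that $R(\xi) = (b+2\omega\xi+\xi^2)^2 - 4\xi^2$ is a difference of squares, splitting it into the two quadratics $\xi^2 + 2(\omega\mp1)\xi + b$, and then solve each quadratic explicitly. The key identities you use all check out: from (\ref{bb1}) one indeed gets $16b=(s_1-s_2)^2$ with $s_j=\lambda_j^2+\lambda_j^{-2}$, the discriminants factor as $(\omega\mp1)^2-b=\tfrac14(s_1\mp2)(s_2\mp2)$, and $s_j\mp2=(\lambda_j\mp\lambda_j^{-1})^2$, so the roots of the two quadratics are exactly the four expressions in (\ref{u1-u4}) (the $\mp$ in the quadratic formula absorbing the sign ambiguity of the square root, as you note). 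Your observation that the labeling of the roots is immaterial is also consistent with what the paper actually proves, since (\ref{521}) only pins down the unordered multiset. What your approach buys is a fully hand-checkable, computer-algebra-free argument that also explains structurally \emph{why} each root is the negative of a perfect square; what the paper's approach buys is that it works directly from the symmetric-function relations without requiring one to notice the difference-of-squares factorization, at the cost of delegating two identities to symbolic software.
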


\begin{proof}
	The proof is a direct calculation. We show that relations (\ref{bb1}) and (\ref{u1-u4}) recover relations (\ref{521}). For the first two equations of system (\ref{521}), we obtain 
	\begin{align*}
u_1+u_2+u_3+u_4 &= -(\lambda_1^2+\lambda_1^{-2}+\lambda_2^2+\lambda_2^{-2}) = -4\omega
\end{align*}
and 
\begin{align*}
& \qquad u_1u_2+u_1u_3+u_1u_4+u_2u_3+u_2u_4+u_3u_4 \\
&=\frac{3}{8}(\lambda_1^2 + \lambda_1^{-2} + \lambda_2^2 + \lambda_1^{-2})^2
	- \frac{1}{2} (\lambda_1^2 + \lambda_1^{-2})
	(\lambda_2^2+\lambda_2^{-2}) -4\\
	&= 4\omega^2+2b-4.
\end{align*}
For the last two equations of system (\ref{521}), we obtain from 
 (\ref{bb1}) that 
	\begin{align*}
	4b &= \frac{1}{4}(\lambda_1^2 +\lambda_1^{-2} +\lambda_2^2 +\lambda_2^{-2})^2 - (\lambda_1^2 + \lambda_1^{-2})(\lambda_2^2+\lambda_2^{-2}) \\
	&= \frac{(\lambda_1\lambda_2-1)^2(\lambda_1\lambda_2+1)^2(\lambda_1-\lambda_2)^2(\lambda_1+\lambda_2)^2}{4\lambda_1^4\lambda_2^4},\\
	\end{align*}
which yields
	\begin{equation*}
	b^2=\frac{(\lambda_1\lambda_2-1)^4(\lambda_1\lambda_2+1)^4(\lambda_1-\lambda_2)^4(\lambda_1+\lambda_2)^4}{256\lambda_1^8\lambda_2^8},
	\end{equation*}
	and
	\begin{equation*}
	4b\omega
	=\frac{(\lambda_1\lambda_2-1)^2(\lambda_1\lambda_2+1)^2(\lambda_1-\lambda_2)^2(\lambda_1+\lambda_2)^2(\lambda_1^2+\lambda_2^2)(\lambda_1^2\lambda_2^2+1)}{16\lambda_1^6\lambda_2^6}.
	\end{equation*}
These two equations define the right-hand sides of the last two equations of system (\ref{521}). Substituting (\ref{u1-u4}) to the left-hand sides of the last two equations of system (\ref{521}) and using the symbolic software program MAPLE, we verify their equivalence with the right-hand sides. 
\end{proof}

Next, we identify the possible roots $\{ \pm \lambda_1, \pm \lambda_2, \pm \lambda_1^{-1}, \pm \lambda_2^{-1}\}$ of $P(\lambda)$ in each region 
of the parameter plane $(b,\omega)$, see Figure \ref{fig00}, where the standing periodic waves exist.

\begin{proposition}
	\label{prop-roots}
Recall that $\{ \pm \lambda_1, \pm \lambda_2, \pm \lambda_1^{-1}, \pm \lambda_2^{-1}\}$ are roots of $P(\lambda)$ given by (\ref{polynomial-P}). Then, we have 
	\begin{itemize}
		\item $\lambda_1 = \bar{\lambda}_2 \in \mathbb{C} \backslash ( \mathbb{R} \cup i\mathbb{R} \cup \mathbb{S}^1 )$ in region {\rm \Rmnum{1}},
		\item $\lambda_1,\lambda_2 \in \mathbb{S}^1 \backslash ( \mathbb{R} \cup i\mathbb{R})$ with $\lambda_1 \neq \lambda_2$ in region $\rm \Rmnum{2}_A$,
		\item $\lambda_1 \in \mathbb{S}^1 \backslash ( \mathbb{R} \cup i\mathbb{R})$ and $\lambda_2 = i \beta_2 \in i \R \backslash (\mathbb{R} \cup \mathbb{S}^1 )$ in region $\rm \Rmnum{2}_B$,
		\item $\lambda_1 = i\beta_1 \in i \R \backslash (\mathbb{R} \cup \mathbb{S}^1 )$ and $\lambda_2 = i \beta_2 \in i \R \backslash (\mathbb{R} \cup \mathbb{S}^1 )$  with $\beta_1 \neq \beta_2$ in region {\rm \Rmnum{3}}.
	\end{itemize}
\end{proposition}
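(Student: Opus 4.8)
The plan is to reduce the whole statement to the single auxiliary variable $s := \lambda^2 + \lambda^{-2}$, since both the relations (\ref{bb1}) and the location of the roots of $P(\lambda)$ are controlled by $s$. Setting $s_1 := \lambda_1^2 + \lambda_1^{-2}$ and $s_2 := \lambda_2^2 + \lambda_2^{-2}$, I read off from (\ref{bb1}) that $s_1 + s_2 = 4\omega$ and $s_1 s_2 = 4\omega^2 - 4b$, so that $s_1,s_2$ are the two roots of $s^2 - 4\omega s + 4(\omega^2 - b) = 0$, namely $s_{1,2} = 2\omega \pm 2\sqrt{b}$. This is the only point at which $(b,\omega)$ enter, and the proposition becomes the question of where the two numbers $2\omega \pm 2\sqrt{b}$ lie.

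Next I would record the elementary classification of the fibres of $\lambda \mapsto s$. Writing $\lambda = r e^{i\phi}$ gives $\mathrm{Im}(s) = (r^2 - r^{-2})\sin 2\phi$, so $s$ is real precisely when $r = 1$ or $\sin 2\phi = 0$, that is, when $\lambda \in \mathbb{R} \cup i\mathbb{R} \cup \mathbb{S}^1$; and $\mathrm{Re}(s) = (r^2 + r^{-2})\cos 2\phi$ then pins down the ranges $s \in [2,\infty)$ for $\lambda \in \mathbb{R}\setminus\{0\}$, $s \in (-\infty,-2]$ for $\lambda \in i\mathbb{R}\setminus\{0\}$, and $s \in [-2,2]$ for $\lambda \in \mathbb{S}^1$. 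Since the equation $\lambda^2 + \lambda^{-2} = s$ fixes $\lambda^2$ up to the involution $w \mapsto w^{-1}$, each admissible $s$ produces exactly one quadruplet $\{\pm\lambda, \pm\lambda^{-1}\}$, in agreement with (\ref{P-factor}). Because $P(\lambda)$ in (\ref{polynomial-P}) has real coefficients, the map $\lambda \mapsto \bar\lambda$ sends $s \mapsto \bar s$; hence a non-real $s_1 = \bar{s}_2$ forces $\lambda_2 = \bar\lambda_1$ off all three curves, recovering the relations in the remark following (\ref{bb1}).

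With these two ingredients the proof is a case check on the sign of $b$ and on the position of $s_{1,2}$ relative to $\pm 2$. In region $\rm \Rmnum{1}$, where $b < 0$, the quantity $\sqrt{b}$ is purely imaginary, so $s_1 = \bar{s}_2 \notin \mathbb{R}$ and therefore $\lambda_1 = \bar\lambda_2 \in \mathbb{C}\setminus(\mathbb{R}\cup i\mathbb{R}\cup\mathbb{S}^1)$. When $b > 0$ both $s_j$ are real, and I would use that $s_1 = 2\omega + 2\sqrt{b} < 2$ is equivalent to $b < (1-\omega)^2$ (the red curve) and that $s_2 = 2\omega - 2\sqrt{b} > -2$ is equivalent to $b < (1+\omega)^2$ (the blue curve). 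In region $\rm \Rmnum{3}$, where $\omega < -1$ and $0 < b < (1+\omega)^2$, this forces $s_1 < -2$ and hence both quadruplets onto $i\mathbb{R}$; region $\rm \Rmnum{2}_A$ is exactly the locus where both $s_j \in (-2,2)$, putting both quadruplets on $\mathbb{S}^1$; and region $\rm \Rmnum{2}_B$ is where $b > (1+\omega)^2$, so $s_1 \in (-2,2)$ but $s_2 < -2$, giving one quadruplet on $\mathbb{S}^1$ and one on $i\mathbb{R}$. In each case the required distinctness ($\lambda_1 \neq \lambda_2$, respectively $\beta_1 \neq \beta_2$) follows from $s_1 \neq s_2$, i.e. from $b \neq 0$.

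The computations are all elementary, so the only genuine difficulty is bookkeeping with signs. The main obstacle is resolving the inequalities $\sqrt{b} \lessgtr 1 \pm \omega$ consistently, since $1-\omega$ and $1+\omega$ change sign at $\omega = 1$ and $\omega = -1$; this is exactly why the curves $b = (1\pm\omega)^2$ carry the restrictions $\omega \le 1$ and $\omega \le -1$, and why the split of region $\rm \Rmnum{2}$ along $b = (1+\omega)^2$ is effective only for $\omega \in (-1,0)$ (for $\omega \ge 0$ the blue curve lies above the region, which is then entirely $\rm \Rmnum{2}_A$, and for $\omega \le -1$ it lies below, so the region is entirely $\rm \Rmnum{2}_B$). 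One should also check that the roots stay away from $\pm 1$ and $\pm i$ in the interior of each region, which holds because $s_j = \pm 2$ can occur only on the bounding curves.
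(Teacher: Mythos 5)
Your proposal is correct and follows essentially the same route as the paper: both reduce $P(\lambda)=0$ to $\lambda^2+\lambda^{-2}=2\omega\pm 2\sqrt{b}$ and then run a case check on whether these two values are non-real, lie in $(-2,2)$, or lie in $(-\infty,-2)$, which is exactly the content of the paper's equations (\ref{p_roots})--(\ref{416}). Your explicit classification of the fibres of $\lambda\mapsto\lambda^2+\lambda^{-2}$ over $\mathbb{R}$, $i\mathbb{R}$, and $\mathbb{S}^1$, and your remarks on the boundary curves and on distinctness via $s_1\neq s_2$, are a slightly more systematic packaging of the same computation.
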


\begin{proof}
Solving $P(\lambda)= 0$ in (\ref{polynomial-P}) yields
	\begin{equation}\label{p_roots}
	\lambda^2+\frac{1}{\lambda^2}=2\omega\pm2\sqrt{b}.	
	\end{equation}
Since $\{ \pm \lambda_1, \pm \lambda_2, \pm \lambda_1^{-1}, \pm \lambda_2^{-2}\}$ are roots of $P(\lambda)$, we have 
	\begin{equation}\label{416}
	\left\{ 
	\begin{aligned}
	\lambda_1^2    &= (\omega+\sqrt{b})+\sqrt{(\omega+\sqrt{b})^2-1},\\
	\lambda_1^{-2} &= (\omega+\sqrt{b})-\sqrt{(\omega+\sqrt{b})^2-1},\\
	\lambda_2^2    &= (\omega-\sqrt{b})+\sqrt{(\omega-\sqrt{b})^2-1},\\
	\lambda_2^{-2} &= (\omega-\sqrt{b})-\sqrt{(\omega-\sqrt{b})^2-1}.
	\end{aligned}\right.
	\end{equation}
This allows us to consider different cases in each existence region 
of the parameter plane $(b,\omega)$ given by Proposition \ref{pro_40}.
	
\begin{itemize}
	\item Since $b \in (-\infty,0)$ and $\omega \in \mathbb{R}$ in region \Rmnum{1}, we obtain from (\ref{416}) that $\lambda_1 = \bar{\lambda}_2 \in \mathbb{C} \backslash ( \mathbb{R} \cup i\mathbb{R} \cup \mathbb{S}^1 )$.\\
	
	\item We have $b \in \big( 0, \min\{(1-\omega)^2, (1+\omega)^2\} \big)$ and  $\omega \in (-1,1)$ in region $\rm \Rmnum{2}_A$. It is sufficient to consider the case of $\omega \in [0,1)$ for which  $b \in (0,(1-\omega)^2)$, since the case of $\omega \in (-1,0)$ is similar. Since $\omega+\sqrt{b}, \omega-\sqrt{b} \in (-1,1)$, we obtain from (\ref{416}) that $\lambda_1,\lambda_2 \in \mathbb{S}^1 \backslash ( \mathbb{R} \cup i\mathbb{R})$ with $\lambda_1 \neq \lambda_2$. \\
	
	\item We have $b \in \big((1+\omega)^2,(1-\omega)^2\big)$ and $\omega\in(-\infty,0)$ in region $\rm \Rmnum{2}_B$. Since $\omega+\sqrt{b}\in(-1,1)$ and 
	$\omega-\sqrt{b}<-1$, we obtain from (\ref{416}) that $\lambda_1 \in \mathbb{S}^1 \backslash ( \mathbb{R} \cup i\mathbb{R})$ and $\lambda_2 = i \beta_2 \in i \R \backslash (\mathbb{R} \cup \mathbb{S}^1 )$. \\
	
	\item We have $b \in (0,(1+\omega)^2)$ with $\omega \in (-\infty,-1)$ 
	in region \Rmnum{3}. Since $\omega+\sqrt{b}<-1$ and $\omega-\sqrt{b}<-1$, we obtain from (\ref{416}) that $\lambda_1 = i\beta_1 \in i \R \backslash (\mathbb{R} \cup \mathbb{S}^1 )$ and $\lambda_2 = i \beta_2 \in i \R \backslash (\mathbb{R} \cup \mathbb{S}^1 )$  with $\beta_1 \neq \beta_2$. 
\end{itemize}	
This concludes the analysis since the standing periodic waves do not exist in region  \Rmnum{4}.
\end{proof}

We use Propositions \ref{pro_u1} and \ref{prop-roots} in order to investigate the spectral stability of the standing periodic waves in different regions of the parameter plane $(b,\omega)$, see Figure \ref{fig00}. In each region, we give the explicit representation for the periodic solution $\xi(x)$ and the roots of $P(\lambda)$, after which we select several sample points to approximate the Lax spectrum numerically according to the method explained in Appendix \ref{app_1}. The sample points in each region are shown in Figure \ref{area6}. From the numerical approximations of the Lax spectrum, we find the stability spectrum by using (\ref{Lambda-Omega}). This will verify the stability and instability conclusions shown in Figure \ref{fig_total}. 

\begin{figure}[H]
	\centering
	\includegraphics[width=3.6in,height=2.5in]{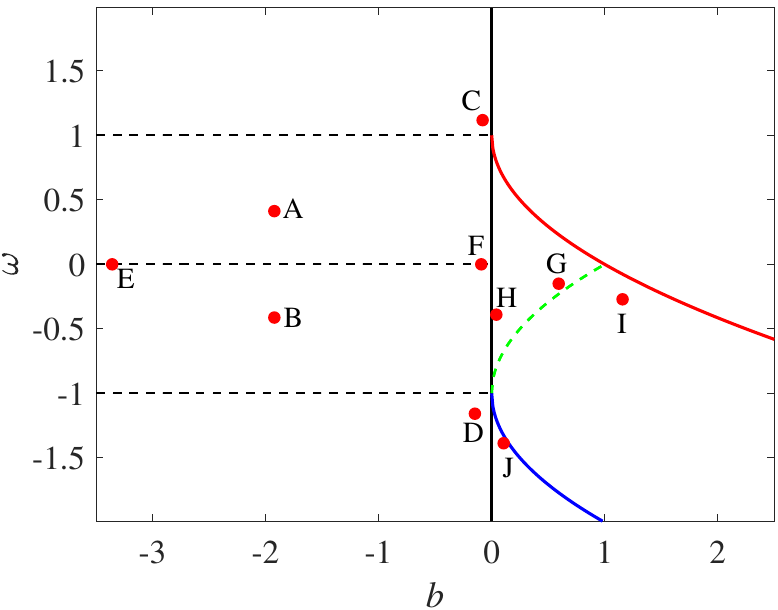}
	\caption{Sample points in each region on the $(b,\omega)$ plane for numerical approximations.}
	\label{area6}
\end{figure}

\subsection{Standing periodic waves in region \Rmnum{1}}

By Proposition \ref{prop-roots}, roots of $P(\lambda)$ form two complex quadruplets, which are reflected symmetrically relative to the unit circle $\mathbb{S}^1$.  
For definiteness, we write $\lambda_1=\alpha_1+i\beta_1$ and 
$\lambda_2= \alpha_1 -i\beta_1$ with $\alpha_1^2 + \beta_1^2 \neq 1$ 
so that equations (\ref{521}) yield
\begin{equation*}
\omega =\frac{(\alpha_1^2-\beta_1^2)((\alpha_1^2+\beta_1^2)^2 + 1)}{2(\alpha_1^2 +\beta_1^2)^2}
\end{equation*}
and
\begin{equation*}
b=-\frac{(\alpha_1^2+\beta_1^2-1)^2(\alpha_1^2+\beta_1^2+1)^2\beta_1^2\alpha_1^2}{(\alpha_1^2+\beta_1^2)^4}.
\end{equation*}
By using (\ref{u1-u4}), we obtain 
\begin{equation*}
\begin{cases}
u_1= \frac{\beta_1^2 (\alpha_1^2+\beta_1^2+1)^2}{(\alpha_1^2+\beta_1^2)^2},\\
u_2=\frac{\beta_1^2 (\alpha_1^2+\beta_1^2-1)^2}{(\alpha_1^2+\beta_1^2)^2},\\
u_3=-\frac{\alpha_1^2 (\alpha_1^2+\beta_1^2-1)^2}{(\alpha_1^2+\beta_1^2)^2},\\
u_4=-\frac{\alpha_1^2 (\alpha_1^2+\beta_1^2+1)^2}{(\alpha_1^2+\beta_1^2)^2},
\end{cases}
\end{equation*}
which satisfy the ordering $u_4\leq u_3\leq 0\leq u_2\leq u_1$. The exact  periodic solution of the first-order invariant (\ref{exact_1}) with this ordering can be written in the explicit form, see \cite{CP_1}:
\begin{equation}\label{solution_1}
\xi(x)=u_4+\frac{(u_1-u_4)(u_2-u_4)}{(u_2-u_4)+(u_1-u_2){\rm sn}^2(\nu x;k)},
\end{equation}
where 
$$
\nu=\frac{1}{2}\sqrt{(u_1-u_3)(u_2-u_4)}, \quad 
k=\frac{\sqrt{(u_1-u_2)(u_3-u_4)}}{\sqrt{(u_1-u_3)(u_2-u_4)}}.
$$
The periodic solution $\xi(x)$ in (\ref{solution_1}) is located in the interval $[u_2,u_1]$ and has period $L=2K(k)\nu^{-1}$. The component $\theta(x)$ 
in the standing periodic waves (\ref{standing-wave}) and (\ref{periodic-waves}) is given by 
\begin{equation}
\label{theta}
\theta(x)=	\left\{ \begin{array}{ll}
\arccos{ \frac{-(\xi^2+2\omega\xi+b)}{2\xi}}, \quad & 0\leq\theta\leq\pi,\\
2\pi-\arccos{ \frac{-(\xi^2+2\omega\xi+b)}{2\xi}}, \quad & \pi \leq \theta \leq 2\pi.
\end{array} \right.
\end{equation}
Since 
$$
\theta'= \xi - b \xi^{-1} > 0
$$
follows from (\ref{re_eq_2}) with $b =-2\omega \xi-\xi^2 -2 \xi \cos \theta$,
the mapping $x \mapsto \theta(x)$ is monotonically increasing in each period, in agreement with the phase portraits on Figures \ref{fig_w_15}, \ref{fig_w_1}, and \ref{fig_w_16} for $b < 0$.

\begin{remark}
As $b \to 0$, the point inside the existence region {\rm \Rmnum{1}} approaches to the boundary given by the black vertical line on Figure \ref{fig00}. At this boundary, bifurcations of the two complex quadruplets 
in the roots of $P(\lambda)$ depend on the value of $\omega \in \R$. 
\begin{itemize}
	\item If $\omega \in (-1,1)$, each of the four pairs of roots of $P(\lambda)$ coalesce on $\mathbb{S}^1$. The solution continued inside region $\rm \Rmnum{2}_A$ has two complex quadruplets in the roots of $P(\lambda)$ on $\mathbb{S}^1$. \\
	
	\item  If $\omega \in (1,\infty)$, each of the four pairs of roots 
of $P(\lambda)$ coalesce on $\mathbb{R}$. The four double roots on $\R$ are reflected symmetrically relatively to the values $\pm 1$. The solution cannot be continued inside region {\rm \Rmnum{4}}. \\

\item If $\omega \in (-\infty,-1)$, each of the four pairs of roots 
of $P(\lambda)$ coalesce on $i \mathbb{R}$. The four double roots on $i \R$ are reflected  symmetrically relatively to the values $\pm i$. The solution continued inside region {\rm \Rmnum{3}} has four pairs of roots on $i \R$ which are reflected symmetrically relatively to the values $\pm i$.
\end{itemize}
\end{remark}

To compute the Lax and stability spectra, we use the numerical method from 
Appendix \ref{app_1} and approximate the Floquet spectrum at different points 
in region \Rmnum{1} shown in Figure \ref{area6} after which we use the transformation $\Lambda=\pm{i}\sqrt{P(\lambda)}$ to approximate the stability spectrum of the standing periodic waves. 

At point A, we take $\alpha_1 = 1.4$ and $\beta_1=1.1$. The Lax spectrum is shown in Figure \ref{fig_13}(a), where the red crosses represent roots of $P(\lambda)$ as two complex quadruplets, which are reflected symmetrically about $\mathbb{S}^1$. The stability spectrum is shown in Figure \ref{fig_13}(b) and contains the figure-eight instability band. 

Figure \ref{fig_133} shows similar Lax and stability spectra at point B, 
for which we take $\alpha_1=1.1$ and $\beta_1=1.4$. The Lax spectra between the two cases are only different by the convexity of the spectral bands connecting the complex quadruplets, whereas the stability spectra are very similar and contain the figure-eight instability band. 

\begin{figure}[htb!]
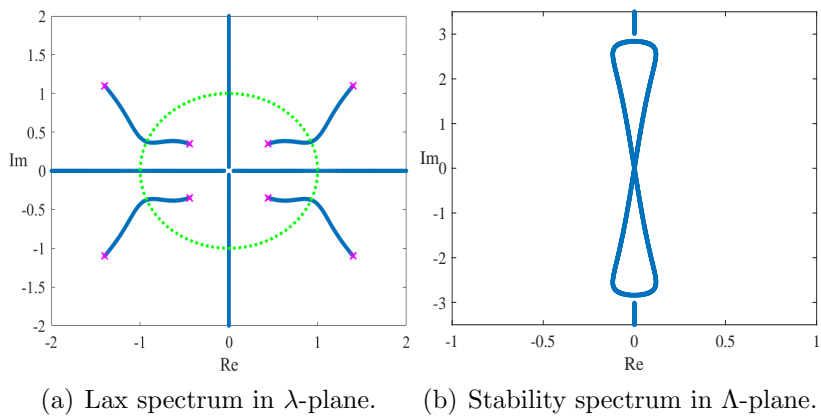

	\centering
	\subfigure[Lax spectrum in $\lambda$-plane.]
	{\includegraphics[width=2.1in,height=1.9in]{pic_area_1_1_1}}
	\subfigure[Stability spectrum in $\Lambda$-plane.]
	{\includegraphics[width=2.1in,height=1.9in]{pic_area_1_1_2}}
	\caption{The Lax and stability spectra for the standing periodic wave  with  $\alpha_1=1.4$ and $\beta_1=1.1$ at point A.}
	\label{fig_13}
\end{figure}

\begin{figure}[htb!]
	\centering
	\subfigure[Lax spectrum in $\lambda$-plane.] {\includegraphics[width=2.1in,height=1.9in]{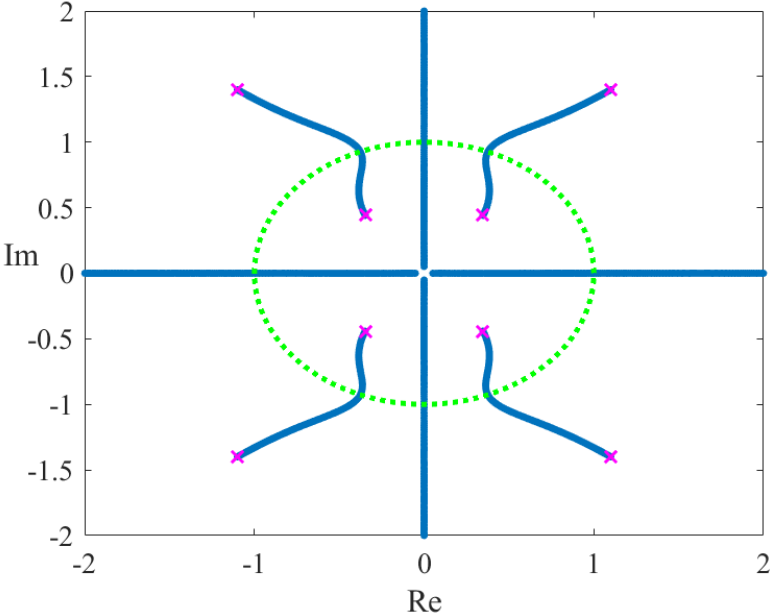}}
	\subfigure[Stability spectrum in $\Lambda$-plane.]
	{\includegraphics[width=2.1in,height=1.9in]{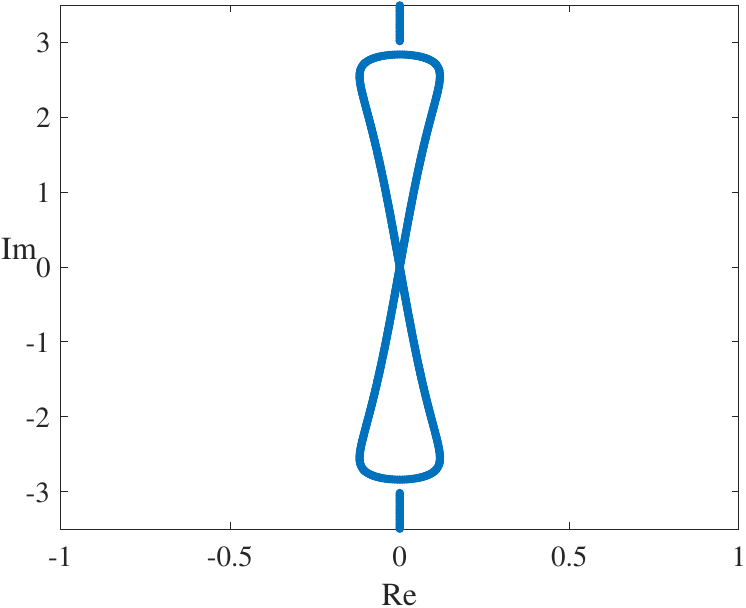}}
	\caption{The same as Figure \ref{fig_13} but for $\alpha_1=1.1$ and $\beta_1=1.4$ at point B.}\label{fig_133}
\end{figure}

Figure \ref{fig_14} shows the Lax and stability spectra for point $C$ close to the boundary $b = 0$ for $\omega \in (1,\infty)$, for which we take $\alpha_1=1.1$ and $\beta_1=0.1$. The complex quadruplets are very close to the real axis of the $\lambda$ plane. The same figure-eight appears in the stability spectrum in the $\Lambda$ plane. Figure \ref{fig_15} shows the Lax and stability spectra for point $D$ which is also close to the boundary $b = 0$ but for $\omega \in (-\infty,-1)$, for which we take $\alpha_1=0.2$ and $\beta_1=1.2$. The complex quadruplets are now close to the imaginery axis of the $\lambda$ plane. We have checked that the spectral bands of the Lax spectrum outside $\mathbb{S}^1$ are transformed to one half of the figure-eight in the stability spectrum located for ${\rm Re}(\Lambda) {\rm Im}(\Lambda) < 0$, whereas the spectral bands of the Lax spectrum inside  $\mathbb{S}^1$ are transformed to the other half of the figure-eight in the stability spectrum located for ${\rm Re}(\Lambda) {\rm Im}(\Lambda)> 0$.

\begin{figure}[htb!]
	\centering
	\subfigure[Lax spectrum in $\lambda$-plane.]
	{\includegraphics[width=2.1in,height=1.9in]{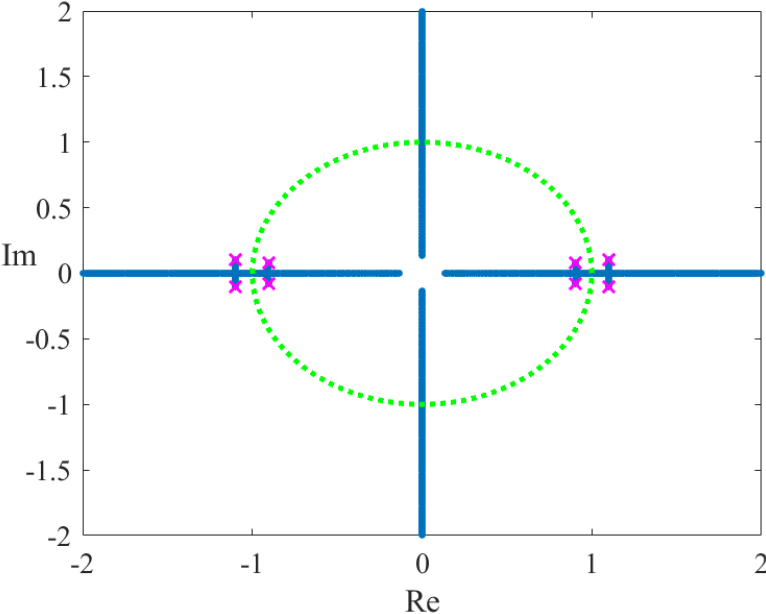}}
	\subfigure[Stability spectrum in $\Lambda$-plane.]
	{\includegraphics[width=2.1in,height=1.9in]{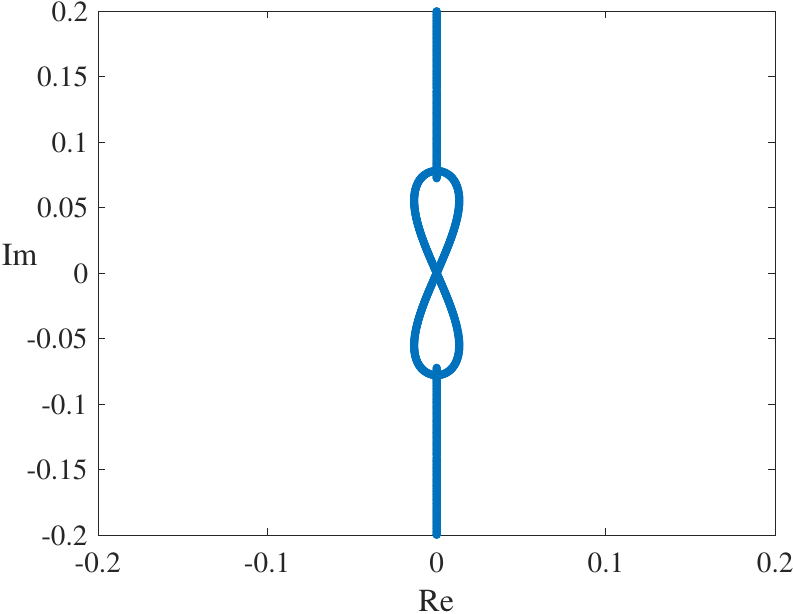}}
	\caption{The same as Figure \ref{fig_13} but for $\alpha_1=1.1$ and $\beta_1=0.1$ at point C. }\label{fig_14}
\end{figure}

\begin{figure}[htb!]
	\centering
	\subfigure[Lax spectrum in $\lambda$-plane.]
	{\includegraphics[width=2.1in,height=1.9in]{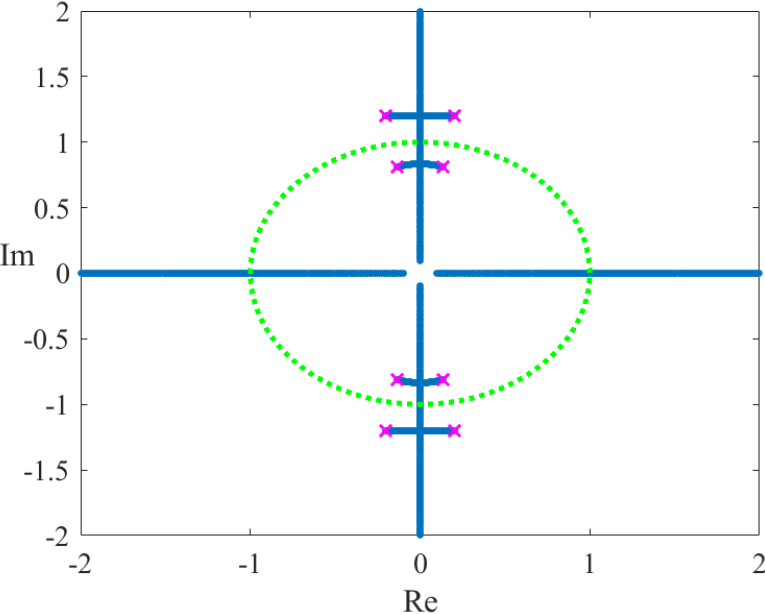}}
	\subfigure[Stability spectrum in $\Lambda$-plane.]
	{\includegraphics[width=2.1in,height=1.9in]{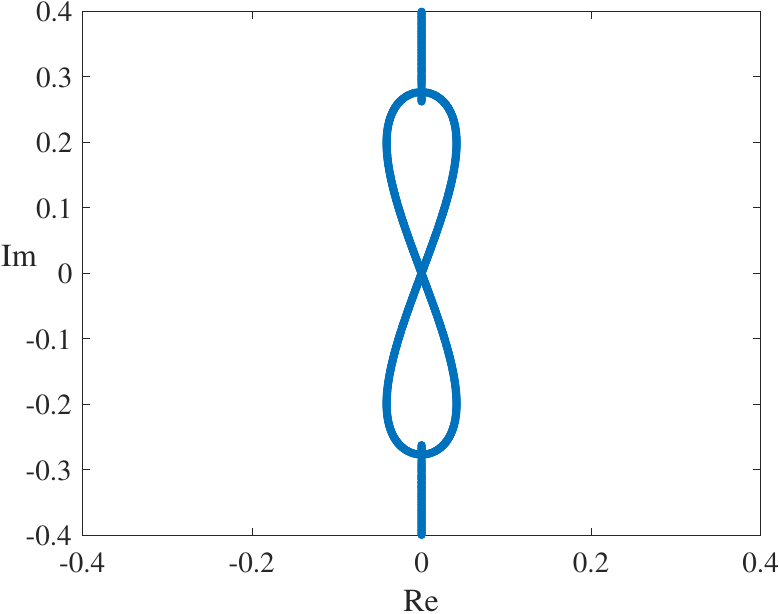}}
	\caption{The same as Figure \ref{fig_13} but for  $\alpha_1=0.2$ and $\beta_1=1.2$ at point D. }\label{fig_15}
\end{figure}

In the symmetric case $\omega = 0$, the roots of $P(\lambda)$ are located at the diagonals in the $\lambda$ plane. Figure \ref{fig_16} shows the Lax and stability spectra at point E for which  $\alpha_1=\beta_1=1.4$. We can see that the spectral bands of the Lax spectrum are located at the diagonals in the $\lambda$ plane. Consequently, the stability spectrum is located on $i \R$ and the figure-eight shrinks to the vertical line. We note that the derivative NLS equation considered in \cite{CPU_1} does not have such families of standing periodic wave solutions. 

Figure \ref{fig_17} shows the same at point F for which  $\alpha_1=\beta_1=0.8$. Since the point is close to the boundary $b = 0$ for $\omega \in (-1,1)$, the complex quadruplets are  close to the unit circle $\mathbb{S}^1$ in the $\lambda$ plane, whereas the stability spectrum on $i \R$ admits some bandgaps.

\begin{figure}[htb!]
	\centering
	\subfigure[Lax spectrum in $\lambda$-plane.] {\includegraphics[width=2.1in,height=1.9in]{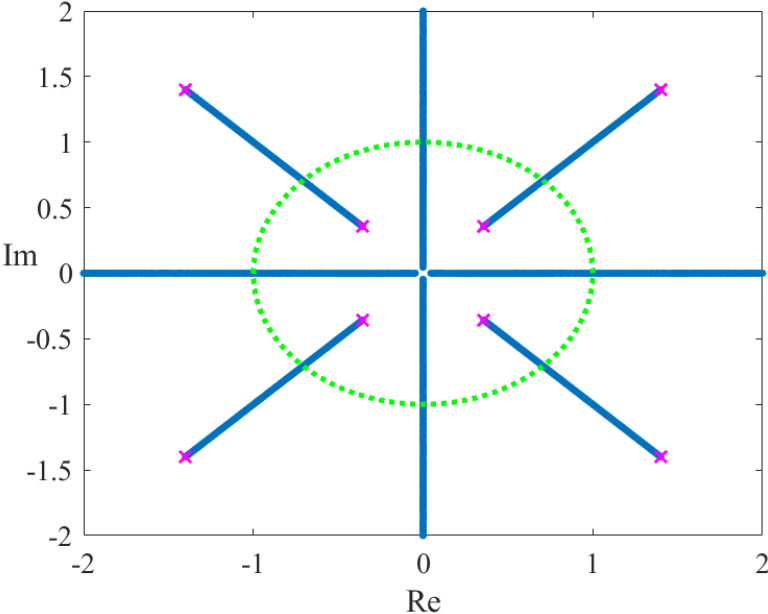}}
	\subfigure[Stability spectrum in $\Lambda$-plane.]
	{\includegraphics[width=2.1in,height=1.9in]{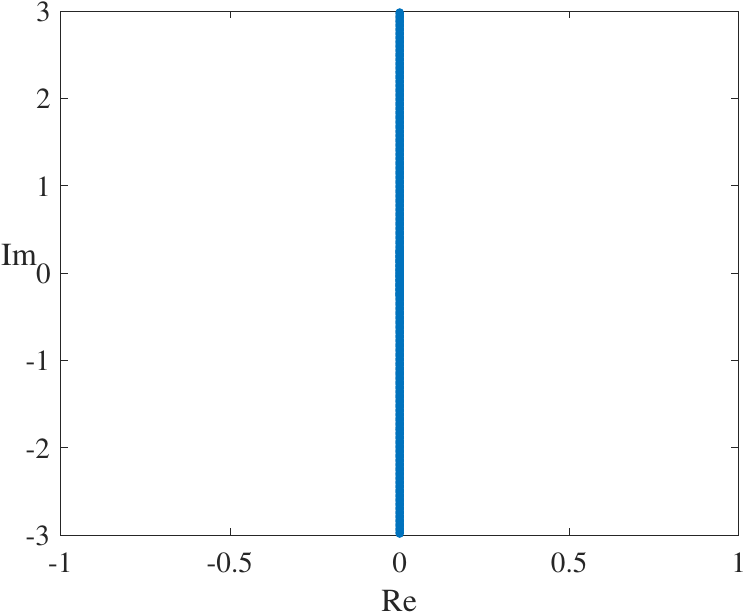}}
	\caption{The same as Figure \ref{fig_13} but for $\alpha_1=\beta_1=1.4$ at point E. }\label{fig_16}
\end{figure}

\begin{figure}[htb!]
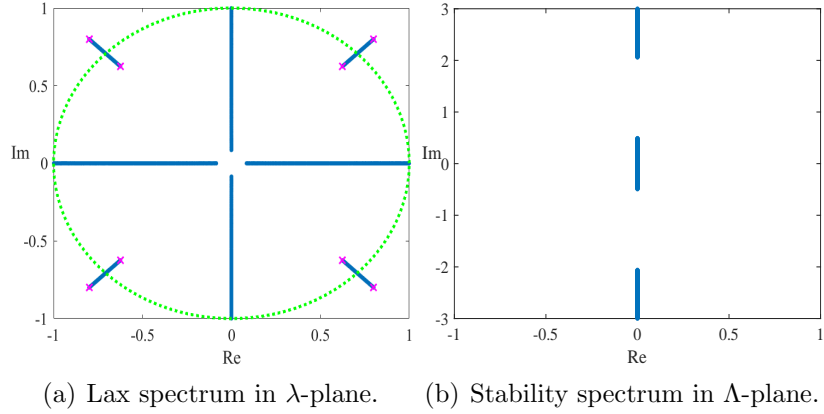

	\centering
	\subfigure[Lax spectrum in $\lambda$-plane.] {\includegraphics[width=2.1in,height=1.9in]{pic_area_1_6_1}}
	\subfigure[Stability spectrum in $\Lambda$-plane.]
	{\includegraphics[width=2.1in,height=1.9in]{pic_area_1_6_2}}
	\caption{The same as Figure \ref{fig_13} but for   $\alpha_1=\beta_1=0.8$ at point F. }\label{fig_17}
\end{figure}

We summarize that every standing periodic wave in region $\rm \Rmnum{1}$ with $\omega \neq 0$ is spectrally unstable due to the figure-eight instability band. On the other hand, every standing periodic wave in region $\rm \Rmnum{1}$ with $\omega =0$ is spectrally stable.

\subsection{Standing periodic waves in region $\rm \Rmnum{2}_A$}

By Proposition \ref{prop-roots}, roots of $P(\lambda)$ form two complex quadruplets located on the unit circle $\mathbb{S}^1$.  Hence we write $\lambda_1=\alpha_1+i\beta_1$ and 
$\lambda_2= \alpha_2 + i\beta_2$ with $\alpha_1^2 + \beta_1^2 = \alpha_2^2 + \beta_2^2 = 1$ so that equations (\ref{521}) yield
\begin{equation*}
\omega =\frac{1}{2} (\alpha_1^2+\alpha_2^2-\beta_1^2-\beta_2^2)
\end{equation*}
and
\begin{equation*}
b = -(\alpha_1^2-\alpha_2^2) (\beta_1^2-\beta_2^2) = (\beta_1^2-\beta_2^2)^2.
\end{equation*}
By using (\ref{u1-u4}), we obtain 
\begin{equation*}
\begin{cases}
u_1=(\beta_1+\beta_2)^2,\\
u_2=(\beta_1-\beta_2)^2,\\
u_3=-(\alpha_1-\alpha_2)^2,\\
u_4=-(\alpha_1+\alpha_2)^2,
\end{cases}
\end{equation*}
which satisfy the same ordering $u_4\leq u_3\leq 0\leq u_2\leq u_1$ as in region \Rmnum{1}. The exact  periodic solution of the first-order invariant (\ref{exact_1}) with this ordering is still written in the same explicit form (\ref{solution_1}). It follows from the phase portraits on Figure \ref{fig_w_16} for $\omega \in (-1,1)$ that the mapping $x \mapsto \theta(x)$ is periodic for the periodic orbits inside the heteroclinic orbits. The values of $\theta(x)$ can be computed from the same formula (\ref{theta}).

\begin{remark}
As $b \to \min\{(1-\omega)^2, (1+\omega)^2\}$, the point inside the existence region $\rm \Rmnum{2}_A$ approaches the boundaries given by the red line for $\omega \in (0,1)$ and by the green line for $\omega \in (-1,0)$ in Figure \ref{fig00}. 
At each boundary, bifurcations of the roots of $P(\lambda)$ occur as follows.
\begin{itemize}
\item If $\omega \in (0,1)$ and $b \to (1-\omega)^2$, then $\beta_1 \rightarrow 0$ which implies that $\lambda_1 \rightarrow 1$. Hence, in this limit one quadruplet is still located on $\mathbb{S}^1$ but the other becomes double real eigenvalues at $\pm 1$. The solution cannot be continued inside region {\rm \Rmnum{4}}. \\

\item If $\omega \in (-1,0)$ and $b \to (1+\omega)^2$, then 
$\alpha_2 \to 0$ which implies that $\lambda_2 \to i$. Hence, in this limit one quadruplet is still located on $\mathbb{S}^1$ but the other becomes double purely imaginary eigenvalues at $\pm i$. The solution continued inside region $\rm \Rmnum{2}_B$ has one complex quadruplet on $\mathbb{S}^1$ and two pairs of 
purely imaginary eigenvalues symmetrically reflected about $\pm i$.
\end{itemize}
\end{remark}

Figure \ref{fig_19} shows numerically computed Lax and stability spectra at point H, for which we take  $\alpha_1=0.6$ and $\alpha_2=0.5$. Besides the Lax spectrum on $\R \cup i\R$, there exists four bands on $\mathbb{S}^1$ in between the two complex quadruplets in the roots of $P(\lambda)$. The stability spectrum includes a segment on the real axis related to the four bands of the Lax spectrum on $\mathbb{S}^1$.

Figure \ref{fig_18} shows similar Lax and stability spectra at point $G$, 
for which $\alpha_1=0.9$ and $\alpha_2=0.2$. Since the point $G$ is close to the boundary $b = (1+\omega)^2$ for $\omega \in (-1,0)$, the bands of the Lax spectrum on $\mathbb{S}^1$ are wider and four roots of $P(\lambda)$ are close to 
the points $\pm i$ . The stability spectrum includes a larger segment along the real axis. 

We summarize that every standing periodic wave in region $\rm \Rmnum{2}_A$ is spectrally unstable due to the instability band on $\R$. 

\begin{figure}[htb!]
	\centering
	\subfigure[Lax spectrum in $\lambda$-plane.] {\includegraphics[width=2.1in,height=1.9in]{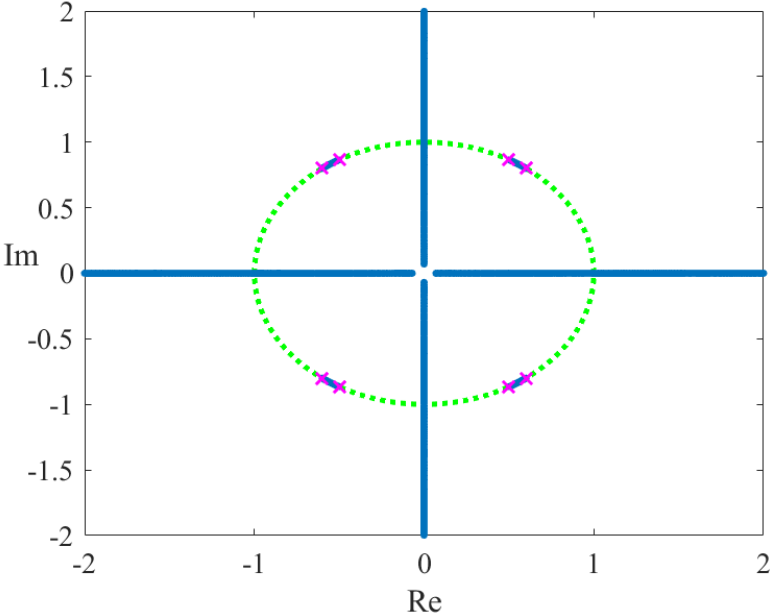}}
	\subfigure[Stability spectrum in $\Lambda$-plane.]
	{\includegraphics[width=2.1in,height=1.9in]{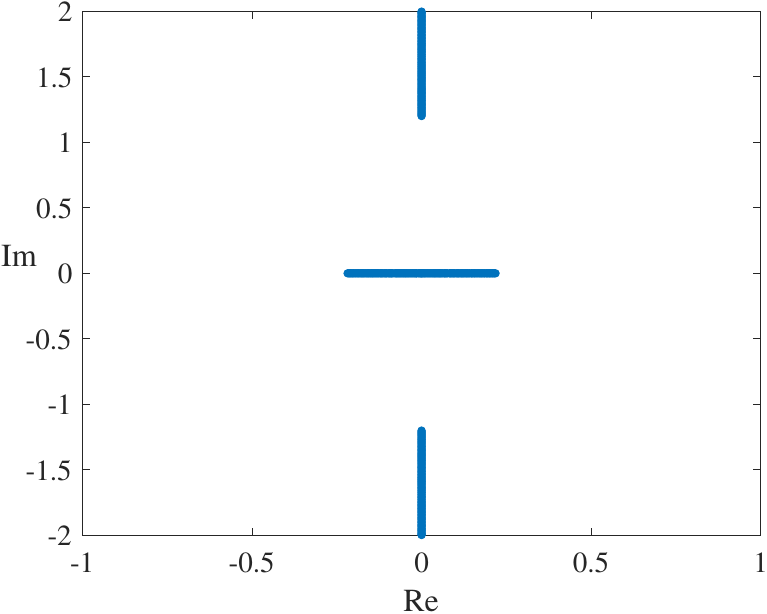}}
	\caption{The same as Figure \ref{fig_13} but for  $\alpha_1=0.6$ and $\alpha_2=0.5$ at point H. }\label{fig_19}
\end{figure}

\begin{figure}[htb!]
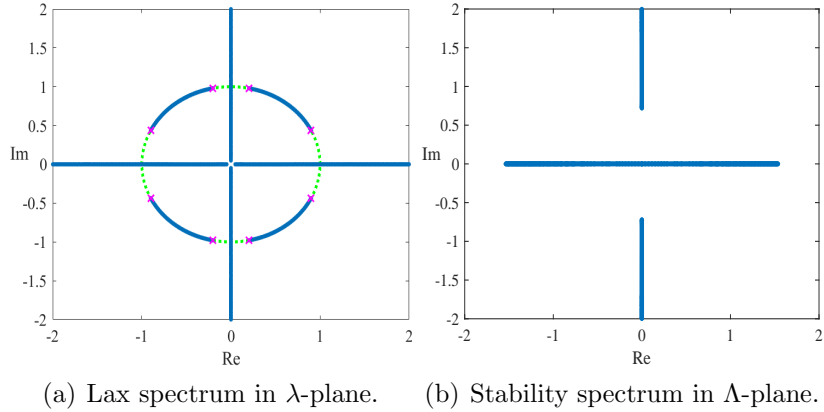

	\centering
	\subfigure[Lax spectrum in $\lambda$-plane.] {\includegraphics[width=2.1in,height=1.9in]{pic_area_2a_1_1}}
	\subfigure[Stability spectrum in $\Lambda$-plane.]
	{\includegraphics[width=2.1in,height=1.9in]{pic_area_2a_1_2}}
	\caption{The same as Figure \ref{fig_13} but for  $\alpha_1=0.9$ and $\alpha_2=0.2$ at point G. }\label{fig_18}
\end{figure}

\subsection{Standing periodic waves in region $\rm \Rmnum{2}_B$}

By Proposition \ref{prop-roots}, roots of $P(\lambda)$ form a complex quadruplet located on the unit circle $\mathbb{S}^1$ and two pairs of purely imaginary eigenvalues symmetrically reflected about $\pm i$.  Hence we write $\lambda_1=\alpha_1+i\beta_1$ and 
$\lambda_2 = i\beta_2$ with $\alpha_1^2 + \beta_1^2 = 1$ so that equations (\ref{521}) yield
\begin{equation*}
\omega = -\frac{1}{4}(\beta_2^2 + \beta_2^{-2} - 2\alpha_1^2+2\beta_1^2)
\end{equation*}
and
\begin{equation*}
b = \frac{1}{16} \left( \beta_2^2 + \beta_2^{-2} + 2-4 \beta_1^2\right)^2.
\end{equation*}
By using (\ref{u1-u4}), we obtain 
\begin{equation*}
\begin{cases}
u_1 = \frac{1}{4}(2\beta_1 + \beta_2 + \beta_2^{-1})^2,\\
u_2 = \frac{1}{4}(2\beta_1 - \beta_2 - \beta_2^{-1})^2,\\
u_3 = -\frac{1}{4}(2\alpha_1-i(\beta_2-\beta_2^{-1}))^2,\\
u_4 = -\frac{1}{4}(2\alpha_1+i(\beta_2-\beta_2^{-1}))^2,
\end{cases}
\end{equation*}
which satisfy the same ordering $0 \leq u_2\leq u_1$ with $u_3$ and $u_4$ being complex conjugate to each other. For definiteness, we will define $u_3 = \gamma + i \eta$ and $u_4 = \gamma - i \eta$ with 
$$
\gamma = -\alpha_1^2 + \frac{1}{4} (\beta_2 - \beta_2^{-1})^2, \quad 
\eta =  \alpha_1 (\beta_2 - \beta_2^{-1}).
$$
The exact  periodic solution of the first-order invariant (\ref{exact_1}) with this ordering of roots of $R(\xi)$ can be written in the explicit form, see \cite{CP_1},
\begin{equation}
\label{solution_2}
{\xi}(x)={u}_1+\frac{({u}_2-{u}_1)\big(1-{\rm cn}(\mu x;k)\big)}{1+\delta+(\delta-1){\rm cn}(\mu x;k)},
\end{equation}
where positive parameters $\delta$, $\mu$ and $k$ are uniquely expressed by  
\begin{align*}
\delta&=\frac{\sqrt{({u}_2-\gamma)^2+\eta^2}}{\sqrt{({u}_1-\gamma)^2+\eta^2}},\\
\mu&=\sqrt[4]{[({u}_1-\gamma)^2+\eta^2][({u}_2-\gamma)^2+\eta^2]},\\
2k^2&=1-\frac{({u}_1-\gamma)({u}_2-\gamma)+\eta^2}{\sqrt{[({u}_1-\gamma)^2+\eta^2][({u}_2-\gamma)^2+\eta^2]}}.
\end{align*}
The periodic solution ${\xi}(x)$ in (\ref{solution_2}) is located in the interval $[u_2,u_1]$ and has period $L=4K(k)\mu^{-1}$. It follows from the phase portraits on Figures \ref{fig_w_15}, \ref{fig_w_1}, and \ref{fig_w_16} for $\omega \in (-\infty,0)$ that the mapping $x \mapsto \theta(x)$ is periodic for the periodic orbits inside the heteroclinic orbits. The values of $\theta(x)$ can be computed from the same formula (\ref{theta}).

\begin{remark}
	As $b\to(1-\omega)^2$ or $b \to (1+\omega)^2$, the point inside the existence region $\rm \Rmnum{2}_B$ approaches the boundary given by the red line in the former limit and by the blue line in the latter limit in Figure \ref{fig00}. At each boundary, bifurcations of the roots of $P(\lambda)$ occur as follows.
	\begin{itemize}
		\item If $b \to (1-\omega)^2$, then $\beta_1 \rightarrow 0$ which implies that $\lambda_1 \rightarrow 1$. Hence, in this limit there still exist two pairs of purely imaginary eigenvalues symmetrically reflected about $\pm i$ but the complex quadruplet becomes double real eigenvalues at $\pm 1$. The solution cannot be continued inside region {\rm \Rmnum{4}}. \\
		
		\item If $b \to (1+\omega)^2$, then 
		$\alpha_1 \to 0$ which implies that $\lambda_1 \to i$. Hence, in this limit there still exist two pairs of purely imaginary eigenvalues symmetrically reflected about $\pm i$ but the complex quadruplet becomes double purely imaginary eigenvalues at $\pm i$. The solution continued inside region $\rm \Rmnum{3}$ has four pairs of purely imaginary eigenvalues symmetrically reflected about $\pm i$.
	\end{itemize}
\end{remark}

%

Figure \ref{fig_21} shows numerically computed Lax and stability spectra at point I, for which we take $\alpha_1=0.95$ and $\beta_2=1.5$. The Lax spectrum includes $\R$, bands on $i\R$ between four roots of $P(\lambda)$ and $\lambda = 0$, and two bands on $\mathbb{S}^1$ between the complex quadruplet of roots of  $P(\lambda)$. The stability spectrum includes a segment on the real axis related to the bands of the Lax spectrum on $\mathbb{S}^1$. 


We summarize that every standing periodic wave in region $\rm \Rmnum{2}_B$ is spectrally unstable due to the instability band on $\R$. 


\begin{figure}[htb!]
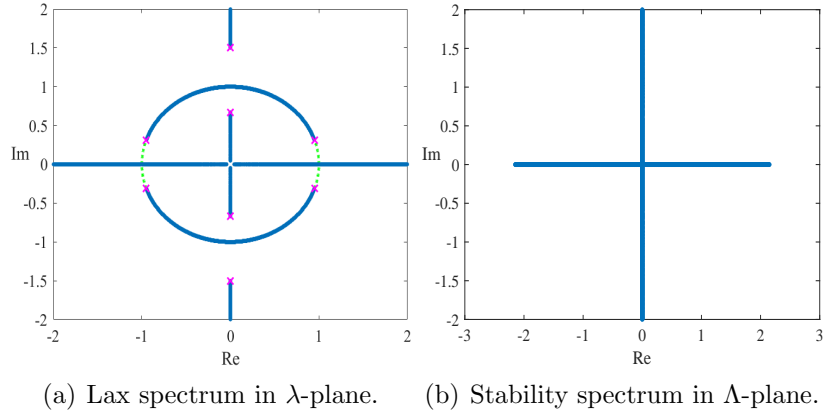

	\centering
	\subfigure[Lax spectrum in $\lambda$-plane.] {\includegraphics[width=2.1in,height=1.9in]{pic_area_2b_2_1}}
	\subfigure[Stability spectrum in $\Lambda$-plane.]
	{\includegraphics[width=2.1in,height=1.9in]{pic_area_2b_2_2}}
	\caption{The same as Figure \ref{fig_13} but for $\alpha_1=0.95$ and $\beta_2=1.5$ at point I. }\label{fig_21}
\end{figure}

\subsection{Standing periodic waves in region \Rmnum{3}}

By Proposition \ref{prop-roots}, roots of $P(\lambda)$ form four pairs of purely imaginary eigenvalues symmetrically reflected about $\pm i$.  Hence we write $\lambda_1 = i\beta_1$ and 
$\lambda_2 = i\beta_2$ with $\beta_2 > \beta_1 > 1$ so that equations (\ref{521}) yield
\begin{equation*}
\omega = -\frac{1}{4} (\beta_1^2 + \beta_1^{-2} + \beta_2^2 + \beta_2^{-2})
\end{equation*}
and
\begin{equation*}
b = \frac{1}{16}(\beta_1^2 + \beta_1^{-2} - \beta_2^2 - \beta_2^{-2})^2.
\end{equation*}
By using (\ref{u1-u4}), we obtain 
\begin{equation*}
\begin{cases}
u_1=\frac{1}{4}(\beta_1- \beta_1^{-1} +\beta_2-\beta_2^{-1})^2,\\
u_2=\frac{1}{4}(\beta_1+ \beta_1^{-1} +\beta_2+\beta_2^{-1})^2,\\
u_3=\frac{1}{4}(\beta_1- \beta_1^{-1} -\beta_2+\beta_2^{-1})^2,\\
u_4=\frac{1}{4}(\beta_1+ \beta_1^{-1} -\beta_2-\beta_2^{-1})^2,
\end{cases}
\end{equation*}
which satisfy the ordering $0\leq{u}_4\leq{u}_3\leq{u}_2\leq{u}_1$. 
Two periodic solutions of the first-order invariant (\ref{exact_1}) exist for this ordering of roots of $R(\xi)$. One solution is given by (\ref{solution_1}) 
with $\xi(x)$ located in the interval $[u_2,u_1]$. Another solution is 
obtained from (\ref{solution_1}) by exchanging $u_1$ with $u_3$ and  $u_2$ with $u_4$, 
\begin{equation}\label{solution_3}
\xi(x)=u_2+\frac{(u_2-u_3)(u_2-u_4)}{(u_2-u_4)+(u_3-u_4){\rm sn}^2(\nu x;k)},
\end{equation}
where $\nu$ and $k$ are exactly the same. The periodic solution $\xi(x)$ in (\ref{solution_3}) is located in the interval $[u_4,u_3]$ and has the same period $L=2K(k)\nu^{-1}$ as (\ref{solution_1}).

It follows from the phase portraits on the left panel of Figure \ref{fig_w_15} that the mapping $x \mapsto \theta(x)$ is monotonically increasing in each period for each of the periodic solutions. The values of $\theta(x)$ can be computed from the same formula (\ref{theta}).

\begin{figure}[htb!]
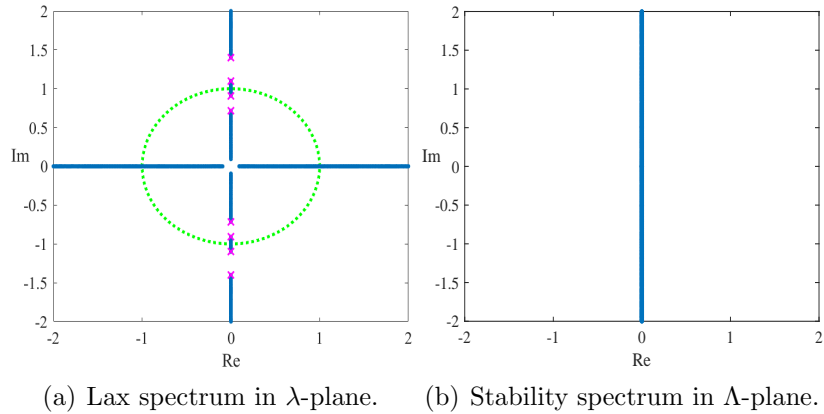

	\centering
	\subfigure[Lax spectrum in $\lambda$-plane.] {\includegraphics[width=2.1in,height=1.9in]{pic_area_3_1_1}}
	\subfigure[Stability spectrum in $\Lambda$-plane.]
	{\includegraphics[width=2.1in,height=1.9in]{pic_area_3_1_2}}
	\caption{The same as Figure \ref{fig_13} but for $\beta_1=1.1$ and $\beta_2=1.4$ at point J. }\label{fig_11}
\end{figure}

Figure \ref{fig_11} shows numerically computed Lax and stability spectra at point J, for which we take $\beta_1=1.1$ and $\beta_2=1.4$. The Lax spectrum includes $\R$ and bands on $i\R$ between eight roots of $P(\lambda)$. The stability spectrum is located on $i \R$. The Lax and stability spectra remain very similar for every point in region $\rm \Rmnum{3}$.

We summarize that every standing periodic wave in region $\rm \Rmnum{3}$ is spectrally stable.

\section{Conclusion}
\label{conclusion}

We have clarified the spectral stability of the standing periodic waves for the MTM system (\ref{ini_1}). The standing waves are written in the form (\ref{standing-wave}) with the wave frequency $\omega \in \R$ and they can be extended as the traveling waves with the wave speed $c \in (-1,1)$ under the Lorentz transformation (\ref{MTM-Lorentz}). The existence of standing 
periodic waves with the reduction $V = \bar{U}$ was obtained in regions \Rmnum{1}, \Rmnum{2}, and \Rmnum{3} of the parameter plane $(b,\omega)$ shown in Figure \ref{fig00}, where $b$ is the constant value of the Hamiltonian function for the spatial Hamiltonian system. 

The Lax spectrum was obtained numerically as the Floquet spectrum of the spectral problem in the Lax pair (\ref{lax_1}). The spectral bands which determine stability or instability of the standing periodic waves 
are located between eight roots of the characteristic function $P(\lambda)$. 
Depending on the point in each region of the parameter plane $(b,\omega)$, 
roots of $P(\lambda)$ appear either as two quadruplets of complex eigenvalues outside the unit circle $\mathbb{S}^1$, or as two quadruplets of complex eigenvalues on $\mathbb{S}^1$, or as one quadruplet on $\mathbb{S}^1$ and two pairs of purely imaginary eigenvalues, or as four pairs of purely imaginary eigenvalues. 

We have found numerically that the standing periodic waves are spectrally stable 
either if the two quadruplets are located at the diagonals of the complex $\lambda$ plane or if all roots of $P(\lambda)$ are purely imaginary. In other cases, 
the standing periodic waves are spectrally unstable either because of the figure-eight instability band or because of the instability band on the real axis. The stability conclusion 
relies on the relation $\Lambda = \pm i \sqrt{P(\lambda)}$ between 
Lax and stability spectra. 

In the particular limit, periodic solutions degenerate to the constant-amplitude solutions and we have found that the constant-amplitude solutions for the saddle points of the spatial Hamiltonian system are spectrally stable. Related to this fact, the solitary wave solutions associated with the heteroclinic orbits of the spatial Hamiltonian system to the saddle points are found to be spectrally stable. However, the constant-amplitude solutions for the center points of the spatial Hamiltonian system are spectrally unstable. 

As further problems, one can extend the analytical and numerical studies to the standing waves in a more general setting with $V \neq \bar{U}$, as well as to try to prove some of the stability and instability conclusions analytically. These problems will wait for further work.

\appendix
\section{Numerical method for calculating the Lax spectrum}
\label{app_1}

Let us consider the first equation of the Lax pair (\ref{eig_1}) rewritten as the spectral problem
\begin{equation}\label{lax_3}
	\begin{aligned}
				\psi_{x}=L(U,V,\lambda)\psi,
	\end{aligned}
\end{equation}
where
$$L=\frac{i}{4}(\lambda^2-\frac{1}{\lambda^2})\sigma_3
-\frac{i\lambda}{2}
\left(\begin{array}{cc}
	0       & \bar{V}  \\
	V & 0 \\
\end{array}\right)
+\frac{i}{2\lambda}\left(
\begin{array}{cc}
	0       &\bar{U}  \\
	U & 0 \\
\end{array}
\right)
+\frac{i}{4}(|U|^2-|V|^2)\sigma_3.
$$
The spectral problem (\ref{lax_3}) is transformed into a linear eigenvalue problem with the help of the following auxillary variables:
	\begin{equation}\label{lax_x_1}
		\psi_0=\lambda \psi, \quad \psi_1=\frac{1}{\lambda}\psi  
		\quad {\rm and} \quad \psi_2=\frac{1}{\lambda}\psi_1.
	\end{equation}
For the extended choice of variables, the spectral problem (\ref{lax_3}) is written as the following linear eigenvalue problem 
	\begin{equation}\label{lax_x_4}
		\left(
		\begin{array}{cccccccc}
			0 &2\bar{V}   & f-4i\partial_x  &0  & 0 &-2\bar{U}  & 1 & 0\\
			-2V &0          &0&f+4i\partial_x   &2U  &  & 0&1\\
			1 & 0 & 0 & 0  & 0 & 0 & 0& 0\\
			0 & 1 & 0 & 0  & 0 & 0 & 0& 0\\
			0 & 0 & 1 & 0  & 0 & 0 & 0& 0\\
			0 & 0 & 0 & 1  & 0 & 0 & 0& 0\\
			0 & 0 & 0 & 0  & 1 & 0 & 0& 0\\
			0 & 0 & 0 & 0  & 0 & 1 & 0& 0\\
		\end{array}
		\right)\left(
		\begin{array}{c}
			\psi_0 \\
			\psi \\
			\psi_1 \\
			\psi_2 \\
		\end{array}
		\right)=\lambda\left(
		\begin{array}{c}
			\psi_0 \\
			\psi \\
			\psi_1 \\
			\psi_2 \\
		\end{array}
		\right),
	\end{equation}
where $f=|V|^2-|U|^2$. Indeed, multiplying both sides of (\ref{lax_3}) by $4\sigma_3$ and using (\ref{lax_x_1}), we have
\begin{align*}
		4\sigma_3\partial_x\psi &= i(\lambda^2-\frac{1}{\lambda^2})\psi+{2i\lambda}
		\left(\begin{array}{cc}
			0       & -\bar{V}  \\
			V & 0 \\
		\end{array}\right)\psi
	+\frac{2i}{\lambda}\left(
	\begin{array}{cc}
	0       &\bar{U}  \\
	-U & 0 \\
\end{array}
\right)\psi-if\psi\\
&= i\lambda\psi_0-i\psi_2+{2i}
\left(\begin{array}{cc}
	0       & -\bar{V}  \\
	V & 0 \\
\end{array}\right)\psi_0
+{2i}\left(
\begin{array}{cc}
	0       &\bar{U}  \\
	-U & 0 \\
\end{array}
\right)\psi_1-if\psi,
\end{align*}
which yields
\begin{equation}\label{lax_x_2}
		\psi_2-4i\sigma_3\partial_x\psi+{2}
		\left(\begin{array}{cc}
			0       & \bar{V}  \\
			-V & 0 \\
		\end{array}\right)\psi_0+2\left(
	\begin{array}{cc}
	0       &-\bar{U}  \\
	U & 0 \\
\end{array}
\right)\psi_1+f\psi
		=\lambda\psi_0.
\end{equation}
Combining (\ref{lax_x_1}) and  (\ref{lax_x_2}) 
yields (\ref{lax_x_4}).

\begin{remark}
	Assume that $U$ and $V$ are periodic in $x$ with the same period $L$ 
	and 
	$$
	f=|V|^2-|U|^2=0.
	$$ 
	According to Floquet's Theorem, the bounded solutions of the linear equation (\ref{lax_x_4}) can be represented in the following form 
$$
\left(
\begin{array}{c}
\psi_0(x) \\
\psi(x) \\
\psi_1(x) \\
\psi_2(x) \\
\end{array}
\right) = 
\left(
\begin{array}{c}
\hat{\psi}_0(x) \\
\hat{\psi}(x) \\
\hat{\psi}_1(x) \\
\hat{\psi}_2(x) \\
\end{array}
\right) e^{i\mu x}, 
$$
where $\hat{\psi}(x)=\hat{\psi}(x+L)$ and $\mu\in\left[-\frac{\pi}{L},\frac{\pi}{L}\right]$.
The linear eigenvalue problem is now rewritten in the following form:
	\begin{equation}\label{lax_x_5}
	\left(
	\begin{array}{cccccccc}
		0 &2\bar{V}   & -4i(\partial_x+i\mu)  &0  & 0 &-2\bar{U}  & 1 & 0\\
		-2V &0          &0&4i(\partial_x+i\mu)   &2U  &  & 0&1\\
		1 & 0 & 0 & 0  & 0 & 0 & 0& 0\\
		0 & 1 & 0 & 0  & 0 & 0 & 0& 0\\
		0 & 0 & 1 & 0  & 0 & 0 & 0& 0\\
		0 & 0 & 0 & 1  & 0 & 0 & 0& 0\\
		0 & 0 & 0 & 0  & 1 & 0 & 0& 0\\
		0 & 0 & 0 & 0  & 0 & 1 & 0& 0\\
	\end{array}
	\right)\left(
	\begin{array}{c}
		\hat{\psi}_0 \\
		\hat{\psi}\\
		\hat{\psi}_1\\
		\hat{\psi}_2\\
	\end{array}
	\right)=\lambda\left(
	\begin{array}{c}
\hat{\psi}_0 \\
\hat{\psi}\\
\hat{\psi}_1\\
\hat{\psi}_2\\
	\end{array}
	\right).
\end{equation}
The Fourier collocation method, see \cite[Chapter 2,p.45]{Yjk}, can be used to solve the linear eigenvalue problem  (\ref{lax_x_5}) with the Floquet parameter $\mu$. Tracing the set of eigenvalues $\lambda$ for $\mu\in\left[-\frac{\pi}{L},\frac{\pi}{L}\right]$ gives the band of the Floquet spectrum in the $\lambda$ plane. 
\end{remark}

\begin{remark}
The Fourier collocation method from \cite{Yjk} is similar to Hill's method used in \cite{CD_1,DK_1}. Both of them use Fourier series to approximate $U$, $V$ and the eigenfunctions. 
\end{remark}

\begin{remark}
For given functions $U$ and $V$ defined on $\R$, the Chebyshev collocation method \cite{CW_1} can be used to solve the linear eigenvalue problem  (\ref{lax_x_4}). However, this method seems to be not applicable to periodic domain.	On the other hand, the finite difference method can also be used to calculate the linear eigenvalue problem (\ref{lax_x_4}), see \cite{CPW_1} for a similar study. However, the accuracy of the finite difference method is poor.
\end{remark}

\section*{Declaration of competing interest}
The authors declare that they have no competing financial interests or personal relationships that could have influenced the work reported in this paper.

\section*{Data availability}
The data for this study are available from the authors upon  request.

 \end{document}